\newif\ifdraft
\newif\iffullversion
\newif\iftsc
\let\oldnl\nl
\newcommand{\nonl}{\renewcommand{\nl}{\let\nl\oldnl}}
\newtheorem{thm}{Theorem}
\newtheorem{lemm}[thm]{Lemma}
\newtheorem{defi}[thm]{Definition}
\newtheorem{nrclaim}[thm]{Claim}
\newenvironment{subalgorithm}[1][htbp]
{
	\begin{algorithm}[#1]%
	}{\end{algorithm}}
\renewenvironment{proof}[1][]{\medskip\noindent
	\textit{Proof\ifx\proof#1\proof\else\ #1\fi. }}{\qed\medskip}
\newcommand\tocsectionstar[1]{%
	\section*{#1}%
	\addcontentsline{toc}{section}{#1}%
}
\newcommand{\parag}[1]{\medskip\noindent\textbf{#1 }}
\newcommand{\braket}[2]{\langle #1 | #2 \rangle }
\newcommand{\proj}[1]{\ket{#1}\bra{#1} }
\DeclareMathOperator{\Hil}{\mathcal{H}}
\definecolor{darkgreen}{rgb}{0.01, 0.75, 0.24}
\definecolor{byzantium}{rgb}{0.8, 0.0, 0.8}
\definecolor{brown}{rgb}{0.5, 0.27, 0.11}
\definecolor{banana}{rgb}{1.0, 0.88, 0.21}
\definecolor{dgreen}{rgb}{.1,.5,.1}
\newcommand{\GAME}[1]{\textnormal{\textbf{Game #1}}}
\newcommand{\strings}{\lbrace 0,1\rbrace}
\newcommand{\PR}{\mathbb{P}}
\newcommand{\PRover}[1]{\underset{#1}{\PR}}
\newcommand{\pr}[1]{\mathbb{P}\left[\vphantom{a^b}#1\right]}
\newcommand{\sampuni}{\overset{\$}{\gets}}
\DeclareMathOperator{\eps}{\varepsilon}
\newcommand{\norm}[1]{\left\lVert #1 \right\rVert}
\newcommand{\cboxed}[2]{{\color{#1}\boxed{#2}}}
\newcommand\xleftrightarrow[2][]{%
	\ext@arrow 9999{\longleftrightarrowfill@}{#1}{#2}}
\newcommand\longleftrightarrowfill@{%
	\arrowfill@\leftarrow\relbar\rightarrow}
\DeclareMathOperator{\pub}{\textnormal{pub}}
\DeclareMathOperator{\priv}{\textnormal{priv}}
\DeclareMathOperator{\sysR}{\mathsf{R}}
\DeclareMathOperator{\sysT}{\mathsf{T}}
\newcommand{\advB}{\mathsf{B}}
\DeclareMathOperator{\Env}{\mathsf{Env}}
\newcommand{\bit}{\{0,1\}}
\title{
	Quantum Lazy Sampling and Game-Playing Proofs for Quantum Indifferentiability 
}
\author[1]{Jan Czajkowski\thanks{j.czajkowski@uva.nl}}
\author[2]{Christian Majenz\thanks{christian.majenz@gmail.com}}
\author[1]{Christian Schaffner\thanks{c.schaffner@uva.nl}}
\author[2]{Sebastian Zur\thanks{sebastian.zur@cwi.nl}}
\affil[1]{QuSoft, University of Amsterdam}
\affil[2]{QuSoft, CWI}
\author{}
\institute{}
\begin{document}
\maketitle

\ifdraft
\begin{center}
	{\color{red} {\Huge DRAFT} }
\end{center}
\fi

\iftsc
\keywords{post-quantum security \and QROM \and quantum indifferentiability \and sponge construction \and compressed oracles}
\fi

\begin{abstract}
Game-playing proofs constitute a powerful framework for non-quantum cryptographic security arguments, most notably applied in the context of indifferentiability. An essential ingredient in such proofs is lazy sampling of random primitives.
We develop a quantum game-playing proof framework by generalizing two recently developed proof techniques. First, we describe how Zhandry's compressed quantum oracles~(Crypto'19) can be used to do quantum lazy sampling of a class of non-uniform function distributions. Second, we observe how Unruh's one-way-to-hiding lemma~(Eurocrypt'14) can also be applied to compressed oracles, providing a quantum counterpart to the fundamental lemma of game-playing. 
Subsequently, we use our game-playing framework to prove quantum indifferentiability of the sponge construction, assuming a random internal function. 
\end{abstract}

\iffullversion
\newpage
\tableofcontents
\newpage
\fi

\section{Introduction}
The modern approach to cryptography relies on mathematical rigor: Trust in a given cryptosystem is mainly established by proving that, given a set of assumptions, it fulfills a security definition formalizing real-world security needs. Apart from the definition of security, the mentioned assumptions include the threat model, specifying the type of adversaries we want to be protected against. One way of formalizing the above notions is via \emph{games}, i.e.\ programs interacting with the adversaries and outputting a result signifying whether there has been a breach of security or not. Adversaries in this picture are also modeled as programs, or more formally Turing machines.

The framework of game-playing proofs introduced by Bellare and Rogaway in \cite{bellare2004code}---modeling security arguments as games, played by the adversaries---is especially useful because it makes proofs easier to verify. Probabilistic considerations might become quite involved when talking about complex systems and their interactions; the structure imposed by games, however, simplifies them. In the game-playing framework, randomness can be, for example, considered to be sampled on the fly, making conditional events easier to analyze. A great example of that technique is given in the proof of the PRP/PRF switching lemma in \cite{bellare2004code}.

In this work we focus on idealized security notions; In the Random Oracle Model (ROM) one assumes that the publicly accessible hash functions are in fact random \cite{bellare1993random}. This is a very useful assumption as it simplifies proofs, but also cryptographic constructions designed with the ROM in mind are more efficient.

We are interested in the post-quantum threat model, which is motivated by the present worldwide efforts to build a quantum computer. It has been shown that quantum computers can efficiently solve problems that are considered hard for classical machines. Hardness of the factoring and discrete-logarithm problems is, e.g., important for public-key cryptography, but these problems can be solved efficiently on a quantum computer using Shor's algorithm \cite{shor1994algorithms}. The obvious formalization of the threat model is to include adversaries operating a fault-tolerant quantum computer, which is in particular capable of running the mentioned attacks. This model is the basis of the field of post-quantum cryptography \cite{bernstein2009post}.

While the attacks based on Shor's algorithm are the most well-known ones, public-key cryptography may not be the  only area with quantum vulnerabilities. Many cryptographic hash functions are based on publicly available compression functions \cite{merkle1989certified, damgaard1989design, bertoni2007sponge} and as such they could be run on a quantum machine. This fact motivates us to analyze adversaries that have quantum access to the public building blocks of the cryptosystem. Therefore, the quantum threat model takes us from the Random-Oracle Model \cite{bellare1993random}---often used in the context of hash functions---to the Quantum Random-Oracle Model~\cite{boneh2011random} (QROM), where the random oracle can be accessed in superposition.

Having highlighted a desirable proof structure---fitting the clear and easy-to-verify game-playing framework---and the need of including fully quantum adversaries with quantum access to random oracles into the threat model, we encounter an obvious challenge: defining a \emph{quantum} game-playing framework. In this article, we resolve that challenge and apply the resulting framework to the setting of hash functions. In the following paragraphs we describe our results and the main proof techniques we used to achieve them.

\medskip
\noindent\textbf{Our Results.}
We devise a quantum game-playing framework for security proofs that involve fully quantum adversaries. Our framework is based on a combination of two recently developed proof techniques: compressed quantum random oracles by Zhandry \cite{zhandry2018record} and the One-Way to Hiding (O2H) lemma by Unruh \cite{unruh2015revocable, ambainis2018quantum}. The former provides a way to lazy-sample a quantum-accessible random oracle, and the latter is a quantum counterpart of the Fundamental Game-Playing lemma---a key ingredient in the original game-playing framework. As our first main result we obtain a clean and powerful tool for proofs in post-quantum cryptography. The main advantage of the framework is the fact that it allows the translation of certain classical security proofs to the quantum setting, in a way that is  arguably more straight-forward than for previously available proof techniques.

On the technical side, we begin by re-formalizing Zhandry's compressed-oracle technique, which, as a by-product, makes a generalization to some non-uniform distributions of oracles relatively straightforward. In particular, we generalize the compressed-oracle technique of \cite{zhandry2018record} to a class of non-uniform distributions over functions, allowing a more general form of (quantum) lazy sampling. Our result allows to treat distributions with outputs that are independent for distinct inputs. Subsequently, we observe that the techniques of ``puncturing oracles'' proposed in~\cite{ambainis2018quantum} can also be applied to compressed oracles, yielding a more general version of the O2H lemma which forms the quantum counterpart of the fundamental game-playing lemma.

There are already some examples in the literature where generalized compressed oracles for non-uniform distributions have been used, e.g.~\cite{Alagic2018} (superposition oracle without compression that outputs 1 with probability $\epsilon$, we define the sampling procedure for such distribution in Appendix~\ref{sec:cfo-details}) and \cite{hamoudi2020quantum} (a generalization similar to ours but presented after our paper was posted online). We believe that the generalized formalism developed here will continue to be useful.

Punctured oracles are quantum oracles measured after every adversarial query. An important lemma that we prove is a bound on the probability that any of these measurements succeeds. We provide two proofs, one making heavy use of the results from \cite{CFHL21}, and one that has a potential of being more general but is considerably more complicated\footnote{This second proof is presented in Appendix~\ref{sec:full-proof}.}.
The bound on the probability of any of the measurements in a punctured oracle succeeding, together with the O2H lemma for compressed oracles provides a bound on the distinguishing advantage between a regular compressed oracle and a punctured one.
In Lemma~9 in \cite{zhandry2018record} indistinguishability of a compressed oracle and a punctured compressed oracle is also proven. The method, however, is different from ours and much fewer details are shown. A crucial difference though is that there are two nontrivial technical claims left implicit. According to \cite{Zhandry-emails}, however, there is a proof that maintains the claimed bound. As that proof is not publicly available at this point, we state and prove our indistinguishability bound for punctured oracles with almost the same bound. As far as we can tell, our bounds seem tight.

We go on to apply our quantum game-playing framework by proving quantum indifferentiability of the sponge construction~\cite{bertoni2007sponge} used in SHA3. More precisely, we show that the sponge construction is indifferentiable from a random oracle in case the internal function is a random function. We leave it as an interesting open question to extend our results to the setting of SHA3 which uses a permutation as internal function.
A reader mostly interested in the main result of this paper can go directly to section~\ref{sec:q-indifferentiability}. In the introduction of that section we give a high level explanation of the main concepts used in the proof of quantum indifferentiability.

\medskip
\parag{Related Work.} 
Indifferentiability is a security notion developed by Maurer, Renner, and Holenstein \cite{maurer2004indifferentiability} commonly used for hash-function domain-extension schemes \cite{coron2005merkle, bertoni2008indifferentiability}. Here, it captures the adversary's access to both the construction and the internal function.

The subject of quantum indifferentiability, addressed in our work, has been recently analyzed in two articles. Carstens, Ebrahimi, Tabia, and Unruh make a case in  \cite{carstens2018quantum} against the possibility of fulfilling the definition of indifferentiability for quantum adversaries. Assuming a technical conjecture, they prove a theorem stating that if two systems are perfectly (with zero advantage) quantumly indifferentiable then there is a stateless classical indifferentiability simulator. In the last part of their work they show that there cannot be a stateless simulator for domain-decreasing constructions---i.e.\ most constructions for hash functions.
Zhandry on the other hand \cite{zhandry2018record} develops a technique that allows to prove indifferentiability for the Merkle-Damg{\aa}rd construction. His result does not contradict the result of~\cite{carstens2018quantum}, as it handles the \emph{imperfect case}, albeit with a negligible error. The technique of that paper, compressed quantum oracles, is one of the two main ingredients of our framework. 
Recent work by Unruh and by Ambainis, Hamburg, and Unruh \cite{unruh2015revocable, ambainis2018quantum} form the second main ingredient of our result. They show the One-Way to Hiding (O2H) Lemma, which is the quantum counterpart of the Fundamental Game-Playing lemma---a key ingredient in the original game-playing framework. The O2H lemma provides a way to ``reprogram'' quantum accessible oracles on some set of inputs, formalized as ''punctured'' oracles in the latter paper.

The quantum security of domain-extension schemes has been the topic of several recent works. \cite{song17nmac,CHS19} study domain extension for message authentication codes and pseudorandom functions. For random inner function, \cite{zhandry2018record} has proven indifferentiability of the Merkle-Damg\aa rd construction which hence has strong security in the QROM. For hash functions in the standard model, quantum generalizations of collision resistance were defined in \cite{Unruh2016a,Alagic2018}. For one of them, collapsingness,  some domain-extension schemes including the Merkle-Damg\aa rd and sponge constructions, have been shown secure~\cite{czaj18sponge,fehr2018,Unruh2016}.

In a recent article~\cite{unruh2019logic} Unruh developed quantum Relational Hoare Logic for computer verification of proofs in (post-)quantum cryptography. There he also uses the approach of game-playing, but in general focuses on formal definitions of quantum programs and predicates. To investigate the relation between \cite{unruh2019logic} and our work in more detail one would have to express our results in the language of the new logic. We leave it as an interesting direction for the future.
The proof techniques of \cite{zhandry2018record} and \cite{ambainis2018quantum} have been recently used to show security of the 4-Round Feistel construction in \cite{hosoyamada2019tight} and of generic key-encapsulation mechanisms in \cite{jiang2019tighter} respectively. In \cite{chevalier2020security} the authors use compressed oracles for randomness in an encryption scheme using a random tweakable permutation (that is given to the algorithm externally). In \cite{CFHL21} quantum query complexity results are proven using the compressed oracles technique and provide a framework that simplifies such tasks.

\medskip
\noindent\textbf{Note.} A previous version of this paper contained an additional set of results about quantum lazy-sampling of random permutations and indifferentiability of SHA-3. Unfortunately there was a flaw in the argument and the technique for quantum lazy sampling random permutations presented there does not work as claimed. The difficulty lies in the fact that that permutations do not have independent outputs, which seems to require a completely different approach.

\medskip
\noindent\textbf{Organization.}
In Section~\ref{sec:preliminaries} we introduce the crucial classical notions we use. We provide the necessary definitions of the classical game-playing framework and indifferentiability needed in the remainder of the paper.
In Section~\ref{sec:oracles} we generalize the compressed-oracle technique of \cite{zhandry2018record} to non-uniform distributions over functions.
In Section~\ref{sec:comp-o2h} we prove a generalization of the O2H lemma of \cite{unruh2015revocable}, adapted to the use with compressed oracles for non-uniform distributions.
The quantum game-playing framework is defined via the general compressed quantum oracles that appear in security games, and we derive an upper bound on the probability of the \Find\ event for the case of puncturing a uniform oracle on collisions. 
In Section~\ref{sec:q-indifferentiability} we use these results to prove quantum indifferentiability of the sponge construction.

\section{Preliminaries}\label{sec:preliminaries}
We write $[N]:=\{0,1,\dots,N-1\}$ for the set of size $N$. We denote the Euclidean norm of a vector $\ket\psi\in\mathbb C^d$ by $\norm{\ket\psi}$. By $ x\gets\advA $ we denote sampling $ x $ from a distribution or getting the output of a randomized algorithm.
A summary of symbols used throughout the paper can be found in the \hyperref[sec:notation]{Symbol Index}.

\subsection{Classical Game-Playing Proofs}
Many proofs of security in cryptography follow the Game-Playing framework, proposed in~\cite{bellare2004code}. It is a very powerful technique as cryptographic security proofs tend to be simpler to follow and formulate in this framework. The central idea of this approach are \emph{identical-until-bad} games. Say games $\G$ and $\Ho$ are two programs that are syntactically identical except for code that follows after setting a flag $\Bad\symbolindexmark{\Bad}$ to one, then we call those games identical-until-bad. Usually in cryptographic proofs $\G$ and $\Ho$ will represent two functions that an adversary $\advA\symbolindexmark{\advA}$ will have oracle access to. In the following we denote the situation when $\advA$ interacts with $\Ho$ by $\advA^{\Ho}$. Then we can say the following about the adversary's view.
\begin{lemm}[Fundamental lemma of game-playing, Lemma~2 of \cite{bellare2004code}]\label{lem:clas-game-play}
Let $\G$ and $\Ho$ be identical-until-bad games and let $\advA$ be an adversary that outputs a bit $b$. Then
\begin{align}
\abs{\PR[b=1:b\gets \advA^{\Ho}]-\PR[b=1:b\gets \advA^{\G}]}\leq \PR[\Bad=1:\advA^{\G}].
\end{align}
\end{lemm}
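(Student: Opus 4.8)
The plan is to prove the bound by a coupling argument: run the two games $\G$ and $\Ho$ against $\advA$ on a \emph{shared} source of randomness and show that the two resulting executions agree until the flag $\Bad$ is set. First I would make the hypothesis that $\G$ and $\Ho$ are identical-until-bad precise — i.e.\ that, as programs, they share exactly the same code except for instructions only ever reached after $\Bad$ has been set to one — and use it to argue, by induction on the number of elementary steps of an execution, that for every fixed sequence of coin tosses (for both $\advA$ and the game) the two executions pass through an identical sequence of states up to and including the step at which $\Bad$ is first set. In particular, $\Bad$ is set in $\G$ on exactly the coin sequences on which it is set in $\Ho$, so the two flags have the same distribution; and on the complementary set of coin sequences (on which $\Bad$ is never set) the two executions have identical transcripts and hence produce the same output bit $b$.

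Having fixed this common probability space, write $b_{\Ho}, b_{\G} \in \bit$ for the two output bits and $\Bad \in \bit$ for the (common) final value of the flag, all viewed as random variables on the shared coin space, so that their marginals reproduce the individual games. By the previous step, $b_{\Ho} = b_{\G}$ on the event $\neg\Bad$. Therefore
\begin{align}
\PR[b=1:b\gets\advA^{\Ho}] - \PR[b=1:b\gets\advA^{\G}]
&= \PR[b_{\Ho}=1 \wedge \Bad=1] - \PR[b_{\G}=1 \wedge \Bad=1] \\
&\leq \PR[\Bad=1] = \PR[\Bad=1:\advA^{\G}],
\end{align}
where the middle equality holds because the $\neg\Bad$ contributions to the two probabilities are equal and cancel, and the final equality because $\Bad$ has the same distribution in both games. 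Exchanging the roles of $\Ho$ and $\G$ gives the symmetric inequality, and the two together yield $\abs{\PR[b=1:b\gets\advA^{\Ho}] - \PR[b=1:b\gets\advA^{\G}]} \leq \PR[\Bad=1:\advA^{\G}]$.

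The main obstacle lies entirely in the first paragraph: turning ``syntactically identical except after $\Bad$ is set'' into a statement rigorous enough to support the step-by-step induction. This requires committing to a formal model of games as randomized programs — the pseudocode/RAM model of \cite{bellare2004code} — in which $\Bad$ is a write-once boolean and the portions of $\G$ and $\Ho$ that are reachable before $\Bad$ is set are literally the same program; once that is in place, the probabilistic part displayed above is routine. Since this is precisely Lemma~2 of \cite{bellare2004code}, in the paper one would simply invoke it, but the coupling argument sketched here is the self-contained proof.
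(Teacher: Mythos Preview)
Your proposal is correct; it is the standard coupling argument underlying the Fundamental Lemma. Note, however, that the paper does not give its own proof of this statement at all: it is stated as a preliminary and attributed to \cite{bellare2004code}, so there is nothing to compare against beyond the original reference, whose argument your sketch faithfully reproduces.
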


\subsection{Indifferentiability}
In the Random-Oracle Model (ROM) we assume the hash function used in a cryptosystem to be a random function~\cite{bellare1993random}. This model is very useful in cryptographic proofs but might not be applicable if the discussed hash function is constructed using some internal function. The ROM can still be used in this setting but by assuming the internal function is random. The notion of security is then \emph{indistinguishability} of the constructed functions from a random oracle.
In most constructions however (such as in SHA-2~\cite{sha2} and SHA-3~\cite{sha3}), the internal function is publicly known, rendering the security notion of indistinguishability too weak. A notion of security dealing with this issue is \emph{indifferentiability} introduced by Maurer, Renner, and Holenstein \cite{maurer2004indifferentiability}.

Access to the publicly known internal function and the hash function constructed from it is handled by \emph{interfaces}. An interface to a system is an access structure defined by the format of inputs and expected outputs. Let us illustrate this definition by an example, let the system $\sysC$ under consideration be a hash function $\Ho_f:\bits{*}\to\bits{n}$, constructed using a function $f: \{0,1\}^n\to \{0,1\}^n$. Then the \emph{private} interface of the system accepts finite-length strings as inputs and outputs $n$-bit long strings. Outputs from the private interface are generated by the hash function, so we can write (slightly abusing notation) $\sysC^{\priv}=\Ho_f$. The public interface accepts $n$-bit long strings and outputs $n$-bit strings as well. We have that $\sysC^{\pub}=f$.
Often we consider one of the analyzed systems, $\sysR$, to be a random oracle. Then both interfaces are the same and output random outputs of appropriate given length.

The following definitions and Theorem~\ref{thm:class-compose} are the rephrased versions of definitions and theorems from \cite{maurer2004indifferentiability, coron2005merkle}. We also make explicit the fact that the definitions are independent of the threat model we consider---whether it is the classical model or the quantum model. To expose those two cases we write ``classical or quantum'' next to algorithms that can be classical or quantum machines; Communication between algorithms (systems, adversaries, and environments) can also be of two types, where quantum communication will involve quantum states (consisting of superpositions of inputs)---explained in more detail in the remainder of the paper.
\begin{defi}[Indifferentiability \cite{maurer2004indifferentiability}]\label{def:indifferentiability}
A cryptographic (classical or quantum) system $\sysC$ is $(q,\eps)$-\textnormal{indifferentiable} from $\sysR$, if there is an efficient (classical or quantum) \textnormal{simulator} $\Sim$ and a negligible function $\eps$ such that for any efficient (classical or quantum) \textnormal{distinguisher} $\advD$\symbolindexmark{\advD} with binary output (0 or 1) the advantage
\begin{equation}
\left| \PR\left[b=1: b\gets \advD[\sysC_k^{\priv}[\sysC_k^{\pub}], \sysC_k^{\pub}] \right] -\PR\left[b=1: b\gets \advD[\sysR_k^{\priv}, \Sim[\sysR_k^{\pub}]] \right]  \right| \leq \eps(k) \, ,
\end{equation}
where $k$ is the security parameter.
The distinguisher makes at most $q$ (classical or quantum) queries to $\sysC$.
\end{defi}
It is important to note that if $ \sysR $ is the random oracle (which is often the case), then both interfaces are the same.
By efficient we mean with runtime that is polynomial in the security parameter $k$. The definitions are still valid and the theorem below holds also if we interpret efficiency in terms of queries made by the algorithms. Note that then we can allow the algorithms to be unbounded with respect to runtime, the distinction between quantum and classical queries is still of crucial importance though.
By square brackets we denote (classical or quantum) oracle access to some algorithm, we also use $\advA^{\Ho}$ if the oracle is denoted by a more confined symbol.
\iffullversion
In Fig.~\ref{fig:indiff} we present a a scheme of the situation captured by Def.~\ref{def:indifferentiability}.
\begin{figure}
	\begin{center}
		\includegraphics[scale=1]{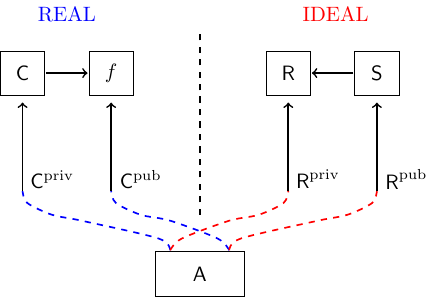}
	\end{center}
	\caption{A schematic representation of the notion of indifferentiability, Def.~\ref{def:indifferentiability}. Arrows denote ''access to'' the pointed system. \label{fig:indiff}}
\end{figure}
\fi

\begin{defi}[As secure as \cite{maurer2004indifferentiability}]
A cryptographic (classical or quantum) system $\sysC$ is said to be \textnormal{as secure as} $\sysC'$ if for all efficient (classical or quantum) environments $\Env$ the following holds: For any efficient (classical or quantum) attacker $\advA$ accessing $\sysC$ there exists another (classical or quantum) attacker $\advA'$ accessing $\sysC'$ such that the difference between the probability distributions of the binary outputs of $\Env[\sysC,\advA]$ and $\Env[\sysC',\advA']$ is negligible, i.e. 
\begin{equation}
\left| \PR\left[b=1:b\gets\Env[\sysC,\advA] \right] -\PR\left[ b=1:b\gets\Env[\sysC',\advA'] \right]  \right| \leq \eps(k) \, ,
\end{equation}
where $\eps$ is a negligible function.
\end{defi}

Indifferentiability is a strong notion of security mainly because if fulfilled it guarantees composability of the secure cryptosystem.
In the following we say that a cryptosystem $\sysT$ is \emph{compatible} with $\sysC$ if the interfaces for interacting of $\sysT$ with $\sysC$ are matching.
\begin{thm}[Composability \cite{maurer2004indifferentiability}]\label{thm:class-compose}
Let $\sysT$ range over (classical or quantum) cryptosystems compatible with $\sysC$ and $\sysR$, then $\sysC$ is $(q,\eps)$-indifferentiable from $\sysR$ if and only if for all $\sysT$, $\sysT[\sysC]$ is as secure as $\sysT[\sysR]$.
\end{thm}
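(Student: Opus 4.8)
The statement is an equivalence, so the plan is to prove the two implications separately; in both directions the argument is a purely syntactic composition of oracle machines that is insensitive to whether the queries and communication are classical or quantum, which is exactly what lets the ``classical or quantum'' annotations be carried through verbatim.

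\textbf{From indifferentiability to composition.} Suppose $\sysC$ is $(q,\eps)$-indifferentiable from $\sysR$ with simulator $\Sim$, and fix a compatible cryptosystem $\sysT$, an efficient environment $\Env$, and an efficient attacker $\advA$ accessing $\sysT[\sysC]$. I would take as the candidate attacker against $\sysT[\sysR]$ the machine $\advA'$ obtained by running $\advA$ with $\Sim$ interposed on all of $\advA$'s public-interface queries; it is efficient because $\advA$ and $\Sim$ are. To bound the difference of the two output probabilities, I would build an indifferentiability distinguisher $\advD$ which, given an interface pair $(I^{\priv},I^{\pub})$, internally runs $\Env$ with $\sysT$ plugged onto $I^{\priv}$ and with $\advA$ answered from $I^{\pub}$ on its public queries, and outputs the bit produced by $\Env$. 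Running $\advD$ on the real pair $(\sysC^{\priv}[\sysC^{\pub}],\sysC^{\pub})$ reproduces $\Env[\sysT[\sysC],\advA]$ exactly, and running it on $(\sysR^{\priv},\Sim[\sysR^{\pub}])$ reproduces $\Env[\sysT[\sysR],\advA']$ exactly; Definition~\ref{def:indifferentiability} then bounds the gap by $\eps$, with $q$ taken large enough to cover the (polynomially many) queries $\advD$ makes, which is all that negligibility needs. Hence $\sysT[\sysC]$ is as secure as $\sysT[\sysR]$.

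\textbf{From composition to indifferentiability.} Suppose now that $\sysT[\sysC]$ is as secure as $\sysT[\sysR]$ for every compatible $\sysT$. I would instantiate this with the \emph{transparent} cryptosystem $\sysT$ that, built on top of a construction $X$, exposes $X^{\priv}$ to the environment and routes the attacker's queries straight to $X^{\pub}$; this $\sysT$ is efficient and compatible both with $\sysC$ and with a monolithic random oracle $\sysR$ (for which $\sysR^{\priv}=\sysR^{\pub}$). Applying the ``as secure as'' hypothesis to this $\sysT$ with the \emph{dummy} attacker that merely relays messages between the environment and $X^{\pub}$ produces an efficient attacker $\advA'$ accessing $\sysT[\sysR]$, i.e.\ an efficient oracle machine with access to $\sysR^{\pub}$, and I would define this $\advA'$ to be the simulator $\Sim$. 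Then, given an arbitrary distinguisher $\advD$, I would regard it as the environment: executed against $\sysT[\sysC]$ together with the dummy attacker it is exactly $\advD[\sysC^{\priv}[\sysC^{\pub}],\sysC^{\pub}]$, and executed against $\sysT[\sysR]$ together with $\advA'$ it is exactly $\advD[\sysR^{\priv},\Sim[\sysR^{\pub}]]$, so the ``as secure as'' inequality yields precisely the bound required by Definition~\ref{def:indifferentiability}.

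\textbf{Main obstacle.} The delicate point is the second direction: one must guarantee that the object extracted from ``$\sysT[\sysC]$ as secure as $\sysT[\sysR]$'' is a \emph{single} simulator that works against \emph{all} distinguishers, not one tailored to a particular environment. This is the reason for invoking the hypothesis with the fixed dummy attacker and the transparent $\sysT$ — so that the extracted $\advA'$ depends only on the dummy attacker and on $\sysT$ while the environment side can still implement an arbitrary distinguisher — and it is where one uses the attacker transformation $\advA\mapsto\advA'$ in its uniform (environment-independent) form, the same form in which it is produced in the first direction. The remaining work is bookkeeping: verifying that the transparent $\sysT$ really is interface-compatible with both $\sysC$ and $\sysR$, and that composing these oracle machines preserves polynomial query and time bounds; none of this uses the classical/quantum distinction.
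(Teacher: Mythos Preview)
The paper does not give its own proof of this theorem: it is stated as a citation of \cite{maurer2004indifferentiability} and used as a black box, so there is no paper-side argument to compare against. Your proposal is essentially the standard Maurer--Renner--Holenstein proof, and the forward direction is correct as written.

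For the backward direction you have correctly identified the real issue, but your resolution does not quite close the gap against the definition as the paper states it. The paper's ``as secure as'' quantifies as $\forall\Env\,\forall\advA\,\exists\advA'$, so a priori the extracted $\advA'$ may depend on the environment even after you fix $\advA$ to the dummy; saying that one ``uses the attacker transformation in its uniform form'' is an appeal to the intended reading rather than to the stated one. The clean fix is to note that the equivalence only holds with the quantifier order $\forall\advA\,\exists\advA'\,\forall\Env$ (which is what \cite{maurer2004indifferentiability} actually proves), and to flag that the paper's definition should be read that way; alternatively, you can observe that your forward direction in fact establishes this stronger, uniform version of ``as secure as'', so the biconditional goes through with the uniform notion on both sides.
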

Note that composability that is guaranteed by the above theorem holds only for \emph{single-stage} games \cite{ristenpart2011careful}.

Indifferentiability is a strong security notion guaranteeing that a lower-level function (e.g. a random permutation) can be used to construct a higher-level object (e.g.\ a variable input-length random function) that is ''equivalent'' to the ideal one---in the sense of Thm.~\ref{thm:class-compose}. Here, an adversary's complexity is measured in terms of the number of queries to the oracles only, not in terms of their time complexity.
In quantum indifferentiability adversaries are allowed to access the oracles in superposition. This is necessary in the post-quantum setting, as the building blocks of many hash functions---like e.g those of SHA3~\cite{sha3}---are publicly specified and can be implemented on a quantum computer.

\subsection{Quantum Computing}
The model of quantum adversaries we use is quantum algorithms making $q$ queries to an oracle. Each query is intertwined by a unitary operation acting on the adversary's state and all her auxiliary states. A general introduction to quantum computing can be found in \cite{nielsen2002quantum}. Here we will only introduce specific operations important to understand the paper.

Let us define the \emph{Quantum Fourier Transform} (QFT), a unitary change of basis that we will make heavy use of. For $N \in \mathbb{N}_{>0}$ and $x,\xi\in[N]=\mathbb{Z}_N$ the transform is defined as
\begin{equation}
\QFT_N\ket{x} := \frac{1}{\sqrt{N}}\sum_{\xi\in[N]} \omega_N^{\xi\cdot x}\ket{\xi}, \label{eq:qft-def}
\end{equation}\symbolindexmark{\QFT}
where $\omega_N := e^{\frac{2\pi i}{N}}$ is the $N$-th root of unity.
An important identity for some calculations is
\begin{equation}
\sum_{\xi\in[N]} \omega_N^{x\cdot \xi} \cdot \bar{\omega}_N^{x'\cdot \xi} = N \delta_{x,x'},
\end{equation}
where $\bar{\omega}_N = e^{-\frac{2\pi i}{N}}$ is the complex conjugate of $\omega_N$ and $\delta_{x,x'}$ is the Kronecker delta function.

If we talk about $n$ qubits the identity on their Hilbert space is denoted by $\id_n$. We write $\Uni^A$ to denote that we act with $\Uni$ on register $A$.

\section{Quantum-Accessible Oracles}\label{sec:oracles}
In the Quantum-Random-Oracle Model (QROM)~\cite{boneh2011random}, one assumes that the random oracle can be accessed in superposition. Quantum-accessible random oracles are motivated by the possibility of running an actual instantiation of the oracle as function on a quantum computer, which would allow for superposition access. In this section, oracles implement a function $f:\mathcal{X}\to\mathcal{Y}$ distributed according to some probability distribution $\mathfrak{D}$ on the set $\mathcal{F}$ of functions from $\mathcal{X}$ to $\mathcal{Y}$. Without loss of generality we set $\mathcal{X}=\mathbb{Z}_M$ and $\mathcal{Y}=\mathbb{Z}_N$ for some integers $M,N>0$.

In this section we give a formal treatment of quantum accessible oracles. We explain with special care the compressed-oracle technique of Zhandry \cite{zhandry2018record}. A quantum oracle can be viewed as a purification (extension to a higher-dimensional Hilbert space) of the adversary’s quantum state. The simplest purification extends the state to include a superposition of all full function tables from the set $ \mathcal{F} $. Note that the oracle gives access to a random function from the set $ \mathcal{F} $. The purification we talk about is called the oracle register. A quantum algorithm could simulate the access to the quantum oracle by preparing the oracle register and performing the correct update procedures every time the adversary makes a query. Such a simulator would not be efficient though, as the oracle register we just defined holds $ M $ entries (so one for each element of the domain) of the table of values in $ [N] $. The brilliant idea of Zhandry was to propose a procedure to lazy-sample a uniformly random function. By lazy-sampling we mean here to store just the queries asked by the adversary, not the whole function table. By doing that we limit the number of entries held by the simulator to $ q $ (the bound on the number of queries performed by the adversary). Our result in this section is generalizing Zhandry’s technique to independent distributions on functions: Such that outputs are distributed independently for any distinct inputs.

Classically, an oracle for a function $f$ is modeled via a tape with the queried input $x$ written on it, the tape is then overwritten with $f(x)$. The usual way of translating this functionality to the quantum circuit model is by introducing a special gate that implements the unitary $\Uni_f\ket{x,y}=\ket{x,y+f(x)}$. In the literature $+$ is usually the bitwise addition modulo $2$, but in general it can be any group operation. We are going to use addition in $\mathbb Z_N$.\footnote{Note that introducing the formalism using the group $\mathbb Z_N$ for some $N\in\mathbb N$ is quite general in the following sense: Any finite Abelian group $G$ is isomorphic to a product of cyclic groups, and the (quantum) Fourier transform with respect to such a group is the tensor product of the Fourier transforms on the cyclic groups, given the natural tensor product structure of $\mathbb C^G$. We use this formalism to define the most general compressed oracle technique. A reader that focuses on bitstrings can just consider $ + $ to be the bitwise XOR and $ \cdot $ the inner product of bitstrings.} 

In the case where the function $f$ is a random variable, so is the unitary $\Uni_f$. Sometimes this is not, however, the best way to think of a quantum random oracle, as the randomness of $f$ is accounted for using classical probability theory, yielding a hybrid description. To capture the adversary's point of view more explicitly, it is necessary to switch to the \emph{mixed-state formalism}. A mixed quantum state, or \emph{density matrix}, is obtained by considering the projector onto the one-dimensional subspace spanned by a pure state, and then taking the expectation over any classical randomness. Say that the adversary sends the query state $\ket{\Psi_0}=\sum_{x,y}\alpha_{x,y}\ket{x,y}$ to the oracle, the output state is then
\begin{align}
&\sum_{f}\PR[f:f\gets\mathfrak{D}] \; \Uni_f\proj{\Psi_0}\Uni^{\dagger}_f \; \otimes \; \proj{f}_{F} \nonumber \\
&= \sum_{f}\PR[f:f\gets\mathfrak{D}] \sum_{x,x',y,y'}\alpha_{x,y}\bar{\alpha}_{x',y'} \ket{x,y+f(x)}\bra{x',y'+f(x')}\otimes\proj{f}_{F},
\end{align}
where by $\bar{\alpha}$ we denote the complex conjugate of $\alpha$ and we have recorded the random function choice in a classical register $F$ holding the full function table of $f$.

In quantum information science, a general recipe for simplifying the picture and to gain additional insight is to \emph{purify} mixed states, i.e.\ to consider a pure quantum state on a system augmented by an additional register $E$, such that discarding $E$ recovers the original mixed state. In \cite{zhandry2018record} Zhandry applies this recipe to this quantum-random-oracle formalism.

In the resulting representation of a random oracle, the classical register $F$ is replaced by a quantum register holding a superposition of functions from $\mathfrak{D}$. The joint state before an adversary makes the first query with a state $\ket{\Psi_0}_{XY}$ is $\ket{\Psi_0}_{XY}\sum_{f\in\mathcal{F}} \sqrt{\PR[f:f\gets\mathfrak{D}]} \,\ket{f}_F$. 
The unitary that corresponds to $\Uni_f$ after purification will be called the \emph{Standard Oracle $\StO$} and works by reading the appropriate output of $f$ from $F$ and adding it to the algorithm's output register,$ \symbolindexmark{\StO} $
\begin{equation}
\StO \ket{x,y}_{XY}  \ket{f}_F :=  \ket{x,y + f(x)}_{XY} \ket{f}_F. \label{eq:sto-def}
\end{equation} 
Applied to a superposition of functions as intended,  $\StO$ will entangle the adversary's registers $XY$ with the oracle register $F$.

The main observation of \cite{zhandry2018record} is that if we change the basis of the initial state of the oracle register $F$, the redundancy of this initial state becomes apparent. 
If we are interested in, e.g., an oracle for a uniformly random function, the Fourier transform changes the initial oracle state $\sum_{f}\frac{1}{\sqrt{|\mathcal{F}|}}\ket{f}$ to a state holding only zeros $\ket{0^{M}}$, where $0\in\mathcal{Y}$. The uniform case is treated in great detail in \cite{Unruh-forthcoming}, there the case of random (invertible) permutations is also analyzed.

Let us start by presenting the interaction of the adversary viewed in the same basis, called the Fourier basis. The unitary operation acting in the Fourier basis is called the \emph{Fourier Oracle $\FO$}. Another important insight from \cite{zhandry2018record} is that the Fourier Oracle, instead of adding the output of the oracle to the adversary's output register, does the opposite: It adds the value of the adversary's \emph{output} register to the (Fourier-)transformed truth table $ \symbolindexmark{\FO} $
\begin{equation}
\FO \ket{x,\eta}_{XY} \ket{\phi}_F := \ket{x,\eta}_{XY} \ket{\phi - \chi_{x,\eta}}_F, \label{eq:fo-def}
\end{equation}
where $\phi$ is the transformed truth table $f$ and $\chi_{x,\eta}:=(0,\dots,0,\eta,0,\dots,0)$ is a transformed truth table equal to $0$ in all rows except for row $x$, where it has the value $\eta$. Note that we subtract $\chi_{x,\eta}$ so that the reverse of QFT returns addition of $f(x)$.

Classically, a (uniformly) random oracle can be ``compressed'' by lazy-sampling the responses, i.e.\ by answering with previous answers if there are any, and with a fresh random value otherwise. Is lazy-sampling possible for quantum accessible oracles?
Surprisingly, the answer is yes. Thanks to the groundbreaking ideas presented in \cite{zhandry2018record} we know that there exists a representation of a quantum random oracle that is efficiently implementable.

In the remainder of this section we present an efficient representation of oracles for functions $f$ sampled from product distributions. In the first part we introduce a general structure of quantum-accessible oracles.
In the second part we generalize the idea of compressed random oracles to deal with non-uniform distributions of functions. 
In Appendix~\ref{sec:cfo-details}, we provide additional details on the implementation of the procedures introduced in this section and step-by-step calculations of important identities and facts concerning compressed oracles. 
In Appendix~\ref{sec:uniform-oracles} we recall in detail the compressed oracle introduced in \cite{zhandry2018record}, where the distribution of functions is uniform and the functions map bitstrings to bitstrings. We show the oracle in different bases and present calculations that might be useful for developing intuition for working with the new view on quantum random oracles. 

\subsection{General Structure of the Oracles}
In this subsection we describe the general structure of quantum-accessible oracles that will give us a high-level description of all the oracles we define in this paper. A quantum-accessible random oracle consists of 
\begin{enumerate}
\item Hilbert spaces for the input $\Hil_{\mathcal{X}}$, output $\Hil_{\mathcal{Y}}$, and state registers $\Hil_{\mathcal{F}}$,
\item a procedure $\Samp_{\distrD}$ that, on input a subset of the input space of the functions in $\distrD$, prepares a superposition of partial functions on that subset of inputs with weights according 
 to the respective marginal of the distribution $\distrD$, 
\item an update unitary $\FO_{\distrD}$ that might depend on $\distrD$ (in the case of compressed oracles) or not (in the case of full oracles, Eq.~\eqref{eq:fo-def}).
\end{enumerate}
First of all, let us note that we use the Fourier picture of the oracle as the basis for our discussion. This picture, even though less intuitive at first sight, is simpler to handle mathematically.
The distribution of the functions we model by the quantum oracle are implicitly given by the procedure $\Samp_{\distrD}$ that when acting on the $\ket{0}$ state generates a superposition of values consistent with outputs of a function $f$ sampled from $\distrD$.

In the above structure the way we implement the oracle---in a compressed way, or acting on full function tables---depends on the way we define $\FO_{\distrD}$.

The definition of $\Samp_{\distrD}$ is such that $\Samp_{\distrD}(\mathcal{X})\ket{0^M}=\sum_{f\in\mathcal{F}}\sqrt{\PR[f\gets\distrD]}\ket{f}$ and is a unitary operator.

Quantum-accessible oracles work as follows. First the oracle state is prepared in an all-zero state. Then at every query by the adversary we run $\FO_{\distrD}$ which updates the state of the database. Further details are provided in the following sections.

\subsection{Non-uniform Oracles}
One of the main results of this paper is generalizing the idea of purification and compression of quantum random oracles to a class of non-uniform function distributions. We show that the compressed-oracle technique can be used to deal with distributions over functions with outputs independent of any prior interactions. Examples of such functions are random Boolean functions that output one with a given probability. 

We want to compress the following oracle
\begin{align}
&\StO \ket{x,y}_{XY}\sum_{f\in\mathcal{F}} \sqrt{\PR[f:f\gets\mathfrak{D}]} \, \ket{f}_F \nonumber\\
&= \sum_{f\in\mathcal{F}} \sqrt{\PR[f:f\gets\mathfrak{D}]} \, \ket{x,y+f(x)\mod N}_{XY} \, \ket{f}_F,\label{eq:sto-def-gen}
\end{align}
where $\distrD\symbolindexmark{\distrD}$ is a distribution on the set of functions $\mathcal{F}=\{f:\mathcal{X}\to\mathcal{Y}\}$.
The first ingredient we need is an operation that prepares the superposition of function truth tables according to the given distribution. More formally, we know a unitary that for all $\mathcal{S}\subseteq\mathcal{X}\symbolindexmark{\Samp}$
\begin{equation}
\Samp_{\distrD}(\mathcal{S})\ket{0^{\abs{\mathcal{S}}}}_{F(\mathcal{S})} =
	\bigotimes_{x\in \mathcal{S}} \sum_{y_x\in \mathcal{Y}} \sqrt{\PR[y_x=f(x):f\gets \distrD]} \ket{y_x}_{F(x)},\label{eq:def-add}
\end{equation}
where by $ F(x) $ we denote the register corresponding to $ x $. Later we give explicit examples of $\Samp_{\distrD}$ for different $\distrD$.
Applying QFT to the adversary's register gives us the \emph{Phase Oracle $ \PhO $} that changes the phase of the state according to the output value $ f(x) $. This picture is commonly used in the context of bitstrings but is not very useful in our context. Additionally transforming the oracle register brings us to the Fourier Oracle, that we will focus on. This series of transformations can be depicted as a chain of oracles:
\begin{align}
\StO \xleftrightarrow{\QFT_N^Y} \PhO \xleftrightarrow{\QFT_N^F} \FO,\label{eq:chainQFT}
\end{align}
going ``to the right'' is done by applying $ \QFT_N $ and ``to the left'' by applying the adjoint. Also note that since register $ Y $ holds a single value in $ \mathcal{Y} $ and register $ F $ holds values in $ \mathcal{Y}^M $, the transform above is an appropriate tensor product of $ \QFT_N $.
The non-uniform Fourier Oracle is defined as $\FO = \QFT^{YF}_N \circ\StO\circ\QFT^{\dagger\; YF}_N$, as a consequence of that definition we have
\begin{align}
\begin{split} \label{eq:fo-def-gen}
&\FO \ket{x,\eta}_{XY} \sum_{\phi}\frac{1}{\sqrt{N^M}}\sum_{f\in\mathcal{F}} \sqrt{\PR[f:f\gets\mathfrak{D}]} \; \omega_N^{\phi \cdot f} \; \ket{\phi}_F \\
&= \ket{x,\eta}_{XY}\sum_{\phi}\frac{1}{\sqrt{N^M}}\sum_{f\in\mathcal{F}} \sqrt{\PR[f:f\gets\mathfrak{D}]} \; \omega_N^{\phi \cdot f} \; \ket{\phi -\chi_{x,\eta}\mod N }_F.
\end{split}
\end{align}
The main difference between uniform oracles and non-uniform oracles is that in the latter, the initial state of the oracle in the Fourier basis is not necessarily an all-zero state.
That is because the unitary $\Samp_{\distrD}$---that is used to prepare the initial state---is not the adjoint of the transformation between oracle pictures, like it is the case for the uniform distribution.

Before we give all details of $\Samp_{\distrD}$ let us discuss the two bases: the Fourier basis and the prepared basis.
To deal with the difference between the initial $0$ state and the initial Fourier basis truth tables we use yet another alphabet and define $\De$ (pronounced as [d$\varepsilon$]) which denotes the unprepared database. We call it like that because the initial state of $\De$ is the all-zero state. Moreover only by applying $\QFT^D_N\circ\Samp^D_{\distrD}$ we transform it to $\Delta$, i.e the Fourier basis database. As we will see, operations on $\De$ are more intuitive and easier to define.
We denote an unprepared database by $\ket{\De}_D=\ket{x_1,\ize_1}_{D_1}\ket{x_2,\ize_2}_{D_2}\cdots \ket{x_q,\ize_q}_{D_q} \symbolindexmark{\ize}$ (where the Cyrillic letter $\ize$ is pronounced as [i]). By $\Delta^Y(x)$ we denote the $\eta$ value corresponding to the pair in $\Delta$ containing $x$ and by $ \De^X $ we denote the $ x $ values in $ \De $.
The intuition behind the preparation procedure is to initialize the truth table of the correct distribution in the correct basis. 
This notion is not visible in the uniform-distribution case, because there the sampling procedure for the uniform distribution $ \distrU $ is the Fourier transform: $\Samp_{\distrU}=\QFT_{N}^{\dagger}$, and the database pictures $\Delta$ and $\De$ are equivalent.
The following chain of databases similar to Eq.~\eqref{eq:chainQFT} represents different pictures, i.e.\ bases, in which the compressed database can be viewed $ \symbolindexmark{\De} $
\begin{align}
	\ket{\De} \xleftrightarrow{\Samp_{\distrD}} \ket{D} \xleftrightarrow{\QFT_N^{D^Y}} \ket{\Delta}.\label{eq:chain-database}
\end{align}

Before defining compressed oracles for non-uniform function distributions, let us take a step back and think about classical lazy sampling for such a distribution. Let $f$ be a random function from a distribution $\mathfrak D$. In principle, lazy sampling is always possible as follows. When the first input $x_1$ is queried, just sample from the marginal distribution for $f(x_1)$. Say the outcome is $y_1$ for the next query with $x_2$, we sample from the \emph{conditional distribution} of $f(x_2)$ given that $f(x_1)=y_1$, etc.

Whether actual lazy sampling is feasible depends on the complexity of sampling from the conditional distributions of function values given that a polynomial number of other function values are already fixed.

The method for quantum lazy sampling that we generalize in this paper is applicable only to a certain class of distributions. The distributions that we analyze must be independent for every input. By $f(\mathcal{S})$ we denote the part of the full truth table of $f$ corresponding to inputs from $\mathcal{S}$. Below we provide a definition of \emph{product} distributions:
\begin{defi}[Product distribution]\label{def:local-distr}
	A distribution $ \distrD $ on a set of functions $ \mathcal{F}\subseteq\{f:\mathcal{X}\to\mathcal{Y}\} $ is called \emph{product} if for all disjoint $ \mathcal{S}_1,\mathcal{S}_2\subseteq\mathcal{X} $, $f(\mathcal{S}_1)$ and $f(\mathcal{S}_2)$ are independently distributed when $f\gets\distrD$.
\end{defi}

The situation when constructing compressed superposition oracles for non-uniformly distributed random functions is very similar. In this case we need the operations $\Samp_{\distrD}(\mathcal{S})$ to be efficiently implementable for the compressed oracle to be efficient. Here, $ \mathcal{S}\subseteq \mathcal{X}$. By inputting a set to $\Samp_{\distrD}$ we mean that the operation will prepare a superposition of outputs to elements of the set.

Let us now come back to Definition~\ref{def:local-distr}, we want to translate the constraint on distributions to constraints on the quantum sampling procedure. The definition requires that the distribution is independent for any $ \mathcal{S}_1 $ and $ \mathcal{S}_2 $, this leads to the following requirement on sampling procedures:
\begin{align}\label{eq:local-add}
	\forall \mathcal{S}_1,\mathcal{S}_2 \subseteq\mathcal{X}:\;  \Samp_{\distrD}(\mathcal{S}_1\cup \mathcal{S}_2)= \Samp_{\distrD}(\mathcal{S}_1) \circ\Samp_{\distrD}(\mathcal{S}_2).
\end{align}

Let us present a detailed definition of sampling procedures for product distributions.
\begin{defi}[Sampling procedure for a product $ \distrD $]\label{def:local-samp}
	A sampling procedure $ \Samp_{\distrD} $ for a product distribution $ \distrD $ (as defined in Def.~\ref{def:local-distr}) is a family of unitary operators
	\begin{align}
		\left\{ \Samp_{\distrD}(\mathcal{S}_1) : \mathcal{S}_1\subseteq\mathcal{X} \right\},
	\end{align}
	where each operator fulfills the following conditions:
	\begin{itemize}
		\item[(i)] It is efficiently implementable in the number of inputs $ \abs{\mathcal{S}_1} $.
		\item[(ii)] It prepares the appropriate superposition on the zero state:
		\begin{align}
			\Samp_{\distrD}(\mathcal{S}_1)\ket{0^{\abs{\mathcal{S}_1}}}_{F_1}= \sum_{\vec{y}_1\in\mathcal{Y}^{|\mathcal{S}_1|}}\sqrt{\PRover{f\gets\distrD}\left[ f(\mathcal{S}_1)=\vec{y}_1 \right]} \ket{\vec{y}_1}_{F_1}.
		\end{align}
		\item[(iii)] The operators are independent, so for $\mathcal{S}_1,\mathcal{S}_2\subseteq\mathcal{X} $ such that $ \mathcal{S}_1\cap\mathcal{S}_2=\emptyset $ we have:
		\begin{align}
			&\Samp_{\distrD}^{F_1F_2}(\mathcal{S}_1\cup \mathcal{S}_2) =\Samp_{\distrD}^{F_1}(\mathcal{S}_1) \circ\Samp_{\distrD}^{F_2}(\mathcal{S}_2) 
		\end{align}
	and $F_1$ and $ F_2 $ are different quantum registers.
	\end{itemize}
\end{defi}

Note that for $ \Samp_{\distrD}(\mathcal{S})$ to be efficient, it is not sufficient that the probability distributions $\distrD$ are classically efficiently samplable. This is because running a reversible circuit obtained from a classical sampling algorithm on a superposition of random inputs will, in general, entangle the sample with the garbage output of the reversible circuit. The problem of efficiently creating a superposition with amplitudes $\sqrt{p(x)}$ for some probability distribution $p$ has appeared in other contexts, e.g. in classical-client quantum fully homomorphic encryption \cite{Mahadev2018}.
 
An interesting example of a distribution that is not product but which we can quantumly lazy-sample is the following: It is uniform for inputs in $ \bits{n}\setminus \{x\} $ for any $ x $ and is fully determined on the ``last'' input: $ f(x)=\bigoplus_{x'\neq x}f(x') $.
 
Before we state the algorithm that realizes the general \emph{Compressed Fourier Oracle $\CFO_{\distrD}$} we provide a high-level description of the procedure. The oracle $\CFO_{\distrD}$ is a unitary algorithm that performs quantum lazy sampling, maintaining a compressed database of the adversary's queries. For the algorithm to be correct---indistinguishable for all adversaries from the full oracle---it has to respect the following invariants of the database: The full oracle is oblivious to the \emph{order} in which a set of inputs is queried. Hence the same property has to hold for the compressed oracle, i.e.\ we cannot keep entries $(x,\eta)$ in the order of queries. We ensure this property by keeping the database \emph{sorted} according to $x$.

The second issue concerns the danger of storing too much information. If after the query we save $(x,\eta)$ in the database but the resulting entry mapped to $(x,0)$ in the \emph{unprepared} basis, i.e.\ the basis before applying $ \Samp $, then the  compressed database would entangle itself with the adversary, unlike in the case of the full oracle. Hence the database cannot contain $0$ in the unprepared basis.

In the following we sketch the workings of the quantum algorithm $ \CFO_{\distrD} $ responsible for updating the oracle register. The set of inputs $ \mathcal{X} $ is expanded by the symbol $ \perp $, denoting an empty entry in the quantum database.
\paragraph*{$\CFO_{\distrD}$:}
On input $\ket{x,\eta}$ do the following:
\begin{enumerate}
	\item Find the index $l\in [q]$ of the register holding the first $ x_l $ from the right that is $ x_l < x $, we should insert $(x,\eta)$ into this register.
	\item If $x\neq x_l$: insert $x$ in a register after the last element of the database and shift it to position $l$, moving the intermediate registers backwards.
	\item Apply $ \QFT_N^{D^Y_l} \circ \Samp_{\distrD}^{D_l}(x) $ to change the basis to the Fourier basis (in which the adversary's $\eta$ is encoded) and update register $D_l$ to contain $(x_l,\eta_l-\eta)$, change the basis back to original by applying $ \Samp_{\distrD}^{\dagger D_l}(x) \circ \QFT^{\dagger D^Y_l}_N $. 
	\item Check if register $ l $ contains a pair of the form $(x_l,0)$, if yes subtract $x$ from the first part to yield $ (\perp,0)$ and shift it back to the end of the database. 
	\item Uncompute\footnote{Uncomputing a function means in the context of quantum computing applying the conjugate of the unitary calculating this function.} $ l $.
\end{enumerate}
If after $q$ queries the database has a suffix of $u$ pairs of the form $(\perp,0)$, we say the database has $s=q-u$ non-padding entries.

Using this notation,  Alg.~\ref{alg:generalCFO} defines the procedure of updates of the database of the compressed database. We refer to Appendix~\ref{sec:algorithms} for the fully detailed description of  $\CFO_{\distrD}$.
\begin{algorithm}[hbt]
	\caption{General $\CFO_{\distrD}\symbolindexmark{\CFO}$}\label{alg:generalCFO}
\DontPrintSemicolon
\SetKwInOut{Input}{Input}\SetKwInOut{Output}{Output}
		\Input{Unprepared database and adversary query: $\ket{x,\eta}_{XY}\ket{\De}_D$}
		\Output{$\ket{x,\eta}_{XY}\ket{\De'}_{D}$}
		\BlankLine
		 	Count in register $ S $ the number of non-padding ($ \De^X\neq\perp $) entries $ s $ in $ D $ \label{line:queries1}\;
		\If(\tcp*[f]{add}){$x\not\in \De^X$}{
		 Insert $x$ to $\De^X$ in the right place and add $1$ to $S$ \label{line:x-alwaysin-D}  \tcp*[r]{keeping $ \De^X $ sorted}
	}
		 Apply $\QFT_N^{D^Y(x)}\Samp_{\distrD}^{D^Y(x)}(x)$\tcp*[r]{prepare the database: $\De(x)\mapsto\Delta(x)$}
		 Subtract $\eta$ from $\Delta^Y(x)$ \tcp*[r]{update entry with $x$}
		 Apply $\Samp_{\distrD}^{\dagger D(x)}(x)\QFT^{\dagger D^Y(x)}_N$ \tcp*[r]{unprepare the database: $\Delta(x)\mapsto\De(x)$}
		 In register $ L $ save location $ l $ of $ x $ in $ \De $ \;
		\If(\tcp*[f]{remove or do nothing}){$\De^Y_l=0$\label{line:ifeta0rem}}{
		 Remove $x$ from $D^X_l$ and shift register $D^X_l$ to the back \tcp*[r]{$ \De^X_l\mapsto\perp $}
	}
		\If(\tcp*[f]{$ x $ was removed}){$ \De^X_l\neq x $}{
		 Shift $ D^Y_l $ to the back and subtract $ 1 $ from $ S $\;
	}
		 Uncompute $ l $ from register $ L $\label{line:lazy-uncompL} \tcp*[r]{Algorithm~\ref{alg:lazy-uncompL}}
		 Uncompute $s$ from register $S$\label{line:queries2}\;
		Return $\ket{x,\eta}_{XY}\ket{\De'}_{D}$ \tcp*[r]{$\De'$ is the modified database}
\end{algorithm}

Below in Alg.~\ref{alg:lazy-uncompL} we explain how to uncompute $ l $ in line~\ref{line:lazy-uncompL} of Algorithm~\ref{alg:generalCFO}.
\begin{algorithm}[hbt]
	\caption{Uncompute $ L $ in line~\ref{line:lazy-uncompL} of Alg.~\ref{alg:generalCFO} \label{alg:lazy-uncompL}}
	\DontPrintSemicolon
		 Control on registers $ X $ and $ D^X $\;
		\For{$ i =1\dots,s-1 $}{
		\If{$ \De^X_i=x $}{
		 Subtract $ i $ from $ L $\;
		}
		\ElseIf{$ \De^X_i < x $ and $  x < \De^X_{i+1}  $}{
		 Subtract $ i+1 $ from $ L $\;
	}
		}
\end{algorithm}

In Alg.~\ref{alg:generalCFO} we use the fact that $ \Samp_{\distrD} $ is a local sampling procedure, Def.~\ref{def:local-samp}; Note that we write $ \Samp_{\distrD}^{D(x)}(x) $, so the sampling is independent from all queries that are already in the database.

We would like to stress that to keep the compressed oracle $\CFO_{\distrD}$ a unitary operation we always keep the database of size $q$. This can be easily changed by always appending an empty register $(\perp,0)$ at the beginning of each query of adversary $ \advA $. The current formulation of $\CFO_{\distrD}$ assumes that there is an upper bound on the number of queries made by the adversary, this is not a fundamental requirement.

The interface corresponding to the compressed Fourier oracle $\CFO_{\distrD}$ interprets the adversary's output register in the Fourier basis. When we want to change the basis to the standard one, we apply $ \QFT^{D^Y}_N $ to the database register and $ \QFT_N^{Y} $ to the adversary's output register. These basis changes give rise to the versions of oracle analogous to the full-oracle case: $ \symbolindexmark{\CStO} $
\begin{align}
	\CStO \xleftrightarrow{\QFT_N^{Y}} \CPhO \xleftrightarrow{\QFT_N^{D^Y}} \CFO.\label{eq:comp-chainQFT}
\end{align}
The intermediate oracle is the compressed phase oracle.

The decompression procedure for the general Compressed Fourier Oracle is given by Alg.~\ref{alg:general-dec}.
\begin{algorithm}
	\caption{General Decompression Procedure $\Dec_{\distrD}\symbolindexmark{\Dec}$}\label{alg:general-dec}
\DontPrintSemicolon
\SetKwInOut{Input}{Input}\SetKwInOut{Output}{Output}
\Input{Unprepared database: $\ket{\De}_{D}$}
\Output{Prepared, Fourier-basis truth table: $\ket{\phi(\De)}$}
\BlankLine
Count in register $ S $ the number of non-padding ($ x\neq\perp $) entries $ s $\;
Initialize register $ F $ in the state $\bigotimes_{x\in\mathcal{X}}\ket{0}$\;
\For(\tcp*[f]{Controlled on $ S $}){$i=1,2,\dots, s$}{
Swap register $D_i^Y$ with $F(x_i)$\;
}
\For{$x\in \mathcal{X}$ in descending order}{
\If{$F(x)$ holds a value $ \neq 0 $}{
Subtract $x$ from register $D_s^X$\;
Subtract $1$ from register $S$\;
}
}
Discard $D$ and $S$\;
Apply $\QFT_N^{F}\Samp^{F}_{\distrD}(\mathcal{X})$\tcp*[r]{Prepare the database}
\end{algorithm} 
The output of the decompression procedure $\phi(\De)$ is the state holding the prepared Fourier-basis truth table of the functions from $\distrD$, which by construction is consistent with the adversary's interaction with the compressed oracle. 

The decompression can be informally described as follows.
The first operation coherently counts the number of $ \De^X\neq\perp $ and stores the result in a register $S$.
Next we prepare a fresh all-zero initial state of a function from $\mathcal X$ to $\mathcal Y$, i.e.\ $\mathcal{X}$ registers of dimension $N$, all in the zero state. These registers will hold the final $\FO$ superposition oracle state.
The next step is swapping each $Y$-type register of the $\CFO$-database with the prepared zero state in the $\FO$ at the position indicated by the corresponding $X$-type register in the $\CFO$ database. This FOR loop is controlled on register $ S $. Note that after preparing $ S $ we do not modify $ S $ anymore in this step.
The task left to do is deleting $x$'s from $D$. It is made possible by the fact that the non-padding entries of the $\CFO$ database are nonzero and ordered. That is why we can iterate over the entries of the truth table $F$ and, conditioned on the entry not being $0$, delete the last entry of $D^X$ and reducing $S$ by one to update the number of remaining non-padding entries in the $\CFO$-database. Here the loop range does not depend on the size of the database, just the size of the domain.
Finally, we switch to the correct basis to end up with a full oracle of Fourier type, i.e. a $\FO$.

\begin{thm}[Correctness of $\CFO_{\distrD}$]\label{thm:indist-cfofo-gen}
Say $\distrD$ is a product distribution (Def.~\ref{def:local-distr}) over functions, let $\CFO_{\distrD}$ be defined as in Alg.~\ref{alg:generalCFO} and $\FO$ as in Eq.\eqref{eq:fo-def-gen}. Let $z$ be a random arbitrarily distributed string. Then for any quantum adversary $\advA$ making $q$ quantum queries we have
\begin{equation}
	\abs{ \pr{b=1:b\gets \advA^{\FO}(z)}- \pr{b=1:b\gets \advA^{\CFO}(z)} } =0.
\end{equation}
\end{thm}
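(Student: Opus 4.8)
The plan is to exhibit the decompression procedure $\Dec_{\distrD}$ of Algorithm~\ref{alg:general-dec} as an isometry that embeds the space of \emph{valid} compressed databases — sorted in the $X$-part, with pairwise distinct non-$\perp$ inputs, nonzero in the unprepared $Y$-part on those inputs, and padded with $(\perp,0)$ — into the full Fourier-oracle register, and to show that $\Dec_{\distrD}$ \emph{intertwines} the two update maps, i.e.\ $\Dec_{\distrD}\circ\CFO_{\distrD}=\FO\circ\Dec_{\distrD}$ on that subspace, where $\Dec_{\distrD}$ is understood to act trivially on the adversary's registers $XY$. Since $\CFO_{\distrD}$ preserves the valid subspace and $\Dec_{\distrD}$ maps the empty database to the initial $\FO$ state, a routine induction on the query index then shows that for every $i\le q$ the joint adversary-oracle state after $i$ queries in the $\CFO_{\distrD}$ world is mapped by $\Dec_{\distrD}$ exactly to the corresponding state in the $\FO$ world (the inductive step uses that the adversary's internal unitaries act on registers disjoint from the oracle register and hence commute with $\Dec_{\distrD}$, plus the intertwining identity for the query step). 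Because $\Dec_{\distrD}^{\dagger}\Dec_{\distrD}=\id$ on the valid subspace and $\Dec_{\distrD}$ leaves the adversary's registers untouched, every measurement statistic available to the adversary coincides in the two worlds, so the distinguishing advantage is exactly $0$.

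Concretely I would proceed in four steps. First, verify that $\Dec_{\distrD}$ restricted to the valid subspace is well-defined and norm-preserving: the swap-out loop and the delete-$x$ loop are reversible precisely because the stored entries are nonzero and ordered, and the final $\QFT_N^{F}\Samp_{\distrD}^{F}(\mathcal{X})$ is unitary. Second, check the base case: on the all-$(\perp,0)$ database $\Dec_{\distrD}$ performs only $\QFT_N^{F}\Samp_{\distrD}^{F}(\mathcal{X})\ket{0^M}$, which by Eq.~\eqref{eq:def-add} together with the definition of $\FO$ in Eq.~\eqref{eq:fo-def-gen} is precisely the initial Fourier-oracle state. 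Third, establish the intertwining identity by going through the four behaviours of $\CFO_{\distrD}$ — adding a fresh entry, updating an existing one, removing an entry that becomes $(x,0)$ in the unprepared basis, and leaving the database unchanged — and matching each, after decompression, against the action $\ket{\phi}_F\mapsto\ket{\phi-\chi_{x,\eta}}_F$ of $\FO$. The essential ingredient here is the product/locality structure of $\Samp_{\distrD}$ (Definition~\ref{def:local-samp}(iii) and Eq.~\eqref{eq:local-add}): since $\Samp_{\distrD}^{D(x)}(x)$ touches only the register associated with $x$ and is independent of the rest of the database, the basis change $\QFT_N^{D^Y(x)}\Samp_{\distrD}^{D^Y(x)}(x)$ that $\CFO_{\distrD}$ applies to pass from $\De(x)$ to $\Delta(x)$ is exactly the row-$x$ restriction of the global basis change $\QFT_N^{F}\Samp_{\distrD}^{F}(\mathcal{X})$ that $\Dec_{\distrD}$ applies; hence subtracting $\eta$ in that row and then decompressing agrees with decompressing first and then applying $\FO$. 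Fourth, assemble the induction as described above, noting that the ancilla registers $S$ and $L$ (the latter uncomputed via Algorithm~\ref{alg:lazy-uncompL}) return to $\ket{0}$ on the valid subspace, so that $\CFO_{\distrD}$ is genuinely a unitary on the database register alone.

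I expect step three — the case analysis proving $\Dec_{\distrD}\circ\CFO_{\distrD}=\FO\circ\Dec_{\distrD}$ — to be the main obstacle, and within it the "remove" branch: one must argue that $\CFO_{\distrD}$ correctly detects when the updated entry equals $(x,0)$ in the unprepared basis and collapses it to $(\perp,0)$, and that this is consistent with $\FO$ producing a truth table whose $x$-th row carries zero Fourier-weight after the global preparation — equivalently, that a freshly queried input whose decompressed contribution cancels against the implicit initial value behaves identically in both pictures. The accompanying bookkeeping around sorting/insertion and around the clean uncomputation of the ancillas $S$ and $L$ on the valid subspace (so that $\CFO_{\distrD}$ really is a unitary that never entangles itself with the adversary beyond what $\FO$ does) is the other delicate point. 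Everything else is the standard compressed-oracle argument of \cite{zhandry2018record}, merely carried through with the distribution-dependent $\Samp_{\distrD}$ in place of a bare $\QFT_N$; the detailed per-case calculations can be relegated to the appendix.
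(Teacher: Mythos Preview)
Your proposal is correct and follows essentially the same route as the paper: both establish the intertwining identity $\Dec_{\distrD}\circ\CFO_{\distrD}=\FO\circ\Dec_{\distrD}$ on well-formed databases, use the product structure of $\Samp_{\distrD}$ to localize the analysis to the single row $F(x)$, and then conclude by induction on the query index. The only cosmetic difference is that the paper treats all four behaviours uniformly via a single amplitude $\alpha(x,\eta,\ize_x,\tilde{\ize}_x)=\sum_{z}a_{z\ize_x}(x)\,\bar a_{z\tilde{\ize}_x}(x)\,\omega_N^{z\eta}$ and closes the argument with the unitarity identity $\sum_{\tilde{\ize}_x}\alpha(x,\eta,\ize_x,\tilde{\ize}_x)\,a_{z\tilde{\ize}_x}(x)=a_{z\ize_x}(x)\,\omega_N^{z\eta}$, rather than your explicit add/update/remove/nothing case split; this is a presentational choice, not a different strategy.
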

\begin{proof}[Proof sketch]
We will show that
\begin{equation}\label{eq:lazy-psifo-psicfo-sketch}
	\ket{\Psi_{\FO}}_{AF} = \Dec^D_{\distrD}\ket{\Psi_{\CFO}}_{AD},
\end{equation}
where $\ket{\Psi_{\FO}}_{AF}$ is the joint state of the adversary and the oracle resulting from the interaction of $\advA$ with $\FO$ and $\ket{\Psi_{\CFO}}_{AD}$ is the state resulting from the interaction of $\advA$ with $\CFO_{\distrD}$. The state $ \ket{\Psi_{\FO}}_{AF} $ is generated by applying $ \prod_{i=1}^q \Uni_i\circ\FO $  to the $ \ket{\psi_{0}}_A\ket{0^M}_F $, where $ \ket{\psi_0}_A $ is the initial state of the adversary. In the case of the compressed oracle, the state $ \ket{\Psi_{\CFO}}_{AD} $ is generated by applying $ \prod_{i=1}^q \Uni_i\circ\CFO $  to  $ \ket{\Psi_{0}}_A\ket{(\perp,0)^q}_D $, where $ (\perp,0)^q $ denotes $ q $ pairs $ (\perp,0) $.

We can focus on the state equality from Eq.~\eqref{eq:lazy-psifo-psicfo-sketch} because if they are indeed equal, then any adversary measurement on $ \ket{\Psi_{\FO}}_{AF} $ will yield the output $ b=1 $ with the same probability as on $ \Dec^D_{\distrD}\ket{\Psi_{\CFO}}_{AD} $.

To prove that Eq.~\eqref{eq:lazy-psifo-psicfo-sketch} indeed holds we calculate a single query made to the compressed oracle. We can perform a detailed calculation of that procedure thanks to the assumption that $ \distrD $ is a product distribution (Def.~\ref{def:local-distr}) and the sampling procedure that constructed accordingly (Def.~\ref{def:local-samp}).

After we calculated the updated compressed database we can easily decompress it and compare with the corresponding updated full oracle register. All the details of this proof can be found in Appendix~\ref{sec:full-correctness}.
\end{proof}

\section{One-way to Hiding Lemma for Compressed Oracles}\label{sec:comp-o2h}
The fundamental game-playing lemma, Lemma~\ref{lem:clas-game-play}, is a very powerful tool in proofs that include a random oracle. A common use of the framework is to reprogram the random oracle in a useful way. The fundamental lemma gives us a simple way of calculating how much the reprogramming costs in terms of the adversary's advantage---the difference between probabilities of $\advA$ outputting 1 when interacting with one game or the other. The lemma that provides a counterpart to Lemma~\ref{lem:clas-game-play} valid for quantum accessible oracles is the \emph{One-Way to Hiding} (O2H) Lemma first introduced by Unruh in \cite{unruh2015revocable}.

In this section we generalize the O2H lemma to work with the compressed oracle technique. The oracle register in this technique is a superposition over databases of input-output pairs. A relation on a database is a specific set of databases that fulfill some requirement, e.g., contains a collision (two entries with distinct inputs and the same output). The O2H lemma, as stated in \cite{ambainis2018quantum}, works with punctured oracles, these are quantum oracles that include a binary measurement after every query. After introducing the notion of relations on databases we bring the concept of punctured oracles to the compressed oracles technique. Punctured compressed oracles involve measurements on the superposition of databases. These measurements allow to analyze adversaries that had access to oracles that e.g.\ never output colliding outputs, this is a very useful situation, considering how often we lazy-sample functions is cryptographic proofs and then want to focus on some transcripts of input-output pairs. Our version of the O2H lemma provides a bound on the distinguishing advantage between an oracles that is not punctured and an oracle that is. The bound in the O2H lemma is stated in terms of the probability of any measurement in the punctured oracle succeeding, i.e., finding a database in the oracle register that fulfills the relation we discuss. The strength of our result lies in how versatile the new O2H lemma is, moreover the proof of the lemma is almost the same as the one in \cite{ambainis2018quantum}.

In the original statement of the O2H lemma, the main idea is that there is a marked subset of inputs to the random oracle $\Ho\symbolindexmark{\Ho}$, and an adversary tries to distinguish the situation in which she interacts with the normal oracle from an interaction with an oracle $\G$ that differs only on this set. The lemma states a bound for the distinguishing advantage which depends on the probability of an external algorithm measuring the input register of the adversary and seeing an element of the marked set. This probability is usually small, for random marked sets.

Recently this technique was generalized by Ambainis, Hamburg, and Unruh in \cite{ambainis2018quantum}. The main technical idea introduced by the generalized O2H lemma is to exchange the oracle $\G$ with a so-called \emph{punctured oracle} that measures the input of the adversary after every query. The bound on the adversary's advantage is given by the probability of this measurement succeeding. This technique forms the link with the classical identical-until-bad games: we perform a binary measurement on the ``bad'' event and bound the advantage by the probability of observing this bad event.

In this work we present a generalization of this lemma that involves the use of compressed oracles. Our idea is to measure the database of the compressed oracle, which makes the lemma more versatile and easier to use for more general quantum oracles.

Below we state our generalized O2H lemmas. Most proofs of \cite{ambainis2018quantum} apply almost word by word so we just describe the differences and refer the reader to the original work.

\subsection{Relations on databases}
The key concept we use are relations on the database of the compressed oracle.
\begin{defi}[Classical relation $R$ on $D$]\label{def:class-relation}
Let $D$ be a database of size at most $q$ pairs $(x,y)\in\mathcal{X}\times\mathcal{Y}$. We call a subset\footnote{Note that $ [q+1]=\{0,1,\dots, q\} $} $R\subseteq \bigcup_{t\in[q+1]} \left(\mathcal{X}\times\mathcal{Y}\right)^t$ a \emph{classical relation $R$ on $D$}.
\end{defi}
An example of such a relation is a collision, namely
\begin{align}\label{eq:def-rel-coll}
	R_{\textnormal{coll}}:=\{ ((x_1,y_1),\cdots,(x_s,y_s))\in \bigcup_{t\in[q+1]}\left(\mathcal{X}\times\mathcal{Y}\right)^t: \exists_{i,j} \; i\neq j, x_i\neq x_j, y_i=y_j \}.
\end{align}
Note however, that it is only reasonable to check if the \emph{non-padding entries} are in $ R $, omitting the $ (\perp,0) $ pairs at the end of $ D $.
If $D$ is held in a quantum register, the classical relation $R$ has a corresponding projective measurement $\Juni_R\symbolindexmark{\Juni}$ such that $\norm{\Juni_R \ket{(x_1,y_1),\cdots,(x_q,y_q)}_D}=1$ if and only if for some $s$ it holds that $\big((x_1,y_1),\cdots,(x_s,y_s)\big)\in R$ and for the remaining $i>s$, the $(x_i,y_i)$ are padding entries.

We also state an explicit algorithm to implement the measurement of a relation $R$, given that membership in $R$ is efficiently decidable. To denote the single-bit membership decision by $ D\in R $, the bit is $ 1 $ if and only if database $ D $ is in $ R $. To measure the relation we define a unitary $\V^{SDJ}_R\symbolindexmark{\V}$ that XORs a bit $ D\in R $ to register $ J $; This unitary is controlled on registers $ S $ and $ D $, the former holds the information about the size of the database and the latter the database itself.
Alg.~\ref{alg:measureR} defines the measurement procedure of measuring $ R $ on quantum databases in the standard basis.
\begin{algorithm}
	\caption{Measurement of a relation $ R $}\label{alg:measureR}
	\DontPrintSemicolon
	\SetKwInOut{Input}{Input}\SetKwInOut{Output}{Output}
	\Input{Database $\ket{D}_{D}$ in the standard basis}
	\Output{Outcome $p $ and post-measurement state $ \ket{D'}_D $}
	\BlankLine
	Count in register $ S $ the number of non-padding ($D^X \neq \perp$) entries $ s $ in $ D $\;\label{line:measure-queries1}
	Initialize a new qubit register $ \ket{0}_J $\;
	Apply $\V^{SJD}_R$ that XORs a bit $ j:= D\in R$ to register $ J $\;
	Uncompute register $ S $, measure register $ J $, output the outcome $ j $\;\label{line:measure-relation}
\end{algorithm} 

An important issue concerning measuring relations is the basis in which we store the quantum database. For the measurement to be meaningful it has to be done in the standard basis, so it is easiest to analyze $ \CStO_{\distrD} $ or $ \CPhO_{\distrD} $, defined by Eq.~\eqref{eq:comp-chainQFT}.

While not directly relevant to our applications, we keep the generality of \cite{ambainis2018quantum} by introducing the notion of \emph{query depth} as the number of sets of parallel queries an algorithm makes. We usually assume quantum algorithms make $q$ quantum queries in total and $d$ (as in ``query depth'') sequentially, but those queries in sequence may involve a number of parallel queries. A parallel query of width $p$ to an oracle $\Ho$ involves $p$ applications of $\Ho$ to $p$ query registers. Note that if $\Ho$ is considered to be a compressed oracle, $p$-parallel queries are processed by sequentially applying the compressed oracle unitary $p$ times.

First we define a compressed oracle $\Ho$ punctured on relation $R$, denoted by $\Ho\setminus R$.
\begin{defi}[Punctured compressed oracle $\Ho\setminus R$]\label{def:punctured}
Let $\Ho$ be a compressed oracle and $R$ a relation on its database. The punctured compressed oracle $\Ho\setminus R$ is equal to $\Ho$, except that $R$ is measured after every query as described in Alg.~\ref{alg:measureR}. By $\Find\symbolindexmark{\Find}$ we denote the event that $R$ outputs $1$ at least once among all queries.
\end{defi}
Full oracles can be punctured as well, the relation is then checked only on the queried entries of the function table---those queried entries need to be identified (like in $ \Dec_{\distrD} $ from Alg.~\ref{alg:general-dec}) prior to the measurement of $ R $.

In many applications of punctured oracles we might want to apply $ \Ho \setminus R $ only if some condition is fulfilled. Moreover, this condition might be quantum---in other words we control $ \Ho \setminus R $ on some quantum register. To avoid the situation of a measurement being performed or not depending on a state of a quantum register---which is not permitted by quantum mechanics---we propose the following solution: We postpone the measurement to the end of the quantum query. Namely, we omit the measurement of register $ J $ in Alg.~\ref{alg:measureR} and perform it at the end of the compressed-oracle algorithm. 
After the measurement we can uncompute the outcome register $ J $. We are not changing notation and implicitly assume the postponement of puncturing---e.g. in Alg.~\ref{alg:sponge-quant-sim-tra}.

\subsection{One-way to Hiding Lemma}
Using the definitions from the previous sections we can prove a theorem similar to Theorem~1 of \cite{ambainis2018quantum}.

Let us also comment on the differences of the O2H lemma in \cite{ambainis2018quantum} and our paper. The main difference is that in our generalization we no longer focus solely (we can recover the original O2H lemma though) on the adversary's inputs but also treat the outputs of the oracle. Function outputs are also important in \cite{ambainis2018quantum}, but the oracle is not lazy sampled, there they pick a subset of the domain such that e.g. the output is 0 and then puncture on inputs in this random set. We use lazy sampled functions and puncture on databases, so functions defined only on the queried inputs. In addition, defining the puncturing operation on the compressed oracle-database is more expressive, as it allows puncturing conditions depending on more than one input-output pair.
\begin{thm}[Compressed oracle O2H]\label{thm:comp-o2h} 
Let $R_1$ and $ R_2 $ be relations on the database of a quantum oracle $\Ho$.
Let $z$ be a random string. $R$ and $z$ may have arbitrary joint distribution.
Let $\advA\symbolindexmark{\advA}$ be an oracle algorithm of query depth $d$, then
\begin{align}
& \left| \PR[b=1:b\gets \advA^{\Ho\setminus R_1}(z)]- \PR[b=1:b\gets \advA^{\Ho\setminus R_1\cup R_2}(z)] \right| \nonumber\\
&\leq \sqrt{(d+1) \PR[\Find_2: \advA^{\Ho\setminus R_1\cup R_2}(z)]}, \\
&\left| \sqrt{\PR[b=1:b\gets \advA^{\Ho\setminus R_1}(z)]}- \sqrt{\PR[b=1:b\gets \advA^{\Ho\setminus R_1\cup R_2}(z)} \right|  \nonumber\\
&\leq \sqrt{(d+1) \PR[\Find_2: \advA^{\Ho\setminus R_1\cup R_2}(z)]},
\end{align}
where $ \Find_2 $ is the event that measuring $ R_2 $ succeeds.
\end{thm}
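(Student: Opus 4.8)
The plan is to recast the statement as an instance of the punctured-oracle O2H lemma of~\cite{ambainis2018quantum} (Theorem~1 there), with two essentially cosmetic modifications: the semi-classical measurement is performed on the compressed-oracle \emph{database} with respect to the relation $R_2$ (rather than on the adversary's query register with respect to a subset of inputs), and the ``base'' oracle common to both experiments is the already-punctured oracle $\Ho\setminus R_1$ rather than a plain unitary. Since we compare $\Ho\setminus R_1$ with $\Ho\setminus R_1\cup R_2$ and do not reprogram any outputs, the ``$H$ versus $G$'' reprogramming content of~\cite{ambainis2018quantum} is vacuous here; only its puncturing-indistinguishability part is invoked, which is exactly the degenerate case $H=G$ of their Theorem~1.

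First I would record the elementary observation that reduces the claim to adding a \emph{single} extra measurement. A relation is decided in the computational basis, so the measurements $\Juni_{R_1}$ and $\Juni_{R_2}$ commute, and ``measure $R_1\cup R_2$, obtain the not-in-relation outcome'' is the same two-outcome event as ``measure $R_1$, obtain $0$, then measure $R_2$, obtain $0$''. Consequently $\Ho\setminus R_1\cup R_2$ is, query for query, the oracle obtained from $\G:=\Ho\setminus R_1$ by additionally measuring whether the database lies in $R_2$ after each query, and $\Find_2$ is precisely the event that this extra measurement outputs $1$ at least once. It therefore suffices to treat $\G=\Ho\setminus R_1$ as a fixed oracle (whatever intermediate operations and $R_1$-measurements it internally performs) and bound the distinguishing advantage between $\G$ and the oracle ``$\G$ with an added per-query $R_2$-measurement''.

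Then I would run the argument of~\cite{ambainis2018quantum} essentially verbatim. Write the adversary's interaction as its $d+1$ own unitaries interleaved with the $d$ (possibly parallel) query layers; for a parallel layer of width $p$ one applies the compressed oracle $p$ times and then measures $R_2$ once on the database, exactly as for parallel queries in~\cite{ambainis2018quantum}. Let $P$ be the projector onto databases not in $R_2$ and $Q=\id-P$. Conditioned on $\neg\Find_2$, the run against ``$\G$ with added $R_2$-measurement'' produces, up to renormalisation, the subnormalised state obtained by inserting $P$ after every query layer, and the adversary's output bit conditioned on $\neg\Find_2$ is distributed as a measurement of $b$ on that state; the run against $\G$ alone produces the state obtained by inserting $\id=P+Q$ after every layer. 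A telescoping/hybrid expansion over the first layer at which a $Q$ appears writes the difference of the two final states as a sum of $d+1$ vectors, the $i$-th of which has squared norm equal to the probability that the $R_2$-measurement first fires at layer $i$; by Cauchy--Schwarz the norm of this sum is at most $\sqrt{(d+1)\,\PR[\Find_2:\advA^{\Ho\setminus R_1\cup R_2}(z)]}$. Applying the reverse triangle inequality to the projector onto ``$b=1$'' then yields the $|\sqrt{\cdot}-\sqrt{\cdot}|$ bound, and the bookkeeping of~\cite{ambainis2018quantum} that folds the $\Find_2$-branch back into the unconditioned probability $\PR[b=1:\advA^{\Ho\setminus R_1\cup R_2}(z)]$ yields the first displayed bound. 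The only steps needing (brief) justification beyond copying~\cite{ambainis2018quantum} are the commutation reduction above and the observation that replacing ``measure the query input'' by ``measure the database'' does not affect their proof --- the proof only uses that the inserted measurements are projective and applied after each query layer, both of which hold here. I expect this last verification, together with matching the $\Find_2$-branch bookkeeping to the two stated forms of the inequality, to be the only mildly delicate part; there is no genuinely new inequality to establish.
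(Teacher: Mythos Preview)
Your proposal is correct and follows essentially the same approach as the paper: both reduce to the punctured-oracle O2H proof of~\cite{ambainis2018quantum}, handle $R_1$ by absorbing its (postponed/purified) measurement into the base oracle unitary, and observe that replacing the query-register measurement by a database-register measurement is a purely cosmetic change since the proof only uses that the inserted measurement is projective. The paper phrases the core pure-state bound via the ``counting'' algorithm $\advB_{\textnormal{count}}$ of~\cite{ambainis2018quantum} (triangle plus Jensen over the $d$ branches indexed by the number of hits) rather than your telescoping-over-first-$Q$ description, but these are the same argument.
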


\begin{proof}[Proof sketch]
The proof works almost the same as the proof of Theorem~1 of \cite{ambainis2018quantum}. Instead of checking register $ X $ for the success of the puncturing measurement we analyze the oracle register. The rest follows exactly the same reasoning. All the details of the full proof can be found in Appendix~\ref{sec:full-o2h}
\end{proof}

We continue by deriving an explicit formula for $\PR[\Find]$. Let $\advA$ be a quantum algorithm with oracle access to $\Ho$, making at most $q$ quantum queries with depth $d$. Let $R$ be a relation on the database of $\Ho$ and $z$ an input to $\advA$. $R$ and $z$ can have any joint distribution. $\Juni_R$ is the projector from the measurement of $R$ on $D$, $\Uni^{\Ho}_i$ is the $i$-th unitary performed by $\advA^{\Ho\setminus R}$ together with a query to $\Ho$, and $\ket{\Psi_0}$ is the initial state of $\advA$. Then we have the formula
\begin{align}
\PR[\Find: \advA^{\Ho\setminus R}(z)] = 1- \norm{\prod_{i=1}^{d} (\mathbbm{1}-\Juni_R) \Uni^{\Ho}_i \ket{\Psi_0} }^2. \label{eq:def-find}
\end{align}$\symbolindexmark{\Find}$

Let us now discuss the notion of ``identical-until-bad'' games in the case of compressed oracles. For random oracles, the notion was introduced in \cite{ambainis2018quantum}. The definition is rather straightforward as $\Ho$ and $\G$ are considered identical until bad if they had the same outputs except for some marked set. When using compressed oracles, the outputs of $\Ho$ and $\G$ are quantum lazy-sampled, making the definition of what it means for two oracles to be identical until bad require more care.
Here we state a definition that captures useful notions of identical-until-bad punctured oracles.
\begin{defi}[Almost identical oracles]\label{def:almost-id-or1}
Let $\Ho$ and $\G$ be compressed oracles and $R_i$, $i=1,2$ relations on their databases. We call the oracles $\Ho\setminus R_1$ and $\G\setminus R_2$ \textnormal{almost identical} if they are equal conditioned on the events $\neg \Find_1$ and $ \neg\Find_2 $ respectively, i.e. for any string $z$ and any quantum algorithm $\advA$ 
\begin{align}
 \PR[b=1:b\gets A^{\Ho\setminus R_1}(z)\mid \neg \Find_1] = \PR[b=1:b\gets A^{\G\setminus R_2}(z)\mid \neg \Find_2].	
\end{align}
\end{defi}
Note that not punctured compressed oracles are a special case of punctured ones (for $R=\emptyset$), so the above definition can be applied to a pair of oracles where one is punctured and one is not.
We can prove the following bound on the adversary's advantage in distinguishing almost identical punctured oracles.
\begin{lemm}[Distinguishing almost identical punctured oracles]\label{lem:almost-id-gms-dist1}
If $\Ho\setminus R_1$ and $\G\setminus R_2$ are almost identical according to Def.\ref{def:almost-id-or1} then for any $ b\in\{0,1\} $
\begin{align}
& \abs{\PR[b\gets A^{\Ho\setminus R_1}(z)]-\PR[b\gets A^{\G\setminus R_2}(z)]} \leq  2 \PR[\Find_1:A^{\Ho\setminus R_1}(z)] +  2 \PR[\Find_2:A^{\G\setminus R_2}(z)].
\end{align}
\end{lemm}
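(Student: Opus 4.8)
The plan is to mimic the classical fundamental lemma of game playing, now with the ``bad'' events being $\Find_1$ and $\Find_2$: split each of the two success probabilities according to whether the puncturing measurement ever fires, observe that the ``did not fire'' branches agree by the almost-identicality hypothesis, and bound the ``did fire'' branches crudely by $\PR[\Find_1]$ and $\PR[\Find_2]$.

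Concretely, I would set $f_1 := \PR[\Find_1 : \advA^{\Ho\setminus R_1}(z)]$ and $f_2 := \PR[\Find_2 : \advA^{\G\setminus R_2}(z)]$. If $f_1 = 1$ or $f_2 = 1$ the right-hand side of the claimed inequality is at least $2$ and there is nothing to prove, so one may assume $f_1,f_2 < 1$; then $\PR[\neg\Find_1],\PR[\neg\Find_2] > 0$ and the conditional probabilities below are well-defined. Conditioning on $\Find_1$ versus $\neg\Find_1$ (and analogously for $\G$),
\begin{align}
\PR[b \gets \advA^{\Ho\setminus R_1}(z)] &= (1-f_1)\,\PR[b \gets \advA^{\Ho\setminus R_1}(z)\mid \neg\Find_1] + \PR[b \gets \advA^{\Ho\setminus R_1}(z) \wedge \Find_1], \\
\PR[b \gets \advA^{\G\setminus R_2}(z)] &= (1-f_2)\,\PR[b \gets \advA^{\G\setminus R_2}(z)\mid \neg\Find_2] + \PR[b \gets \advA^{\G\setminus R_2}(z) \wedge \Find_2].
\end{align}
By Definition~\ref{def:almost-id-or1} the two conditional probabilities on the right are equal (the definition is phrased for $b=1$, but since $b\in\{0,1\}$ the conditional probabilities for $b=0$ are the complements of those for $b=1$, so equality holds for $b=0$ as well); write $p \in [0,1]$ for their common value. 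Subtracting the two displays, the conditional terms combine into $p(f_2 - f_1)$, and using the triangle inequality together with $p \le 1$, $0 \le \PR[b \gets \advA^{\Ho\setminus R_1}(z)\wedge\Find_1] \le f_1$, and $0 \le \PR[b \gets \advA^{\G\setminus R_2}(z)\wedge\Find_2] \le f_2$, one obtains
\begin{align}
\bigl| \PR[b \gets \advA^{\Ho\setminus R_1}(z)] - \PR[b \gets \advA^{\G\setminus R_2}(z)] \bigr| \le |f_2 - f_1| + f_1 + f_2 \le 2f_1 + 2f_2,
\end{align}
which is the claim.

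I do not expect a genuine obstacle here; the argument is elementary probability bookkeeping, and the only two points deserving a word of care are the degenerate case $\PR[\neg\Find_i] = 0$ (handled above, where the bound is vacuous since then $\PR[\Find_i]=1$) and the observation that almost-identicality as in Definition~\ref{def:almost-id-or1}, although stated for $b=1$, automatically supplies the matching statement for $b=0$. A slightly sharper accounting of the signs in $p(f_2-f_1) + \PR[b\gets\advA^{\Ho\setminus R_1}(z)\wedge\Find_1] - \PR[b\gets\advA^{\G\setminus R_2}(z)\wedge\Find_2]$ in fact gives the better bound $\PR[\Find_1] + \PR[\Find_2]$, but the stated constant $2$ is more than enough for the applications in Section~\ref{sec:q-indifferentiability}.
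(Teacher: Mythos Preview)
Your proof is correct and follows essentially the same approach as the paper: split each success probability by conditioning on $\Find_i$ versus $\neg\Find_i$, use Definition~\ref{def:almost-id-or1} to cancel the $\neg\Find$ branches, and bound the rest via the triangle inequality. You are a bit more careful than the paper in handling the degenerate case $\PR[\neg\Find_i]=0$ and in noting (correctly) that the sharper bound $\PR[\Find_1]+\PR[\Find_2]$ is available.
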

\begin{proof}
	We bound
\begin{align}
 &\abs{\PR[b\gets A^{\Ho\setminus R_1}(z)]-\PR[b\gets A^{\G\setminus R_2}(z)]} \nonumber \\
  & \overset{\textnormal{Def.~\ref{def:almost-id-or1}}}{=} \left\lvert{\PR[b\gets A^{\Ho\setminus R_1}(z)\mid \neg\Find_1]}{\left( \PR[\neg\Find_1:A^{\Ho\setminus R_1}(z)]-\PR[\neg\Find_2:A^{\G\setminus R_2}(z)]\right) } \right.\nonumber\\
& + \PR[b\gets A^{\Ho\setminus R_1}(z)\mid \Find_1] \PR[\Find_1:A^{\Ho\setminus R_1}(z)]\nonumber \\
& \left.- \PR[b\gets A^{\G\setminus R_2}(z)\mid \Find_2] \PR[\Find_2:A^{\G\setminus R_2}(z)]   \right\rvert \\
&\overset{\triangle}{\leq}\left\lvert \underset{\leq 1}{\underbrace{\PR[b\gets A^{\Ho\setminus R_1}(z)\mid \neg\Find_1]}} \underset{=\PR[\Find_2:A^{\G\setminus R_2}(z)]-\PR[\Find_1:A^{\Ho\setminus R_1}(z)]}{\underbrace{\left( \PR[\neg\Find_1:A^{\Ho\setminus R_1}(z)]-\PR[\neg\Find_2:A^{\G\setminus R_2}(z)]\right) }} \right|\nonumber \\
& +\left|\underset{\leq 1}{\underbrace{\PR[b\gets A^{\Ho\setminus R_1}(z)\mid \Find_1]}}\PR[\Find_1:A^{\Ho\setminus R_1}(z)]\right|\nonumber \\
&+\left|\underset{\leq 1}{\underbrace{\PR[b\gets A^{\G\setminus R_2}(z)\mid \Find_2]}}\PR[\Find_2:A^{\G\setminus R_2}(z)]   \right\rvert \\
& \overset{\triangle}{\leq} 2 \PR[\Find_1:A^{\Ho\setminus R_1}(z)] + 2 \PR[\Find_2:A^{\G\setminus R_2}(z)],
\end{align}
where by $\triangle$ we denote the triangle inequality.
\end{proof}

Note that for $R_2=\emptyset$, the above lemma is essentially a special case of the well known Gentle-Measurement Lemma of \cite{winter1999coding}. 

It is a fact of quantum mechanics that measurements disturb the state. Considering that, one might be curious if measuring the database does not disturb it too much. As an example, note that after a measurement of the collision relation, eq.~\eqref{eq:def-rel-coll}, the database does not necessarily consist of only non-Fourier-$ 0 $ entries. Even though this is true, if the disturbance of the oracle is low enough, then the adversary will not notice it. This is exactly the case of the O2H lemma, the disturbance is low enough so the adversary does not notice any  difference in the content of the oracle's output.

\subsection{Calculating $ \Find $ for the Collision and Preimage Relations}\label{sec:pr-find}
We state a lemma giving a bound on the probability of $ \Find $ for the uniform distribution over the set $ \{f:\mathcal{X}\to \mathcal{Y}\} $, and for the union of the collision and preimage relations. The preimage relation is satisfied when the output of the oracle is $ 0 $:
\begin{align}\label{eq:def-rel-preim}
	R_{\preim}:=\{ ((x_1,y_1),\cdots,(x_t,y_t))\in \bigcup_{s\in[q+1]}\left(\mathcal{X}\times\mathcal{Y}\right)^s: \exists i: y_i=0 \}.
\end{align}

In the following we assume $ \mathcal{Y}=[N] $.
\begin{lemm}\label{lem:prfind-preim-coll}
	For any quantum adversary $ \advA $ interacting with a punctured oracle $ \CStO_{\mathcal{Y}}\setminus (R_{\preim}\cup R_{\coll}) $---where $ R_{\coll} $ is defined in Eq.~\eqref{eq:def-rel-coll} and $ R_{\preim} $ in Eq.~\eqref{eq:def-rel-preim}---the probability of $ \Find $ is bounded by:
	\begin{align}
		\PR[\Find:\advA[\CStO_{\mathcal{Y}}\setminus (R_{\preim}\cup R_{\coll}) ]]\leq 
		296\frac{q^2}{N},
	\end{align}
	where $ q $ is the maximal number of queries made by $ \advA $ and $ N=\abs{\mathcal{Y}} $.
\end{lemm}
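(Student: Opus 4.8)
The plan is to mirror the strategy sketched for the collision-only case: guess an explicit ``good'' state that approximately describes the joint adversary--database state conditioned on $\neg\Find$, work out the effect of a single punctured query on it, and then accumulate the one-query errors over all $q$ queries. Write $R:=R_{\preim}\cup R_{\coll}$. We always work in the standard basis (with $\CStO_{\mathcal{Y}}$), since that is where the measurement $\Juni_R$ of Alg.~\ref{alg:measureR} lives. Unfolding Eq.~\eqref{eq:def-find} and using that the adversary's unitaries $\Uni_i$ act only on her own registers (hence commute with $\Juni_R$), one gets
\begin{equation}
\PR[\Find:\advA[\CStO_{\mathcal{Y}}\setminus R]] = 1-\norm{\prod_{i=1}^{q}(\id-\Juni_R)\,\CStO_{\mathcal{Y}}\,\Uni_i\ket{\Psi_0}}^2 = \sum_{i=1}^{q}\norm{\Juni_R\,\CStO_{\mathcal{Y}}\,\Uni_i\ket{\Psi_{i-1}}}^2,
\end{equation}
where $\ket{\Psi_{i-1}}:=\prod_{j=1}^{i-1}(\id-\Juni_R)\CStO_{\mathcal{Y}}\Uni_j\ket{\Psi_0}$ is the sub-normalised state after $i-1$ queries all of whose relation measurements returned $0$. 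So it suffices to bound each per-query term $\norm{\Juni_R\,\CStO_{\mathcal{Y}}\,\Uni_i\ket{\Psi_{i-1}}}$.

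First I would set up the good state. Call a database \emph{good} if its recorded inputs are distinct and its recorded outputs are nonzero and pairwise distinct --- equivalently, not in $R$. Let $\ket{\Psi^{\Good}_i}$ be the state obtained from $\ket{\Psi_0}$ by running the compressed oracle and the adversary for $i$ queries but, after each query, replacing the $Y$-part of the database by the uniform superposition over all output-assignments keeping the database good on the inputs recorded so far. This is the quantum analogue of classical lazy sampling in which the $i$-th fresh output is drawn uniformly from $\mathcal{Y}\setminus(\{0\}\cup\{\text{previous outputs}\})$, a set of size $\ge N-i$. By construction $\Juni_R\ket{\Psi^{\Good}_i}=0$, so the whole estimate reduces to (i) controlling $\norm{\ket{\Psi_i}-\ket{\Psi^{\Good}_i}}$ and (ii) evaluating one query of $\CStO_{\mathcal{Y}}$ on the good state exactly.

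The core computation is step (ii). Applying $\CStO_{\mathcal{Y}}$ to $\Uni_i\ket{\Psi^{\Good}_{i-1}}$ and using the explicit standard-basis action of the compressed oracle, I would split the output into a \emph{good} branch that, up to a scalar close to $1$, equals $\ket{\Psi^{\Good}_i}$ (the query input was already recorded, or it was new and the fresh uniform-over-$\mathcal{Y}$ value avoided $0$ and the $\le i-1$ old outputs); a \emph{bad} branch, supported entirely on databases in $R$ (new value equal to $0$, or to an old output), kept by $\Juni_R$, of norm $O(\sqrt{i/N})$ by a direct amplitude count over the $\le i$ forbidden values; and an \emph{error} branch coming from the mismatch between ``uniform over $\mathcal{Y}$'' (what $\CStO_{\mathcal{Y}}$ does) and ``uniform over the $\ge N-i$ allowed values'', together with the compressed oracle's conditional ``uncompute the entry when its Fourier value vanishes'' step, of norm $O\!\big(\tfrac{i}{\sqrt{N-i}}+\tfrac{i^2}{N-i}\big)$. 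The extra forbidden value $0$ from $R_{\preim}$ is treated exactly like one more old output, so this is essentially the single-query calculation behind the collision-only bound with one more excluded value. A hybrid argument over the $q$ queries then bounds $\norm{\ket{\Psi_i}-\ket{\Psi^{\Good}_i}}$ by the sum of the per-query error-branch norms up to step $i$; feeding this, the near-equality $\Uni_i\ket{\Psi_{i-1}}\approx\Uni_i\ket{\Psi^{\Good}_{i-1}}$, and $\Juni_R\ket{\Psi^{\Good}_{i-1}}=0$ back into the per-query term yields $\norm{\Juni_R\,\CStO_{\mathcal{Y}}\,\Uni_i\ket{\Psi_{i-1}}}\le O(\sqrt{i/N})+(\text{accumulated error})$. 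Squaring --- which produces cross terms between the $\sqrt{i/N}$ piece, the $i/\sqrt{N-i}$ piece and the $i^2/(N-i)$ piece --- summing over $i=1,\dots,q$, and using $N-i\ge N-q$, $\sum_{i\le q}i\le\tfrac{q(q+1)}{2}$ and the analogous bounds for higher powers, produces the three terms $\frac{2q(q+1)}{N}$, $\frac{3q^2(q+1)^2}{N\sqrt{N-q}}$, $\frac{2q^3(q+1)^3}{N(N-q)}$ once the constants are tracked.

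The main obstacle is the error-branch analysis in step (ii) and its propagation: the only reason the bound is not simply $O(q^2/N)$ is that the compressed oracle samples a fresh output uniformly over all of $\mathcal{Y}$ rather than over the $\approx N-i$ still-allowed values, and performs a conditional ``remove entry when its Fourier value is $0$'' branch --- keeping careful track of the state distance this introduces, and of how it accumulates across the $q$ queries, is exactly what produces the $\sqrt{N-q}$ and $N-q$ denominators, and getting the small constants $2$, $3$, $2$ requires doing the amplitude bookkeeping precisely rather than through blunt triangle inequalities. Everything else --- the formula for $\PR[\Find]$, the reduction to per-query terms, the hybrid argument, and that $\Juni_R$ annihilates the good state --- is routine given Section~\ref{sec:oracles} and Section~\ref{sec:comp-o2h}.
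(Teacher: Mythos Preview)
Your proposal is correct and follows essentially the same strategy as the paper's proof: decompose $\PR[\Find]$ as a sum of per-query norms, define an explicit ``good'' state annihilated by $\Juni_R$, bound its distance to the actual post-measurement state via a hybrid/recursion over queries, and compute the one-query action explicitly. The paper organises the one-query analysis by the database operation ($\ADD/\UPD/\REM/\NOT$) rather than good/bad/error, and---one practical point---carries out the calculation in the \emph{phase} picture $\CPhO_{\mathcal{Y}}$ rather than $\CStO_{\mathcal{Y}}$, which it notes makes the amplitude bookkeeping substantially cleaner; your rough per-step error scales ($i/\sqrt{N-i}$, $i^2/(N-i)$) are slightly off from what the detailed computation gives (the dominant step error is $\sim i^{3/2}/\sqrt{N(N-i)}$), so if you execute the plan in the standard basis you should expect the algebra to be heavier and the exponents to shift when you actually track the constants.
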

\begin{proof}
	The proof of this lemma is a fairly simple application of the techniques from \cite{CFHL21}.  Let $\Pi_R$ be the projector onto the subspace of the database register on which $R$ holds, and let $\Pi_{\le \ell}$ be the projector onto the databases of size at most $\ell$. We can express the $ \Find $ probability as
	\begin{align}
		&\PR[\Find:\advA[\CStO_{\mathcal{Y}}\setminus (R_{\preim}\cup R_{\coll}) ]]\nonumber\\
		&=\sum_i\PR[\Find\text{ at query }i:\advA[\CStO_{\mathcal{Y}}\setminus (R_{\preim}\cup R_{\coll}) ]]\\
		&=\sum_{i=1}^q \left\|\Pi_R\CStO_{\mathcal Y}\Uni_i\left(\prod_{j=i-1}^1(\id-\Pi_R)\CPhO_{\mathcal Y}\Uni_j\right)\ket{\Psi_0}\right\|^2\\
		&=\sum_{i=1}^q \left\|\Pi_R\CStO_{\mathcal Y}(\id-\Pi_R)\Pi_{\le i-1}\Uni_i\left(\prod_{j=i-1}^1(\id-\Pi_R)\CStO_{\mathcal Y}\Uni_j\right)\ket{\Psi_0}\right\|^2\\
		&\le\sum_{i=1}^q \left\|\Pi_R\CStO_{\mathcal Y}(\id-\Pi_R)\Pi_{\le i-1}\right\|_\infty^2\label{eq:intermediate},
	\end{align}
where $ \ket{\Psi_0} $ is the initial state of the adversary and the oracle and $ \Uni_i $ are the adversary's unitaries.

We can now apply Corollary 5.26 in \cite{CFHL21}, where $\mathsf{P'}$ are the databases of size at most $i-1$ not contained in $(R_{\preim}\cup R_{\coll})$ and $\mathsf{P}$ are the databases in $(R_{\preim}\cup R_{\coll})$, to obtain    
\begin{equation}\label{eq:CFHL}
	\left\|\Pi_R\CStO_{\mathcal Y}(\id-\Pi_R)\Pi_{\le i-1}\right\|_\infty\le 2e\sqrt{10\left(\frac{1}{N} + \frac{i-1}{N}\right)} = 2e\sqrt{10\left(\frac{i}{N}\right)},
\end{equation}
where $e$ is Euler's number. Here we have used $\frac{1}{N}$ as an upper bound on the probability that the $y$-value of the new database entry is $0$ (resulting in membership of $R_{\preim}$), whereas $\frac{i-1}{N}$ is an upper bound on the probability that this $y$-value matches one of the $i-1$ $y$-values already contained in the database (resulting in membership of $R_{\coll}$).

We can hence continue to bound
	\begin{align}
		&\PR[\Find:\advA[\CStO_{\mathcal{Y}}\setminus (R_{\preim}\cup R_{\coll}) ]]\nonumber\\
		&\leq \sum_{i=1}^q \left\|\Pi_R\CStO_{\mathcal Y}(\id-\Pi_R)\Pi_{\le i-1}\right\|_\infty^2
		\le\frac{296}{N}\sum_{i=1}^q i =\frac{296 q(q+1)}{2N}\le 296\frac{q^2}{N},
	\end{align}
\end{proof}
where in the final inequality we have assumed $q\ge 1$ as for $q=0$ there is nothing to prove.

In Appendix~\ref{sec:full-proof} we give a direct proof with a better constant at the expense of some lower-order terms.
The above bound 
is just the classical collision (and preimage) finding bound up to a constant factor. Intuitively, this is because the coherence needed by the optimal quantum search algorithms (e.g.\ the Grover algorithm \cite{grover1996fast}) is broken by the repeated measurement.

Finally let us provide a clearer explanation for how to use our technique. Whenever we lazy sample a uniform function in the (classical) game-playing framework we have some bad events, for example the newest output collides with some previous one. To translate the proof to the quantum case we reformulate the bad events to the language of relations and use a punctured compressed oracle. Hybrid jumps are bounded with the O2H lemma and $ \PR[\Find] $ with (a version of) Lemma~\ref{lem:prfind-preim-coll}. Note that only this technique allows us to deal with collisions in quantumly lazy sampled functions. The only other paper that considers this problem is \cite{zhandry2018record} but there are some things that are a bit unclear in the proof of the important lemma there.

\section{Quantum Security of the Sponge Construction}\label{sec:q-indifferentiability}
We use our methods to show a detailed proof of quantum indifferentiability of the sponge construction when used with a random function as the internal function. 
In Appendix~\ref{sec:collapsingness} we prove that quantum indifferentiability implies collapsingness. 

After introducing the sponge construction in the next section, we present two proofs of indifferentiability of the sponge construction. The first proves classical security and the second quantum security. We present two proofs to simplify reading the quantum proof, it follows the same reasoning as the classical one. We also want to highlight how similar these proofs are, this is what we consider to be one of the main advantages of our quantum game-playing framework. In our framework all proofs of quantum indifferentiability can follow the same reasoning and very similar steps as the classical version.

Before we proceed let us remind the reader of the main concepts, that are necessary to follow the proof of quantum indifferentiability.
The central object of the proof are punctured oracles, defined in Def.~\ref{def:punctured}. They play the role of subroutines that lazy-sample functions and output ``True'' when a bad event occurs. Readers familiar with the original game-playing framework \cite{bellare2004code} will recognize the crucial subroutines of the classical games. 
Additionally, punctured oracles are objects that allow to condition probabilistic events on some aspects of quantum queries done by the adversary. This useful feature allows us to sometimes use arguments from the classical proof in the quantum one.

A punctured oracle is built using the compressed-oracle framework and formally includes a quantum database register, as described in detail in section~\ref{sec:oracles}. Nonetheless these details are not necessary to follow the contents of this section. The only two things to keep in mind are that in general the adversary can make quantum queries to the primitives and that the responses of queries are saved in the adversary's quantum register $ \ket{s,v} $, where $ s $ is the query and $ v $ is any value in the codomain of the queried function.

The reason we use punctured oracles is that they allow to use the One-way To Hiding (O2H) lemma. This is an extremely useful tool for bounding the distinguishability advantage of two quantum games. We cover this lemma in details in section~\ref{sec:comp-o2h}. Technically the most demanding part of using the O2H lemma is bounding the probability of any puncturing measurement succeeding (we call this event $ \Find $). We compute a bound on $ \pr{\Find} $ useful in the quantum indifferentiability proof for sponges in section~\ref{sec:pr-find}.

The second distinguishability bound that we use is shown in Lemma~\ref{lem:almost-id-gms-dist1}. This is a relatively simple statement, that is true for games that are almost identical (Def.\ref{def:almost-id-or1}).

\subsection{Sponge Construction}
The sponge construction is used to design variable-input-length and variable-output-length functions. It works by applying the \emph{internal function} $ \phif $ multiple times on the \emph{state} of the function.
In Algorithm~\ref{alg:spongegen} we present the definition of the sponge construction, which we denote with $\sponge$~\cite{bertoni2007sponge}.
The internal state\footnote{Our result also holds for arbitrary finite sets $ \seta\times\setc $, where additionally $ \seta $ is an Abelian group.} $s = (\bar{s}, \hat{s}) \in\bits{r}\times\bits{c}$ of $ \sponge $ consists of two parts: the \emph{outer} part $\bar{s} \in \bits{r}$ and the \emph{inner part} $\hat{s} \in \bits{c}$. The logarithm of the number of possible outer parts $r$ is called the \emph{rate} of the sponge, and $c$ is called  \emph{capacity}. Naturally the internal function is a map $\phif:\bits{r}\times\bits{c}\to\bits{r}\times\bits{c}\symbolindexmark{\phif}$. To denote the internal function with output limited to the first $ r $ bits and the last $ c $ bits we use the same notation as for states, $ \phifbar\symbolindexmark{\phifbar} $ and $ \phifhat\symbolindexmark{\phifhat} $ respectively. By $ \left(\bits{r}\right)^* $ we denote the strings consisting of an arbitrary number of $ r $-bit blocks.
By $\pad:\bits{*}\to\left(\bits{r}\right)^* \symbolindexmark{\pad}$ we denote a padding function: an efficiently computable injection such that $ \abs{\pad(m)}\geq r $ and that the last bit of $ \pad(m) $ is never $ 0 $ (this ensures injectivity for inputs of different lengths). By $\abs{p}_r$ we denote the number of $ r $-bit blocks in $p$ and by $ \lfloor p\rfloor_i $ we denote the $ i $-th $ r $-bit block of $ p $.
The function constructed in that way behaves as follows, $ \sponge_{\phif}:\bits{*}\times\mathbb{N}\to\bits{*} $, where $ \bits{*}:=\bigcup_{n=0}^{\infty}\bits{n} $.
\iffullversion
In Fig.~\ref{fig:sponge} we present a scheme of the sponge construction evaluated on input $m$.
\begin{figure}[hbt]
\begin{center}
\includegraphics[scale=0.6]{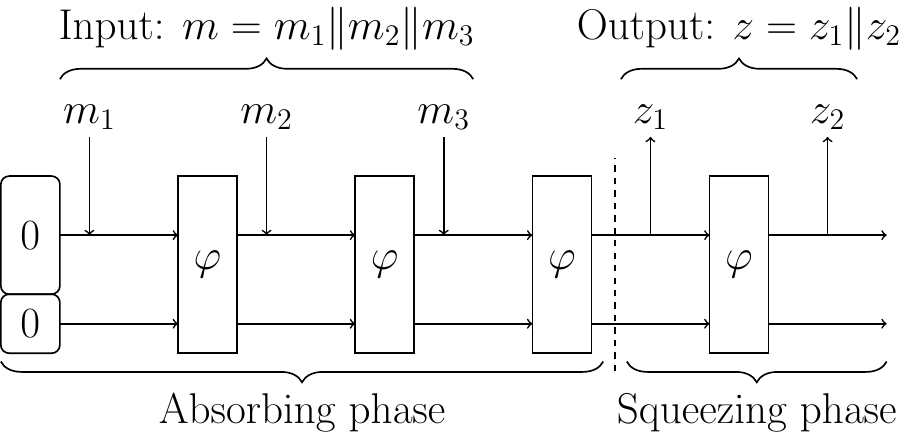}
\end{center}
\caption{A schematic representation of the sponge construction: $\sponge_{\phif}(m_1\|m_2\|m_3)=z_1\|z_2$.\label{fig:sponge}}
\end{figure}
\fi

For a set $ \mathcal{S}\subseteq \bits{r+c} $, by $ \overline{\mathcal{S}} $ we denote the outer part of the set: a set of outer parts of elements of $ \mathcal{S} $. Similarly by $ \widehat{\mathcal{S}} $ we denote the inner part of the set. We use similar notation for quantum registers holding a quantum state in $ \Hil_{\bits{r+c}} $: $ \overline{Y} $ is the part of the register holding the outer parts and $ \widehat{Y} $ holds the inner parts.

\begin{algorithm}[hbt]
\DontPrintSemicolon
\SetKwInOut{Input}{Input}\SetKwInOut{Output}{Output}
\Input{$m \in \bits{*}$, $\ell \geq 0$.}
\Output{$z\in \bits{\ell}$}
\BlankLine
$p:=\pad(m)$\;
$s:=(0,0)\in\bits{r}\times\bits{c}$.\;
\For(\tcp*[f]{Absorbing phase}){$i=1$ \textnormal{to} $\abs{p}_r$}{
$s:= (\bar{s} \xor \lfloor p\rfloor_i ,\hat{s})$\;
$s :=\phif (s)$\;
}
$z:=\bar{s}$\tcp*[r]{Squeezing phase}
\While{$|z| < \ell$}{
$s:= \phif(s)$\;
$z:=z\lVert \bar{s}$\;
}
Output $z$\;
\caption{$\sponge_{\phif}[\pad,r,c]\symbolindexmark{\sponge}$}\label{alg:spongegen}
\end{algorithm}

An important feature of the sponge construction is that one can associate to the internal function $ \phif $ a graph $G=(\vertices,\edges)$ \cite{bertoni2007sponge}. It is called the \emph{sponge graph}; The set of nodes $\vertices:=\bits{r+c}\symbolindexmark{\vertices}$ corresponds to all possible states of the sponge. A directed edge connects any two nodes $ (s,t) $ whenever $ \phif(s)=t $, hence there are $ 2^{r+c} $ edges in $ \edges\symbolindexmark{\edges} $. From each node starts exactly one edge.
We group the nodes with the same inner-part value into \emph{supernodes}, so that we have $2^c$ supernodes and each such supernode consists of $2^r$ nodes. Edges between nodes are also edges between supernodes.

Whenever the adversary queries $ \sponge $, she starts at the $ (0^r,0^c) $ node. This node is called the \emph{root}. Next the first $ r $-bit block $ \lfloor p\rfloor_1 $ in the padded message $ p=\pad(m) $ is added to the outer part of the state and queried to the internal function $ \phif(\lfloor p\rfloor_1,0^c)=s_2 $. The node $ s_2 $ is the node in the edge $ ((\lfloor p\rfloor_1,0^c),s_2) \in\edges$. The same situation repeats for all blocks in $ p $, during the absorbing phase. When $ \sponge $ starts generating output, we no longer modify the state, or just add $ 0^r $ to the outer part.
Note that knowing just $ p $ and $ G $ we can get to the last node traversed by $ \sponge_{\phif}(m) $. This leads us to the definition of a \emph{sponge path}.
\begin{defi}[Sponge Path, Definition~3 in \cite{bertoni2008indifferentiability}]
	First, the empty string is a sponge path to the node $ (0^r,0^c) $. Then, if $ p $ is  a sponge path to node $ s=(\bar{s},\hat{s}) $ and there is an edge $ (\bar{s}\xor a\|\hat{s},t) $ in the sponge graph $ G $, $ p'=p\| a $ is a sponge path to node $ t $.
\end{defi}

Given the above definition, let us say that if $ p $ is a sponge path to $ s $, then we define a function $ \symbolindexmark{\funPath} $
\begin{align}
	\funPath(s,G):=p. \label{eq:def-funPath}
\end{align}
The output of the above function is the input to the construction $ \sponge_{\phif}(.,\ell=r) $ that yields the output $ \bar{s} $.

When we talk about the simulator in a proof of indifferentiability, we define the \emph{simulator graph}. The graph kept by the simulator differs from the sponge graph discussed above by the number of edges in it. As the simulator lazy samples the internal function $ \phif $ the set of edges $ \edges $ grows by at most one edge per one adversary's query. Other than that, all definitions above hold for the simulator graph as well. We refer to the simulator graph $ G $ as just the (sponge) graph whenever it is clear from context.

A supernode is called \emph{rooted} if there is a path (a regular path that is just a set of edges connected by the end-start nodes) leading to it that starts in the root (the $0$-supernode). The set $\setr\symbolindexmark{\setr}$ is the set of all rooted supernodes in $ G $.
By $\sete\symbolindexmark{\sete}$ we denote the set of supernodes with a node with an outgoing edge.

A simulator graph is called \emph{saturated} if $ \setr\cup\sete=\bits{c} $. It means that for every inner state in $\bits{c}$ there is an edge in $G$ that leads to it from $ 0^c $ (the root) or leads from it to another node. Saturation will be important in the proof of indifferentiability as the simulator wants to pick outputs of $ \phif $ without colliding inner parts (so not in $ \setr $) and making the path leading from $ 0^c $ to the output longer by just one edge (so not in $ \sete $).

The simulators defined in the proofs in this section are implicitly stateful. They maintain a classical or quantum state containing a database of the adversary's queries and the simulator's outputs. Using  that database, the simulator can always construct a sponge graph containing all the current knowledge of $ \phif $.

For the proof of indifferentiability we also need an upper bound on the probability of finding a collision in the inner part of outputs of a uniformly random function $ f:\bits{r+c}\to\bits{r+c} $. Considering how $ \sponge $ is defined we want a bound on finding collisions and zero-preimages. 
We define the bound as a function of the number of queries $ q $ to $ f $:
\begin{align} \label{eq:fcoll}
	f_{\coll}(q):= \frac{q(q+1)}{2^{c+1}},
\end{align}
the bound can be derived in the standard way. The probability that any classical algorithm finds a collision or a preimage of zero in $ [N] $ after $ q $ queries is:
\begin{align}
	\pr{\coll\cup\preim\gets\advA}\leq \sum_{i=1}^{q}\frac{i}{N} =\frac{q(q+1)}{2N},
\end{align}
where we use the union bound and note that after $ i $ queries the adversary can either find the preimeage of zero or hit any of the previous outputs, producing a collision. For a more detailed derivation we refer to Appendix~A.4 in \cite{katz2014introduction}.

As the sponge construction is used to design variable-input and variable-output functions we define the random oracle
\begin{align}\label{eq:random-oracle}
	\Ho:\bits{*}\times\mathbb{N}\to\bits{*}
\end{align}
 accordingly. A random oracle grants access to a function sampled from distribution $ \mathfrak{R} $ on functions $ \bits{*}\times\mathbb{N}\to\bits{*} $, that is defined as follows: To sample a function $h\gets \mathfrak{R}$ we
\begin{itemize}
	\item choose $g$ uniformly at random from $\{g:\bits{*}\to\bits{\infty} \}$, where by $\bits{\infty}$ we denote the set of infinitely long bitstrings,
	\item for each $(x,\ell)\in\bits{*}\times\mathbb{N}$ set $h(x,\ell):=\left\lfloor g(x) \right\rfloor_{\ell}$, that is, output the first $\ell$ bits of the output of $g$.
\end{itemize}
In the following section, we omit the second input and we mean that we ask for a single letter $ \Ho(x)=y\in\bits{r} $.

\subsection{Classical Indifferentiability of Sponges with Random Functions}
In the game-playing proofs and Algorithms~\ref{alg:sponge-clas-sim-tra} and~\ref{alg:sponge-quant-sim-tra} described in this section we use the following convention: every version of the algorithm executes the part of the code that is \textbf{not boxed} and among the boxed statements only the part that is inside the box in the color corresponding to the color of the name in the definition.

First we present a slightly modified proof of indifferentiability from \cite{bertoni2008indifferentiability}. We modify the proof to better fit the framework of game-playing proofs. It is not our goal to obtain the tightest bounds nor the simplest (classical) proof. Instead, our classical game-playing proof paves the way to the quantum security proof which is presented in the next section.
\begin{thm}[$ \sponge $ with functions, classical indifferentiability]\label{thm:classindiff-trans}
	$\sponge_{\phif}[\pad,r,c]$ calling a random function $ \phif $ is $(q,\eps)$-indifferentiable from a random oracle, Eq.~\eqref{eq:random-oracle}, for \textnormal{classical} adversaries for any $q<2^c$ and $\eps =8\frac{q(q+1)}{2^{c+1}}$.
\end{thm}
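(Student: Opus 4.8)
The plan is to follow the game-playing strategy of Bertoni et al.~\cite{bertoni2008indifferentiability}, organised so that it transfers cleanly to the quantum setting treated in the next section. We exhibit a stateful simulator $\Sim$ (the one of Algorithm~\ref{alg:sponge-clas-sim-tra}, read with only the black boxes) that maintains the simulator graph $G$ of all edges it has committed to so far. On a public query $s=(\bar s,\hat s)$ it proceeds as follows: if $\hat s\in\setr$ (the query node is rooted) and $G$ is not yet saturated, i.e.\ $\setr\cup\sete\neq\bits c$, it samples the inner part $\hat t$ of the answer uniformly from $\bits c\setminus(\setr\cup\sete)$, computes $p:=\funPath(s,G)$, sets the outer part $\bar t:=\Ho(x)$ if $p=\pad(x)$ for some (necessarily unique) $x$ and samples $\bar t$ uniformly otherwise; in every other case it answers with a fresh uniform $t=(\bar t,\hat t)$. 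In all cases the edge $(s,t)$ is recorded in $G$. We then interpolate between the real world \GAME{0}, namely $\advD[\sponge_{\phif}[\phif],\phif]$, and the ideal world \GAME{4}, namely $\advD[\Ho,\Sim[\Ho]]$, by a short chain of games, bounding each step with the fundamental lemma of game-playing, Lemma~\ref{lem:clas-game-play}.

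In \GAME{1} we replace $\phif$ by a lazily sampled uniform function that additionally records the graph $G$ and sets a flag $\Bad$ whenever a freshly sampled inner part lands in $\setr\cup\sete$ or whenever $G$ would become saturated; since no output is altered, \GAME{1} and \GAME{0} are perfectly indistinguishable. In \GAME{2} we reprogram, for rooted non-saturated queries with $p=\funPath(s,G)=\pad(x)$, the outer part of the answer to $\Ho(x)$; conditioned on $\neg\Bad$, distinct queries reconstruct distinct padded messages (sponge paths are unique because no fresh inner part ever lands on an already occupied supernode, and $\pad$ is injective), so $\Ho(x)$ is still a fresh uniform value and \GAME{2} is identical-until-$\Bad$ to \GAME{1}. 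In \GAME{3} we change the inner-part sampling for those queries from uniform over $\bits c$ to uniform over $\bits c\setminus(\setr\cup\sete)$; this again deviates from \GAME{2} only when $\Bad$ occurs. Now \GAME{3} \emph{is} the ideal-world simulator composed with a lazily sampled $\Ho$, and it remains only to replace the private interface $\sponge_{\Sim}$ by $\Ho$ itself to obtain \GAME{4}. Conditioned on $\neg\Bad$ this is an exact rewrite: when $\sponge_{\Sim}$ absorbs $\pad(m)$ it walks along a path from the root, so all of its internal queries are rooted and take the consistent branch, the path it reconstructs after the last absorbing block is exactly $\pad(m)$, and hence every squeezed output block of $\sponge_{\Sim}(m)$ equals the corresponding block of $\Ho(m)$; so \GAME{3} and \GAME{4} agree conditioned on $\neg\Bad$.

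It remains to bound $\PR[\Bad]$. Saturation requires at least $2^c$ committed edges and therefore cannot occur when $q<2^c$. At any point $\setr\cup\sete$ contains $O(q)$ inner states, so a freshly sampled inner part collides with one of them with probability $O(q/2^c)$ per query, and over the at most $q$ queries the total collision probability is bounded by a small multiple of $f_{\coll}(q)=q(q+1)/2^{c+1}$, cf.\ Eq.~\eqref{eq:fcoll} (this is the standard collision-and-zero-preimage bound). Each identical-until-$\Bad$ transition therefore costs at most $\PR[\Bad]$ by Lemma~\ref{lem:clas-game-play}, and the last step costs no more; summing the contributions of the handful of transitions yields $\eps\le 8\,q(q+1)/2^{c+1}$, which is negligible in the security parameter. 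Efficiency of $\Sim$ and validity of the argument for a $q$-query distinguisher (as required by Definition~\ref{def:indifferentiability}, hence for composability, Theorem~\ref{thm:class-compose}) follow because $G$ carries at most $q$ edges and all graph operations --- computing $\setr$, $\sete$ and $\funPath$ --- are polynomial in $q$.

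The main obstacle is the last step, i.e.\ the passage from \GAME{3} to \GAME{4}: one must verify carefully that, conditioned on $\neg\Bad$, (i) sponge paths in $G$ are unique, so that $\funPath$ is well defined and returns the genuine path; (ii) the internal absorbing queries of $\sponge_{\Sim}(m)$ all stay inside $\setr$, so the consistent branch is taken at every step; and (iii) injectivity of $\pad$ prevents distinct messages from being confused --- only then does $\sponge_{\Sim}$ coincide with $\Ho$ exactly. A closely related, pervasive subtlety is checking that in each hybrid the simulator's marginal output distribution is genuinely uniform and independent of the adversary's view, since this is exactly what licenses the identical-until-$\Bad$ claims, and it is precisely to make this hold that the bad events were defined. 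These are the points at which the quantum proof of the next section must invoke the compressed-oracle O2H lemma in place of Lemma~\ref{lem:clas-game-play}.
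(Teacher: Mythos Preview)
Your proof follows essentially the same game-hopping strategy as the paper's, with one ordering difference: you switch the simulator to good-set sampling (your Game~2$\to$3) \emph{before} swapping the private interface $\sponge_{\Sim}$ for $\Ho$ (your Game~3$\to$4), whereas the paper performs the interface swap first while the $\Bad$ flag is still live (paper's Game~4$\to$5, handled via the almost-identical argument at cost $4\PR[\Bad]$) and only afterwards passes to $\mathsf I_6$ (paper's Game~5$\to$6).

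One caveat on your Game~3$\to$4: you argue it is an ``exact rewrite'' because $\sponge_{\Sim}(m)=\Ho(m)$ on outputs, but that is not the whole story. In Game~3 the simulator's graph $G$ also acquires edges from the \emph{internal} $\Sim$-calls made by $\sponge_{\Sim}$, whereas in Game~4 private queries go to $\Ho$ directly and leave $G$ untouched; subsequent \emph{public} calls therefore see different rootedness information and different good sets in the two games. The two views are in fact equidistributed (by the symmetry of sampling distinct inner parts, the order in which those edges are laid down is immaterial), but this is the real content of the step and is not covered by the three checks (i)--(iii) you list. The paper's ordering sidesteps this issue by keeping the $\Bad$ flag active through the swap. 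Your conservative accounting to $8f_{\coll}(q)$ is fine either way; with your ordering one actually needs fewer copies of $\PR[\Bad]$.
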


\begin{proof}
The proof proceeds in six games that we show to be indistinguishable.
We start with the real world: the public interface corresponding to the internal function $\phif$ is a random transformation and the private interface is $\sponge_{\phif}$.
Then in a series of games we gradually change the environment of the adversary to finally reach the ideal world, where the public interface is simulated by the simulator and the private interface is a random oracle $\Ho$.
The simulators used in different games of the proof are defined in Alg.~\ref{alg:sponge-clas-sim-tra}, the index of the simulator corresponds to the game in which the simulator is used. Explanations of the simulators follow.
\begin{algorithm}
	\caption{Classical $\Sim_2,\cboxed{darkgreen}{\Sim_3},\cboxed{red}{\Sim_4},\cboxed{blue}{\mathsf{I}_6}\symbolindexmark{\Sim}$, functions}\label{alg:sponge-clas-sim-tra}
\DontPrintSemicolon
\SetKwInOut{Input}{Input}\SetKwInOut{Output}{Output}
\SetKwInOut{State}{State}
\State{current sponge graph $G$}
\Input{$s\in \bits{r+c}$}
\Output{$\phif(s)$}
\BlankLine
\If(\tcp*[f]{new query}){$s \textnormal{ has no outgoing edge}$\label{step:if1}}{ 
\If(\tcp*[f]{$\hat{s}$-rooted, no saturation}){$\hat{s}\in\setr \wedge \setr \cup\sete \neq \bits{c}$\label{step:if2}}{
$\hat{t}\sampuni \bits{c}$, $\cboxed{darkgreen}{\cboxed{red}{\textnormal{ \textbf{if} $\hat{t}\in\setr\cup\sete$, set $\Bad=1$\symbolindexmark{\Bad}}}}$, $ \cboxed{blue}{\hat{t}\sampuni\bits{c}\setminus(\setr\cup\sete)} $ \;
Construct a path to $s$: $p:=\funPath(s,G)$\;
\If{$\exists x: p=\pad(x)$\label{step:if3}}{
$\bar{t}\sampuni\bits{r}$\;
$\cboxed{red}{\cboxed{blue}{\bar{t}:=\Ho(x)}}$\label{line:Ho}\;
}
\Else{
$\bar{t}\sampuni\bits{r}$\;
}
$t:=\bar{t}\|\hat{t}$\;
}
\Else{
$t\sampuni\bits{r+c}$\;
}
Add an edge $(s,t)$ to $\edges$.\;
}
Set $t$ to the vertex at the end of the edge starting at $s$\;
Output $t$\;
\end{algorithm}

\parag{Game 1}
We start with the real world where the distinguisher $\advA$ has access to a random function $\phif:\bits{r+c}\to\bits{r+c}$ and $\sponge_{\phif}$ using this random function.
The formal definition of the first game is the event
\begin{equation}
\GAME{1}:= \left( b=1:b\gets \advA[\sponge_{\phif},\phif]\right).
\end{equation}

\parag{Game 2}
In the second game we introduce the simulator $\Sim_2$---defined in Alg.~\ref{alg:sponge-clas-sim-tra}---that lazy-samples the random function $\phif$. In Alg.~\ref{alg:sponge-clas-sim-tra} we define all simulators of this proof at once, but note that the behavior of $\Sim_2$ is not influenced by any of the conditional ``if'' statements (in lines~\ref{step:if1}, \ref{step:if2}, and \ref{step:if3}), because in the end, the output state $t$ is picked uniformly from  $\bits{r+c}$ anyway. The definition of the second game is
\begin{equation}
\GAME{2}:= \left( b=1:b\gets \advA[\sponge_{\Sim_2},\Sim_2]\right).
\end{equation}
Because the simulator $\Sim_2$ perfectly models a random function and we use the same function for the private interface we have
\begin{align}
\abs{\PR[\GAME{2}]-\PR[\GAME{1}]}=0.
\end{align}

\parag{Game 3}
In the next step we modify $\Sim_2$ to $\Sim_3$. The game is then
\begin{equation}
\GAME{3}:= \left( b=1:b\gets \advA[\sponge_{\Sim_3},\Sim_3]\right).
\end{equation}
We made a single change in $\Sim_3$ compared to $\Sim_2$, we introduce the ``bad'' event $\Bad$ that marks the difference between algorithms. We use this event as the bad event in Lemma~\ref{lem:clas-game-play}.
With such a change of the simulators we can use Lemma~\ref{lem:clas-game-play} to bound the difference of probabilities:
\begin{align}
\abs{\PR[\GAME{3}]-\PR[\GAME{2}]} \leq \PR[\Bad=1].
\end{align}
It is quite easy to bound $\PR[\Bad=1]$ as it is the probability of finding a collision or preimage of the root in the set $\bits{c}$ having made $q$ random samples. Therefore we have that
\begin{align}
\PR[\Bad=1]\leq  f_{\coll}(q),
\end{align}
where $f_{\coll}$ is defined in Eq.~\eqref{eq:fcoll}. The bound is not necessarily tight as not all queries are made to rooted nodes.

\parag{Game 4}
In this step we introduce the random oracle $\Ho$ but only to generate the outer part of the output of $\phif$. The game is defined as
\begin{equation}
\GAME{4}:= \left( b=1:b\gets \advA[\sponge_{\Sim_4},\Sim^{\Ho}_4]\right).
\end{equation}
We observe that if $\Bad=0$ the outputs are identically distributed.
\begin{nrclaim}\label{claim:claim1}
Given that $\Bad=0$ the mentioned games are the same:
\begin{align}
\abs{\PR[\GAME{4}\mid \Bad=0]-\PR[\GAME{3}\mid \Bad=0]} =0.
\end{align}
\end{nrclaim}
\begin{proof}
Note that the inner part is distributed in the same way in both games if $\Bad=0$, so we only need to take care of the outer part of the output. The problem might lie in the outer part, as we modify the output from a random sample to $\Ho(x)$. If $\Bad=0$ then $\hat{t}$ is not rooted and has no outgoing edge, also the whole graph $G$ does not contain two paths leading to the same supernode. Hence, $x$ was not queried before and is uniformly random. This reasoning is made more formal in Lemma~1 and Lemma~2 of \cite{bertoni2007sponge}.
\end{proof}

The two games are identical-until-bad, this implies that the probability of setting $ \Bad $ to one in both games is the same $ \PR[\Bad=1:\GAME{3}] =\PR[\Bad=1:\GAME{4}]$. Together with the above claim we can derive the advantage:
\begin{align}
&\abs{\PR[\GAME{4}]-\PR[\GAME{3}]}  \overset{\textnormal{Claim~\ref{claim:claim1}}}{=} \Bigg\vert \PR[\GAME{4}\mid \Bad=0]\nonumber\\
&\cdot \underset{=0}{\underbrace{\left( \PR[\Bad=1:\GAME{3}] -\PR[\Bad=1:\GAME{4}]]\right) }} \nonumber\\
&+ \underset{\leq 1}{\underbrace{\PR[\GAME{3}\mid \Bad=1]}}\PR[\Bad=1]  +\underset{\leq 1}{\underbrace{\PR[\GAME{4}\mid \Bad=1]}}\PR[\Bad=1] \Bigg\vert\\
&\leq 2\PR[\Bad=1].\label{eq:class-adv-gam43}
\end{align}

\parag{Game 5}
In this stage of the proof we change the private interface to contain the actual random oracle. The simulator is the same as before and the game is
\begin{equation}
\GAME{5}:= \left( b=1:b\gets \advA[\Ho,\Sim^{\Ho}_4]\right).
\end{equation}
Conditioned on $ \Bad=0 $, the outputs of the simulator in Games 4 and 5 act in the same way and are consistent with $\Ho$. 

Note that the inner states are generated by the same pseudocode and the outer states are distributed in the same way.
Moreover conditioned on $ \Bad=0 $ the probabilities of $ \advA $ outputting $ 1 $ are the same.
To calculate the adversary's advantage in distinguishing between the two games we can follow the proof of Lemma~\ref{lem:almost-id-gms-dist1}, with $ \Ho\setminus R_1 $ replaced by $ \GAME{5} $, $ \G\setminus R_2 $ replaced by $ \GAME{4} $, and event $ \Find $ replaced by $ \Bad=1 $. As the derivation of Lemma~\ref{lem:almost-id-gms-dist1} uses no quantum mechanical arguments and the assumption holds---the games are identical conditioned on $ \Bad=0 $---the bound holds:
\begin{align}
\abs{\PR[\GAME{5}]-\PR[\GAME{4}]}\leq 4 \PR[\Bad=1]\leq 4 f_{\coll}(q).
\end{align}

\parag{Game 6}
In the last game we use $\mathsf{I}_6$ (we call it  $ \mathsf{I} $ for \emph{ideal}, that is the world we reach in the last step of the proof), a simulator that does not check for bad events and samples from the ``good'' subset of $ \bits{c} $. The game is 
\begin{equation}
\GAME{6}:= \left( b=1:b\gets \advA[\Ho,\mathsf{I}^{\Ho}_6]\right)
\end{equation}
and the advantage is  
\begin{align}
\abs{\PR[\GAME{6}]-\PR[\GAME{5}]} \leq \PR[\Bad=1]\leq  f_{\coll}(q).
\end{align}
following Lemma~\ref{lem:clas-game-play}.
as the only difference is in code but not outputs.
We included this last game in the proof because $\mathsf{I}_6$ is clearly a simulator that might fail only if $G$ is saturated but this does not happen if $q< 2^c$. Collecting and adding all the differences yields the claimed $\eps = 8 f_{\coll}(q)$. 
\end{proof}

\subsection{Quantum Indifferentiability of Sponges with Random Functions}
In this subsection we prove quantum indifferentiability of the sponge construction with a uniformly random internal function.

In the quantum indifferentiability simulator we want to sample the outer part of inputs of $ \phif $ and the inner part separately, similarly to the classical one. To do that correctly in the quantum case though we need to maintain two databases: one responsible for the outer part and the other for the inner part. We denote them by $ \overline{D} $ and $ \widehat{D} $ respectively. 

At line~\ref{line:Ho} of the classical simulator we replace the lazy sampled outer state by the output of the random oracle. In the quantum case we want to do the same. Unlike in the classical case we cannot, however, save the input-output pairs of the random oracle $\Ho$ that were sampled to generate the sponge graph, as they contain information about the adversary's query input. An attempt to store this data would effectively measure the adversary's state and render our simulation distinguishable from the real world. To get around this issue we reprepare the sponge graph at the beginning of each run of the simulator. To prepare the sponge graph we query $ \Ho $ on all necessary inputs to $ \hat{\phif} $,  i.e.\ on the inputs that are consistent with a path from the root to a rooted node. This is done gradually by iterating over the length of the paths. We begin with the length-0 paths, i.e. with all inputs in the database $\widehat D$ where the inner part is the all zero string. If the outer part of such an input (which is not changed by the application of $\funPath$) is equal to a padding of an input, that input is queried to determine the outer part of the output of $\phif$, creating an edge in the sponge graph. We can continue with length-1 paths. For each entry of the database $\widehat D$, check whether the input register is equal to a node in the current partial sponge graph. If so, the entry corresponds to a rooted node. Using the entry and the edge connecting its input to the root, a possible padded input to $\sponge$ is created using $\funPath$. If it is a valid padding, $\Ho$ is queried to determine the outer part of the output of $\phif$, etc.

In the proof we will make use of the result from Lemma~\ref{lem:prfind-preim-coll}. Let us denote the bound on inner collisions by
\begin{align}
	f_{\coll}^Q(q) := \frac{7 q (q+1)}{2^c},
\end{align}
which is valid for $ q\in O\left(2^{c/3}\right) $.

The main statement of this section is stated below. Noting the distinguishing bound that we prove, we would like to highlight that our result is most probably tight. Roughly, a quantum algorithm for finding inner-collisions in a sponge construction (such a collision would allow to distinguish a sponge from a random oracle) with a random internal function uses $O(|\setc|^{1/3})$ queries. The distinguishing complexity coming from our bounds, stated without limiting the range of $ q $ for them to apply in Lemma~\ref{lem:prfind-preim-coll}, is the matching $\Omega(|\setc|^{1/3})$.
\begin{thm}[$ \sponge $ with functions, quantum indifferentiability]\label{thm:quantumindiff-trans}
	$\sponge_{\phif}[\pad,r,c]$ calling a random function $ \phif $ is $(q,\eps)$-indifferentiable from a random oracle, Eq.~\eqref{eq:random-oracle}, for \textnormal{quantum} adversaries for any $ q\in O\left(2^{c/3}\right) $ and $\eps=  56\frac{q(q+1)}{2^c} + \sqrt{7\frac{q(q+1)^2}{2^c}}$.
\end{thm}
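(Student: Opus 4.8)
The plan is to replay, essentially line for line, the six-game argument in the proof of Theorem~\ref{thm:classindiff-trans}, translating every classical ingredient into its quantum game-playing counterpart. The dictionary is: classical lazy sampling of the internal function $\phif$ becomes the compressed standard oracle $\CStO_{\bits{r+c}}$ of Section~\ref{sec:oracles}; the bad flag ``the freshly sampled inner value $\hat t$ falls in $\setr\cup\sete$'' becomes puncturing the inner part of the compressed oracle on the corresponding relation $R$ in the sense of Definition~\ref{def:punctured}, using the postponed-puncturing convention discussed right after that definition (the puncturing sits on a quantum-controlled branch of the simulator, so the actual measurement must be deferred to the end of $\advD$); the fundamental lemma of game-playing, Lemma~\ref{lem:clas-game-play}, is replaced by the compressed-oracle O2H lemma, Theorem~\ref{thm:comp-o2h}, for the step that introduces the puncturing and by Lemma~\ref{lem:almost-id-gms-dist1} on almost identical punctured oracles (Definition~\ref{def:almost-id-or1}) for the later steps; and the combinatorial estimate $\PR[\Bad]\le f_{\coll}(q)$ is replaced by $\PR[\Find]\le f_{\coll}^Q(q)$, which follows from Lemma~\ref{lem:prfind-preim-coll}. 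The simulators $\Sim_2,\Sim_3,\Sim_4$ are those of Algorithm~\ref{alg:sponge-quant-sim-tra}, differing only in boxed code, exactly as in the classical proof.

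Game~1 is the real world, where $\advD$ has quantum access to a uniformly random $\phif\colon\bits{r+c}\to\bits{r+c}$ and to $\sponge_{\phif}$. In Game~2 I would replace $\phif$ by $\CStO_{\bits{r+c}}$, shared by the public interface and by $\sponge$; since the uniform distribution is product, correctness of the compressed oracle (Theorem~\ref{thm:indist-cfofo-gen}) gives $\abs{\PR[\GAME{2}]-\PR[\GAME{1}]}=0$. At no further cost I split the oracle's database into an outer-part database $\overline{D}$ and an inner-part database $\widehat{D}$, prepared by the two tensor factors of $\Samp_{\distrU}$ that the product structure provides; this is merely a change of representation, and is exactly where the re-formalized compressed oracle of Section~\ref{sec:oracles} is convenient. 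In Game~3 the simulator $\Sim_3$ additionally punctures $\widehat{D}$ on $R$, on the branch where $\hat s$ is rooted and the graph is not yet saturated; Theorem~\ref{thm:comp-o2h} with $R_1=\emptyset$, $R_2=R$ and query depth $d\le q$ gives
\begin{equation*}
\abs{\PR[\GAME{3}]-\PR[\GAME{2}]}\;\le\;\sqrt{(q+1)\,\PR[\Find:\GAME{3}]}\;\le\;\sqrt{(q+1)\,f_{\coll}^Q(q)}\;=\;\sqrt{7\,\frac{q(q+1)^2}{2^c}},
\end{equation*}
the middle inequality being Lemma~\ref{lem:prfind-preim-coll} applied to the recorded inner values (after reducing the graph-dependent relation $R$ to $R_{\preim}\cup R_{\coll}$, together with an elementary bound for the part of $R$ that forbids hitting an already queried inner state).

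From Game~3 on, all games carry the same puncturing, so conditioned on $\neg\Find$ the recorded inner values are distinct and nonzero; the classical reasoning of Claim~\ref{claim:claim1}---i.e.\ Lemmas~1 and~2 of~\cite{bertoni2007sponge}, now read off the compressed database---then shows that the next three games are almost identical in the sense of Definition~\ref{def:almost-id-or1}: Game~4, where the outer part of $\phif$'s output is supplied by the random oracle $\Ho$, which forces the simulator to re-prepare the sponge graph coherently at the start of every call (querying $\Ho$ on all padded inputs consistent with a path from the root, iterating over path length, and uncomputing afterwards, as in Algorithm~\ref{alg:sponge-quant-sim-tra}); Game~5, where the private interface is switched to $\Ho$; and Game~6, where $\Sim$ is replaced by the ideal simulator $\mathsf{I}$ that samples $\hat t$ directly from the good subset of $\bits{c}$ and performs no bad-event check (it can only fail on a saturated graph, impossible for $q<2^c$). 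Each of these three steps is bounded by a fixed multiple of $\PR[\Find]\le f_{\coll}^Q(q)$ via Lemma~\ref{lem:almost-id-gms-dist1} (the last, purely syntactic step also via Theorem~\ref{thm:comp-o2h} with empty $R_2$); arranged with the same total weight as in the classical accounting, these contribute $8 f_{\coll}^Q(q)=56\,q(q+1)/2^c$, and adding the O2H square-root term from the Game~2--3 step yields $\eps\le 56\frac{q(q+1)}{2^c}+\sqrt{7\frac{q(q+1)^2}{2^c}}$.

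The step I expect to be the real obstacle is verifying that the ``almost identical'' condition of Definition~\ref{def:almost-id-or1} genuinely holds across the Game~4--6 transitions, and in particular that the coherent re-preparation of the sponge graph is harmless. Classically one simply stores the $\Ho$-queries that built the graph; quantumly storing them would measure $\advD$'s query register and be detectable, so the graph must be rebuilt and then uncomputed on every simulator call, and one has to check that this recomputation, composed with the deferred puncturing measurement on $\widehat{D}$, leaves the state conditioned on $\neg\Find$ \emph{exactly} equal to the decompressed real-world state. This is where the compressed-oracle formalism, together with the recording-barrier (gentle-measurement-type) estimate inside Lemma~\ref{lem:almost-id-gms-dist1}, does the real work, and where the argument---though structurally parallel to the classical one---needs the most care. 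Two further points are routine but must be tracked: establishing the $\PR[\Find]$ bound for the actual graph-dependent relation $R$ by a reduction to Lemma~\ref{lem:prfind-preim-coll}, and counting queries correctly, since one evaluation of $\sponge$ triggers several internal queries, so $q$ should throughout be read as the number of internal-function queries.
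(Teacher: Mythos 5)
Your proposal follows essentially the same route as the paper's proof: compressed oracle for Game~2 (zero loss), the compressed-oracle O2H lemma for introducing the puncturing with cost $\sqrt{(q+1)f_{\coll}^Q(q)}$, and Lemma~\ref{lem:almost-id-gms-dist1} for the remaining almost-identical transitions, summing to $8f_{\coll}^Q(q)$ plus the square-root term. The only cosmetic difference is bookkeeping: the paper uses five games rather than six, charging $4f_{\coll}^Q(q)$ to each of Games~4 and~5 and absorbing your Game~6 into a closing saturation remark, but the total matches your accounting and the stated~$\eps$.
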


\begin{proof}
Even though we allow for quantum accessible oracles, the proof we present is very similar to the classical case. The proof follows the same structure, the biggest difference is in the simulators that use the compressed oracle to lazy-sample appropriate answers.

We denote by $\Uni_G$ the unitary that acting on $ \ket{0} $ constructs $G$ including edges consistent with queries held by the quantum compressed database from register $ D $. Similarly we define $\Uni_{\setr\cup \sete}$ to temporarily create a description of the set of supernodes that are rooted or have an outgoing edge.

In Alg.~\ref{alg:sponge-quant-sim-tra} we describe the simulators we use in this proof. In the quantum simulators we also make use of the graph representation of sponges. Note however that in a single query we only care about the graph before the query. Due to that fact we can apply the compressed oracle defined in Alg.~\ref{alg:generalCFO} and additionally analyzed in Lemma~\ref{lem:prfind-preim-coll}.
Eq.~\eqref{eq:prfind-coll} provides a bound of the probability of $ \Find $ (as defined in Section \ref{sec:comp-o2h}) in the case of compressed oracles and relations relevant for the sponge construction.

It is important to note that the ''IF'' statements are in fact quantum controlled operations.
In line~\ref{line:puncture-funcs} we apply a punctured compressed oracle controlled on the input and the database; To correctly perform this operation we postpone the measurement to after uncomputing of $G$ and $\setr\cup\sete$ in line~\ref{line:uncomp-funcs}. This procedure is also discussed in the end of Section~\ref{sec:comp-o2h}.

\begin{algorithm}[hbt]
	\caption{Quantum $\boxed{\Sim_2},\cboxed{darkgreen}{\Sim_3},\cboxed{red}{\Sim_4}\symbolindexmark{\Sim}$, functions }\label{alg:sponge-quant-sim-tra}
\DontPrintSemicolon
\SetKwInOut{Input}{Input}\SetKwInOut{Output}{Output}\SetKwInOut{State}{State}
\State{Quantum compressed database register $D$}
\Input{$\ket{s,v}\in \Hil_{\bits{r+c}}^{\otimes 2}$}
\Output{$\ket{s,v\xor\phif(s)}$}
\BlankLine
Locate input $s$ in $ \overline{D} $ and $\widehat{D}$ \label{line:simfun-locate} \tcp*[r]{Using the correct $ \Samp $}
Apply $\Uni_{\setr\cup \sete}\circ \Uni_G$ to register $\widehat{D}$ and two fresh registers\;
\If(\tcp*[f]{$\hat{s}$-rooted, no saturation}){$\hat{s}\in\setr  \; \wedge \; \setr \cup\sete \neq \bits{c}$\label{line:lazysamp-beg-qtr}}{
Apply $\cboxed{black}{\CStO_{\bits{c}}^{X\widehat{Y}\widehat{D}(s)}}$, $\cboxed{darkgreen}{\cboxed{red}{(\CStO_{\bits{c}}\setminus(\setr\cup\sete))^{X\widehat{Y}\widehat{D}(s)}}}$, result: $ \hat{t} $\label{line:puncture-funcs} \tcp*[r]{The red oracle is punctured!}
Construct a path to $s$: $p:=\funPath(s,G)$\;
\If{$\exists x: p=\pad(x)$\label{step:qif3}}{
$\cboxed{black}{\cboxed{darkgreen}{\textnormal{Apply }\CStO_{\bits{r}}^{X\overline{Y}\,\overline{D}(s)}}}$, result: $ \bar{t} $\;
Write $x$ in a fresh register $X_H$, $\cboxed{red}{\textnormal{apply }\Ho^{XX_H\overline{Y}\,\overline{D}(s)}}$, uncompute $ x $ from $ X_H $, result: $ \bar{t}$\;

}
\Else{
Apply $\CStO_{\bits{r}}^{X\overline{Y}\,\overline{D}(s)}$, result: $ \bar{t} $\;
}
$t:=(\bar{t},\hat{t})$, the value of registers $(\overline{D}^Y(s),\widehat{D}^Y(s))$\;
}
\Else{
Apply $\CStO_{\bits{r+c}}^{XY\overline{D}(s)\widehat{D}(s)}$, result: $ t $\label{line:lazysamp-end-qtr}\;
}
Uncompute $G$ and $\setr\cup\sete$\label{line:uncomp-funcs}\;
Output $\ket{s,v\xor t}$\;
\end{algorithm}
An illustration of the simulators in the quantum case is depicted in Fig.~\ref{fig:games-trans}.
\begin{figure}[hbt]
\begin{center}
\includegraphics[width=0.75\textwidth]{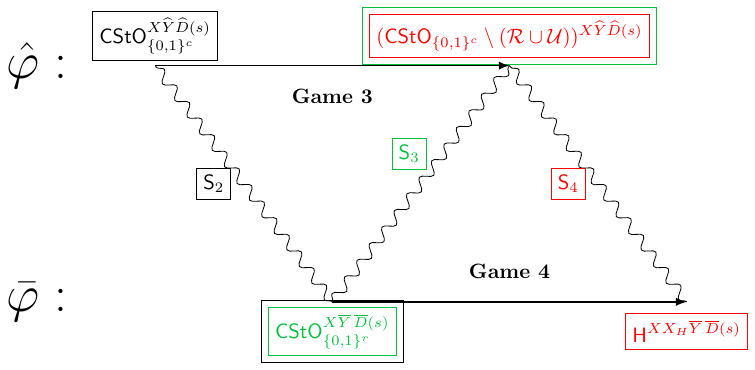}
\caption{Schematics of the simulators defined in Alg.~\ref{alg:sponge-quant-sim-tra}, horizontal arrows signify the change introduced in the labeled game.\label{fig:games-trans}}
\end{center}
\end{figure}

\parag{Game 1}
We start with the real world where the distinguisher $\advA$ has quantum access to a random function $\phif:\bits{r}\times\bits{c}\to\bits{r}\times\bits{c}$ and the $\sponge_{\phif}$ construction using this random function. The definition of the first game is
\begin{equation}
\GAME{1}:= \left( b=1:b\gets \advA[\sponge_{\phif},\phif]\right).
\end{equation}

\parag{Game 2}
In the second game we introduce the simulator $\Sim_2$, defined in Alg.~\ref{alg:sponge-quant-sim-tra}. This algorithm is essentially a compressed random oracle, the only difference are the if statements, note that the behavior of $\Sim_2$ is not influenced by any of the conditional ``if'' statements (in lines~\ref{line:lazysamp-beg-qtr}, and \ref{step:qif3}), because in the end, the output state $t$ is picked uniformly from  $\bits{r+c}$ anyway. The game is defined as:
\begin{equation}
\GAME{2}:= \left( b=1:b\gets \advA[\sponge_{\Sim_2},\Sim_2]\right).
\end{equation}
Because the simulator $\Sim_2$ perfectly models a quantum random function and we use the same function for the private interface we have
\begin{align}
\abs{\PR[\GAME{2}]-\PR[\GAME{1}]}=0.
\end{align}

\parag{Game 3}
In the next step we modify $\Sim_2$ to $\Sim_3$. The game is then
\begin{equation}
\GAME{3}:= \left( b=1:b\gets \advA[\sponge_{\Sim_3},\Sim_3]\right).
\end{equation}
With such a change of the simulators we can use Thm.~\ref{thm:comp-o2h} to bound the difference of probabilities. $\Sim_3$ measures the relation of being an element of $\setr\cup\sete$. This relation is equivalent to $ R_{\preim}\cup R_{\coll} $. The distinguishing advantage is
\begin{align}
\abs{\PR[\GAME{3}]-\PR[\GAME{2}]} \leq  \sqrt{(q+1)\PR[\Find:\advA[\sponge_{\Sim_3},\Sim_3]]}.
\end{align}
Using Lemma~\ref{lem:prfind-preim-coll} we have that
\begin{align}
&\PR[\Find:\advA[\sponge_{\Sim_3},\Sim_3]]\leq f_{\coll}^Q(q).
\end{align}

\parag{Game 4}
In this step we introduce the random oracle $\Ho$ but only to generate the outer part of the output of $\phif$. The game is defined as
\begin{equation}
\GAME{4}:= \left( b=1:b\gets \advA[\sponge_{\Sim_4},\Sim^{\Ho}_4]\right).
\end{equation}
Thanks to the classical argument we have that $\Sim_4$ and $\Sim_3$ are identical until bad, as in Def.~\ref{def:almost-id-or1}. Then we can use Lemma~\ref{lem:almost-id-gms-dist1} to bound the advantage of the adversary
\begin{align}
\abs{\PR[\GAME{4}]-\PR[\GAME{3}]} \leq 4\PR[\Find:\advA[\sponge_{\Sim_3},\Sim_3]]\leq 4 f_{\coll}^Q(q) .
\end{align}

\parag{Game 5}
In this stage of the proof we change the private interface to contain the actual random oracle. In this game the simulator is still $\Sim_4$, the definition is as follows:
\begin{equation}
\GAME{5}:= \left( b=1:b\gets \advA[\Ho,\Sim^{\Ho}_4]\right)
\end{equation}
and the advantage is
\begin{align}
\abs{\PR[\GAME{5}]-\PR[\GAME{4}]} \leq 4\PR[\Find:\advA[\sponge_{\Sim_4},\Sim^{\Ho}_4]] \leq 4 f_{\coll}^Q(q) .
\end{align}
Conditioned on $\neg\Find$, the outputs of the private interface are the same, then the games are identical-until-bad and we can use Lemma~\ref{lem:almost-id-gms-dist1} to bound the advantage of the adversary.

As long as $ \Find $ does not occur and the graph is not saturated the adversary cannot distinguish the simulator from a random function except for the distinguishing advantage that we calculated. Saturation certainly does not occur for $q<2^c$ as the database in every branch of the superposition increases by at most one in every query.
Collecting the differences between games yields the claimed $\eps=8f_{\coll}^Q(q)+\sqrt{(q+1)f^Q_{\coll}(q)}$.
\end{proof}

\section{Conclusions}
We develop a tool that allows for easier translation of classical security proofs to the quantum setting. Our technique shows that given the right proof structure it is relatively easy to prove stronger security notions valid in the quantum world.

It remains open to what degree classical security implies quantum security. An important open problem is specifying features of classical cryptographic constructions that allows constructions to retain their security properties in the quantum world. More concretely, tackling the problem of indifferentiability of other constructions will provide more evidence and possibly lead towards a general answer.

Another open problem is to find a way to quantum lazy sample random permutations. An almost completely new approach has to be devised to tackle this problem as our correctness theorem only applies to local distributions.

\iftsc
\else
\section{Acknowledgments}
The authors thank Gorjan Alagic, Andreas H\"ulsing and Dominique Unruh for enlightening discussions about the superposition oracle technique. Furthermore, the authors thank Dominique Unruh for sharing a draft of \cite{Unruh-forthcoming}. The authors were supported by a NWO VIDI grant (Project No. 639.022.519). We would also like to thank the anonymous reviewers for their insightful comments.
\fi

\tocsectionstar{References}
\printbibliography[heading=none]

\begin{appendix}
	
	\section{Full Proof of Theorem~\ref{thm:indist-cfofo-gen}}\label{sec:full-correctness}
	
	\begin{proof}[Proof of Theorem~\ref{thm:indist-cfofo-gen}]

		We will show that
		\begin{equation}\label{eq:lazy-psifo-psicfo}
			\ket{\Psi_{\FO}}_{AF} = \Dec^D_{\distrD}\ket{\Psi_{\CFO}}_{AD},
		\end{equation}
		where $\ket{\Psi_{\FO}}_{AF}$ is the state resulting from the interaction of $\advA$ with $\FO$ and $\ket{\Psi_{\CFO}}_{AD}$ is the state resulting from the interaction of $\advA$ with $\CFO_{\distrD}$. The state $ \ket{\Psi_{\FO}}_{AF} $ is generated by applying $ \prod_{i=1}^q \Uni_i\circ\FO $  to the $ \ket{\psi_{0}}_A\ket{0^M}_F $, where the $ \ket{\psi_0}_A $ is the initial state of the adversary. In the case of the compressed oracle the state $ \ket{\Psi_{\CFO}}_{AD} $ is generated by applying $ \prod_{i=1}^q \Uni_i\circ\CFO $  to the  $ \ket{\Psi_{0}}_A\ket{(\perp,0)^q}_D $, where $ (\perp,0)^q $ denotes $ q $ pairs $ (\perp,0) $.
		
		We can focus on the state equality from Eq.~\eqref{eq:lazy-psifo-psicfo} because if they are indeed equal, then any adversary's measurement on $ \ket{\Psi_{\FO}}_{AF} $ will yield the output $ b=1 $ with the same probability as on $ \Dec^D_{\distrD}\ket{\Psi_{\CFO}}_{AD} $.
		
		Let us call a database state
		\begin{align}\label{eq:lazy-database-def}
			\ket{\De(\vec{x},\vec{\ize})}:=\ket{x,\eta}_{XY}\ket{x_1,\ize_1}_{D_1}\cdots\ket{x_s,\ize_s}_{D_s}\cdots \ket{\perp,0}_{D_{q}},
		\end{align}
		where $ \vec{x}:=(x_1,x_2,\dots,x_s) $ and $ \vec{\ize}:=(\ize_1,\ize_2,\dots,\ize_s) $ well-formed, if no $ x_i $ in $ \vec{x} $ is $\perp$ and no $ \ize_i $ in $ \vec{\ize} $ is zero.
		
		To prove Eq.~\eqref{eq:lazy-psifo-psicfo} we show that 
		\begin{equation}
			\FO\circ\Dec_{\distrD}\ket{\De(\vec{x},\vec{\ize})}=\Dec_{\distrD} \circ \CFO_{\distrD}\ket{\De(\vec{x},\vec{\ize})}.\label{eq:fodec-deccfo}
		\end{equation}
		This is sufficient for the proof of the theorem as $  \ket{\Psi_{\FO}} $ is generated by a series of the adversary's unitaries intertwined with oracle calls. If we show that $ \FO = \Dec_{\distrD} \circ \CFO_{\distrD}\circ \Dec_{\distrD}^{\dagger}$, when acting on well-formed databases, then everything that happens on the oracle's register side can be compressed.
		Note that as we start from the empty oracle state and only apply the oracle to the oracle register, the database will always be well-formed.
		
		We study the action of $ \Dec_{\distrD} $ on the state in Eq.~\eqref{eq:lazy-database-def}.  To write the output state we need to name the matrix elements of the sampling unitary: $ \left( \Samp_{\distrD}(\mathcal{X}) \right)_{f\vec{\ize}}= a_{f \vec{\ize}}(\mathcal{X})$, the column index consists of a vector of size $M$ with exactly $s$ non-zero entries: $ \vec{\ize}= (0,\dots, 0,\ize_1,0\dots,0, \ize_2,0,\ldots) $. The decompressed state is
		\begin{align}\label{eq:Upsilon}
			\ket{\Upsilon(\vec{x},\vec{\ize})}_F:=& \Dec_{\distrD}\ket{\De(\vec{x},\vec{\ize})} = \sum_{\phi\in\mathcal{F} }\frac{1}{\sqrt{N^{M}}}\sum_{f\in\mathcal{F}} \omega_N^{\phi \cdot f} \; a_{f \vec{\ize}}(\mathcal{X}) \; \ket{\phi_0}_{F(0)}\cdots  \ket{\phi_{M-1} }_{F(M-1)},
		\end{align}
		where $ \phi\cdot f=\sum_{x\in\mathcal{X}}\phi_x f(x)\mod N $ and by $ f(x) $ we denote row number $ x $ of the function truth table $ f $.
		
		Using the fact that $ \Samp_{\distrD} $ is defined for a product distribution, as in Def.~\ref{def:local-samp}, we have that $\Samp_{\distrD}(\mathcal{X})=\Samp_{\distrD}(\mathcal{X}\setminus \{x\})\circ \Samp_{\distrD}(x)$ and we can focus our attention on some fixed $ x $: isolate register $ F(x) $ with amplitudes depending only on $ x $. Let us compute this state after application of $ \FO $, note that $ \FO $ only subtracts $ \eta $ from $ \F(x) $:
		\begin{align}
			\begin{split}\label{eq:Upsilon-split}
				\FO\ket{x,\eta}_{XY} & \ket{\Upsilon(\vec{x},\vec{\ize})}_F = \ket{x,\eta}_{XY}\sum_{\phi',f'\in \mathcal{F}(\mathcal{X}\setminus\{x\}) }\frac{1}{\sqrt{N^{M-1}}} \; \omega_N^{\phi' \cdot f'} \; a_{f' \vec{\ize}'}(\mathcal{X}\setminus\{x\}) \\ 
				&\cdot\ket{\phi_0}_{F(0)}\cdots \left( \sum_{\zeta,z\in[N]} \frac{1}{\sqrt{N}} \; \omega_N^{\zeta\cdot z} \;
				a_{z \ize_x}(x) \; \ket{\zeta-\eta}_{F(x)} \right)\cdots  \ket{\phi_{M-1} }_{F(M-1)},
			\end{split}
		\end{align}
		where $ \vec{\ize}'\in\mathcal{Y}^{M-1} $ denotes the vector of $ \ize_i $ without the row with index $ x $. Note that $ \ize_x=0 $ if $ x $ was not in $ \vec{x} $ before decompression and $ \ize_x\neq 0 $ otherwise.
		
		The harder part of the proof is showing that the right hand side of Eq.~\eqref{eq:fodec-deccfo} actually equals the left hand side that we just analyzed. Let us inspect $ \ket{\De(\vec{x},\vec{\ize})} $ after application of the compressed oracle
		\begin{align}\label{eq:Deprims}
			&\CFO_{\distrD}\ket{x,\eta}_{XY}\ket{\De(\vec{x},\vec{\ize})}_D = \ket{x,\eta}_{XY} \nonumber\\
			&\cdot \left( \sum_{\tilde{\ize}_x\neq 0} \alpha(x,\eta,\ize_x,\tilde{\ize}_x) \; \ket{\De'_{\ADD/\UPD}}_D + 
			\alpha(x,\eta,\ize_x,0) \; \ket{\De'_{\REM/\NOT}}_D  \right),
		\end{align}
		where $ \tilde{\ize}_x $ is the new value of $ \De^Y(x) $ and $ \ize_x $ is the old content of the database. By $\De'_{\ADD/\UPD}$ we denote the database $ \De(\vec{x},\vec{\ize}) $ with entry $ \tilde{\ize}_x\neq 0 $, it corresponds to $ x $ being added or updated. By $\De'_{\REM/\NOT}$ we denote the database where $ \tilde{\ize}_x=0 $, meaning $ x $ was removed  from $ \De $ or nothing happened. The function $\alpha(\cdot)$ denotes the corresponding amplitudes. 
		
		Before we proceed with decompression of the above state let us calculate the amplitudes $ \alpha $. Again using the definition of $ \Samp_{\distrD} $ we describe the action of the compressed oracle on a single $ x $ step by step.  Below we denote by $ \funRem $ removing $ \ize=0 $ from $ \De $ and by $ \funSub $ subtraction of $ \eta $ from database register $ D^Y $. We start with a database containing $ (x,\ize_x) $, which we can always assume due to line~\ref{line:x-alwaysin-D} in Alg.~\ref{alg:generalCFO}. In the case that $ x $ was not already in $ \De $ we have $ \ize_x = 0 $, otherwise it is the value defined in previous queries.  The simplification we make is to describe $ \CFO_{\distrD}$ acting on a single-entry database. We do not lose generality by that as the only thing that changes for $ q $ larger than one is maintaining proper sorting and padding, which can be easily done (see Appendix~\ref{sec:algorithms} for details). 
		The calculation of $ \CFO_{\distrD}$ on a basis state follows:
		\begin{align}
			& \ket{x,\eta}_{XY}\ket{x,\ize_x}_D  \overset{\Samp_{\distrD}}{\mapsto} \ket{x,\eta}_{XY} \sum_{z\in[N]} \; a_{z\ize_x}(x) \; \ket{x,z}_D  \\
			\overset{\QFT^{D^Y}_N}{\mapsto} &\ket{x,\eta}_{XY} \sum_{z\in[N]}a_{z\ize_x}(x)\sum_{\zeta\in[N]}\frac{1}{\sqrt{N}} \; \omega_N^{\zeta\cdot z} \; \ket{x,\zeta}_D \\
			\overset{\funSub}{\mapsto} &\ket{x,\eta}_{XY} \sum_{z,\zeta \in[N]}a_{z\ize_x}(x)\frac{1}{\sqrt{N}} \; \omega_N^{\zeta\cdot z} \; \ket{x,\zeta-\eta}_D \\
			\overset{\QFT^{\dagger D^Y}_N}{\mapsto}  &\ket{x,\eta}_{XY} \sum_{z,\zeta \in[N]}a_{z\ize_x}(x)\frac{1}{\sqrt{N}} \; \omega_N^{\zeta\cdot z} \sum_{z'\in[N]}\frac{1}{\sqrt{N}} \; \bar{\omega}_N^{z'\cdot(\zeta-\eta)} \; \ket{x,z'}_D \\
			= &\ket{x,\eta}_{XY} \sum_{z \in[N]} a_{z\ize_x}(x) \underset{=\bar{\omega}_N^{-z\cdot\eta} \; \delta(z',z)}{\underbrace{\sum_{z',\zeta \in[N]}\frac{1}{N} \; \omega_N^{\zeta\cdot z} \; \bar{\omega}_N^{z'\cdot(\zeta-\eta)}}} \; \ket{x,z'}_D \label{eq:qft-sub-qft} \\
			\overset{ \Samp_{\distrD}^{\dagger D}(x) }{\mapsto}  &\ket{x,\eta}_{XY} \sum_{z \in[N]}a_{z\ize_x}(x) \; \omega_N^{z\cdot\eta} \sum_{\tilde{\ize}_x\in[N]}\bar{a}_{z\tilde{\ize}_x}(x) \; \ket{x,\tilde{\ize}_x}_D \\
			= &\ket{x,\eta}_{XY}  \sum_{\tilde{\ize}_x\in[N]}  \underset{:=\alpha(x,\eta,\ize_x,\tilde{\ize}_x)}{\underbrace{  \sum_{z \in[N]} a_{z\ize_x}(x) \; \omega_N^{z\cdot \eta} \; \bar{a}_{z\tilde{\ize}_x}(x) }}  \; \ket{x,\tilde{\ize}_x}_D \\
			\overset{ \funRem^D }{\mapsto}  &\ket{x,\eta}_{XY}  \left( \sum_{\ize\in[N]\setminus\{0\}} \alpha(x,\eta,\ize_x,\tilde{\ize}_x) \; \ket{x,\tilde{\ize}_x}_D + \alpha(x,\eta,\ize_x,0) \; \ket{\perp,0}_D \right).
		\end{align}
		In the above equations we have defined $\alpha$ as
		\begin{align}
			\alpha(x,\eta,\ize_x,\tilde{\ize}_x) :=  \sum_{z\in[N]} a_{z\ize_x}(x) \; \bar{a}_{z\tilde{\ize}_x}(x) \; \omega_N^{z\cdot\eta}.
		\end{align}
		
		After decompressing the state from Eq.~\eqref{eq:Deprims}, the resulting database state will be $ \sum_{\tilde{\ize}_x\neq 0} \alpha(x,\eta,\ize_x,\tilde{\ize}_x) \; \ket{\Upsilon(\De'_{\ADD/\UPD})} +  \alpha(x,\eta,\ize_x,0) \; \ket{\Upsilon(\De'_{\REM/\NOT})}_D $, where we overload notation of $ \ket{\Upsilon(\vec{x},\vec{\ize})} $ to denote that $ (\vec{x},\vec{\ize})$ consists of values in the respective databases.
		We can write down this state in more detail using Eq.~\eqref{eq:Upsilon-split}:
		\begin{align}\label{eq:Upsilon-split-CFO}
			&\Dec_{\distrD} \circ \CFO_{\distrD}\ket{x,\eta}_{XY}\ket{\De(\vec{x},\vec{\ize})}_D \nonumber\\
			&= \sum_{\phi',f'\in \mathcal{F}(\mathcal{X}\setminus\{x\}) }\frac{1}{\sqrt{N^{M-1}}} \; \omega_N^{\phi' \cdot f'} \; a_{f' \vec{\ize}'}(\mathcal{X}\setminus\{x\}) \; \ket{\phi_0}_{F(0)} \cdots \nonumber \\ 
			&\cdot \left( \sum_{\tilde{\ize}_x\neq 0}\alpha(x,\eta,\ize_x,\tilde{\ize}_x) \sum_{\zeta,z\in[N]} \frac{1}{\sqrt{N}}  \omega_N^{\zeta\cdot z}  a_{z \tilde{\ize}_x}(x)  \ket{\zeta}_{F(x)} \right.\nonumber\\
			&\left.+\alpha(x,\eta,\ize_x,0) \sum_{\zeta,z\in[N]} \frac{1}{\sqrt{N}}  \omega_N^{\zeta\cdot z}  a_{z 0}(x)  \ket{\zeta}_{F(x)}\right) \cdots  \ket{\phi_{M-1} }_{F(M-1)} .
		\end{align}
		In the above equation we notice that 
		\begin{align}
			&\sum_{\tilde{\ize}_x\neq 0}\alpha(x,\eta,\ize_x,\tilde{\ize}_x) \sum_{\zeta,z\in[N]} \frac{1}{\sqrt{N}}  \omega_N^{\zeta\cdot z}  a_{z \tilde{\ize}_x}(x)  \ket{\zeta}_{F(x)} \nonumber\\
			&+\alpha(x,\eta,\ize_x,0) \sum_{\zeta,z\in[N]} \frac{1}{\sqrt{N}}  \omega_N^{\zeta\cdot z}  a_{z 0}(x)  \ket{\zeta}_{F(x)} \nonumber\\
			&=\sum_{\zeta,z\in[N]} \frac{1}{\sqrt{N}} \; \omega_N^{\zeta\cdot z} \sum_{\tilde{\ize}_x\in[N]}\alpha(x,\eta,\ize_x,\tilde{\ize}_x) \; a_{z \tilde{\ize}_x}(x) \; \ket{\zeta}_{F(x)}
		\end{align}
		which comes from the fact that $ \Samp_{\distrD} $ is a unitary and $ \sum_{j\in[N]} a_{ij} \bar{a}_{kj}=\delta_{ik}$ and therefore we have
		\begin{align}
			&\sum_{\tilde{\ize}_x\in[N]}\alpha(x,\eta,\ize_x,\tilde{\ize}_x) \; a_{z \tilde{\ize}_x}(x) \nonumber\\
			& = \sum_{z'\in[N]} \underset{=\delta_{z',z}}{\underbrace{\sum_{\tilde{\ize}_x\in[N]}\bar{a}_{z'\tilde{\ize}_x}(x) \; a_{z\tilde{\ize}_x}(x)}} a_{z'\ize_x}(x) \; \omega_N^{z'\cdot\eta} = a_{z\ize_x}(x) \; \omega_N^{z\cdot\eta}.
		\end{align}
		Together with changing the variable $ \zeta\mapsto \zeta -\eta$ and observing Eq.~\eqref{eq:Upsilon-split} we derive the claimed identity:
		\begin{align}
			&\Dec_{\distrD} \circ \CFO_{\distrD}\ket{x,\eta}_{XY}\ket{\De(\vec{x},\vec{\ize})}_D \nonumber\\
			&=\FO \;\ket{x,\eta}_{XY} \ket{\Upsilon(\vec{x},\vec{\ize})}= \FO\circ\Dec_{\distrD}\; \ket{x,\eta}_{XY}\ket{\De(\vec{x},\vec{\ize})}_D.
		\end{align}
	\end{proof}

	\section{Full Proof of Theorem~\ref{thm:comp-o2h}}\label{sec:full-o2h}

	\begin{proof}[Proof of Theorem~\ref{thm:comp-o2h}]
		The proof works almost the same as the proof of Theorem~1 of \cite{ambainis2018quantum}. Let us state the analog of Lemma~5 from \cite{ambainis2018quantum}.
		
		For the following lemma let us first define two algorithms. 
		Let $\advA^{\Ho}(z)$ be a unitary quantum algorithm with oracle access to $\Ho$ with query depth $d$. Let $Q$ denote the quantum register of $\advA$ and $D$ the database of the compressed oracle $\Ho$. We also need a ``query log'' register $L$ consisting of $d$ qubits. 
		
		Let $\advB^{\Ho,R}(z)$ be a unitary quantum algorithm acting on registers $Q$ and $L$ and having oracle access to $\Ho$.
		First we define the following unitary 
		\begin{equation}
			\V_{R,i}\ket{D}_D\ket{l_1,l_2,\dots,l_d}_L:= \begin{cases}
				\ket{D}_D\ket{l_1,l_2,\dots,l_d}_L & \textnormal{ if } R(\ket{D}_D)=0 \\
				\ket{D}_D\ket{l_1,\dots,l_i\xor 1,\dots,l_d}_L & \textnormal{ if } R(\ket{D}_D)=1
			\end{cases},
		\end{equation}
		where $R(\ket{D}_D)$ denotes the outcome of the projective binary measurement on $D$. The unitary exists for all relations. One can just coherently compute $R(D)$ into an auxiliary register, apply CNOT from that register to $L_i$ and then uncompute $R(D)$. If the relation is efficiently computable, then so is the unitary.
		We define $\advB^{\Ho,R}(z)$ as:
		\begin{itemize}
			\item Initialize the register $L$ with $\ket{0^d}$.
			\item Perform all operations that $\advA^{\Ho}(z)$ does.
			\item For all $i$, after the $i$-th query of $A$ apply the unitary $\V_R$ to registers $D,L$.
		\end{itemize}
		
		Let $\ket{\Psi_{\advA}}$ denote the final state of $\advA^{\Ho}(z)$, and $\ket{\Psi_{\advB}}$ the final state of $\advB^{\Ho,R}(z)$. Let $\tilde{P}_{\textnormal{find}}$ be the probability that a measurement of $L$ in the computational basis in the state $\ket{\Psi_{\advB}}$ returns $l\neq 0^d$, i.e.\ $\tilde{P}_{\textnormal{find}}:=\norm{\mathbbm{1}^{Q,D}\otimes (\mathbbm{1}^L-\ket{0^d}_L\bra{0^d})\ket{\Psi_{\advB}}}^2$.
		
		To deal with relation $ R_1 $ we consider algorithms with all measurements postponed to the end of their operation; Instead of performing the actual measurement we save the outcome into a fresh quantum register---with $ \V_R $ as in Alg.~\ref{alg:measureR}, note that prior to the measurement this fresh register can hold a superposition. Moreover we postpone the measurement of the auxiliary register until the very end of the run of the quantum algorithm. The coherent evaluation of $ R_1 $ happens in both algorithms. In addition, the proof below does not make use of the particular form of the unitaries that are applied between the measurements of $R_2$, so the evaluation of $R_1$ can be absorbed into the compressed oracle unitary.
		
		\begin{lemm}[Compressed oracle O2H for pure states]
			Fix a joint distribution for $\Ho, R, z$. 
			Consider the definitions of algorithms $ \advA $ and $ \advB $ and their quantum states, then
			\begin{equation}
				\norm{\ket{\Psi_{\advA}}\otimes\ket{0^d}_L-\ket{\Psi_{\advB}}}^2\leq (d+1)\tilde{P}_{\textnormal{find}}.
			\end{equation}
		\end{lemm}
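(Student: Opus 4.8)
The plan is to carry over the proof of Lemma~5 of \cite{ambainis2018quantum} essentially verbatim, the only change being that the binary measurement $\Juni_R$ now acts on the compressed oracle's database register $D$ instead of on the adversary's query register. Write $\ket{\Psi_\advA}=\Uni_d\,\Ho\,\Uni_{d-1}\,\Ho\cdots\Uni_1\,\Ho\,\Uni_0\ket{\psi_0}_{QD}$, where $\Uni_j$ is the $j$-th unitary of $\advA$ (acting on $Q$, trivially on $D$ and on $L$) and the $j$-th occurrence of $\Ho$ stands for the product of the $p_j$ compressed-oracle unitaries making up the $j$-th (parallel) query, acting on $QD$ and trivially on $L$. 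Then $\ket{\Psi_\advB}$ is the same product with $\V_{R,j}$ inserted right after the $j$-th query, and $\V_{R,j}=(\id-\Juni_R)\otimes\id_L+\Juni_R\otimes X_{L_j}$, with $X_{L_j}$ the bit flip on the $j$-th qubit of $L$. It is enough to prove the bound for a fixed choice of $\Ho,R,z$.

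First I would identify the ``never-found'' component of $\ket{\Psi_\advB}$. Since $\Uni_j$ and $\Ho$ act trivially on $L$ and each $\V_{R,j}$ only ever touches $L_j$, expanding the product gives $\ket{\Psi_\advB}=\sum_{S\subseteq\{1,\dots,d\}}\ket{\psi_S}\otimes\ket{e_S}_L$, where $\ket{e_S}$ is the computational-basis vector of $L$ supported exactly on $S$ and $\ket{\psi_S}$ is the $QD$-state obtained by inserting $\Juni_R$ after the queries in $S$ and $\id-\Juni_R$ after the others. In particular the $L=0^d$ part is $\ket{\Phi}\otimes\ket{0^d}_L$ with
\begin{equation}
\ket{\Phi}:=\Uni_d(\id-\Juni_R)\Ho\,\Uni_{d-1}(\id-\Juni_R)\Ho\cdots\Uni_1(\id-\Juni_R)\Ho\,\Uni_0\ket{\psi_0},
\end{equation}
and, because $\advB$ is unitary, $\tilde P_{\textnormal{find}}=\sum_{S\neq\emptyset}\norm{\ket{\psi_S}}^2=1-\norm{\ket{\Phi}}^2$. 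As $(\ket{\Psi_\advA}-\ket{\Phi})\otimes\ket{0^d}_L$ lives in the $L=0^d$ subspace while $\ket{\Psi_\advB}-\ket{\Phi}\otimes\ket{0^d}_L$ is orthogonal to it, Pythagoras yields
\begin{equation}
\norm{\ket{\Psi_\advA}\otimes\ket{0^d}_L-\ket{\Psi_\advB}}^2=\norm{\ket{\Psi_\advA}-\ket{\Phi}}^2+\tilde P_{\textnormal{find}},
\end{equation}
so it remains to show $\norm{\ket{\Psi_\advA}-\ket{\Phi}}^2\le d\,\tilde P_{\textnormal{find}}$.

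For this I would use the standard hybrid argument. Let $\ket{H_k}$ be $\ket{\Psi_\advA}$ with $\id-\Juni_R$ inserted after each of the first $k$ queries (and nothing after the rest), so $\ket{H_0}=\ket{\Psi_\advA}$ and $\ket{H_d}=\ket{\Phi}$. Consecutive hybrids differ by exactly one factor $\Juni_R$ sandwiched between unitaries, hence $\norm{\ket{H_k}-\ket{H_{k-1}}}=\norm{\Juni_R\ket{\xi_{k-1}}}$, where $\ket{\xi_{k-1}}$ is the state of the run carrying $\id-\Juni_R$ after queries $1,\dots,k-1$, taken immediately after the $k$-th query. Writing $\ket{\Phi_m}$ for the state of that partially-projected run after the $m$-th projector $\id-\Juni_R$ (so $\ket{\Phi_0}=\Uni_0\ket{\psi_0}$ is normalised and $\ket{\Phi_d}=\ket{\Phi}$), one has $\ket{\xi_{m-1}}=\Ho\ket{\Phi_{m-1}}$ and $\ket{\Phi_m}=\Uni_m(\id-\Juni_R)\ket{\xi_{m-1}}$, so (using that $\Ho$ and $\Uni_m$ are unitary) $\norm{\ket{\Phi_{m-1}}}^2-\norm{\ket{\Phi_m}}^2=\norm{\Juni_R\ket{\xi_{m-1}}}^2$; summing over $m$ telescopes to $\sum_{k=1}^d\norm{\Juni_R\ket{\xi_{k-1}}}^2=1-\norm{\ket{\Phi}}^2=\tilde P_{\textnormal{find}}$. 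The triangle inequality and Cauchy--Schwarz then give $\norm{\ket{\Psi_\advA}-\ket{\Phi}}\le\sum_{k=1}^d\norm{\Juni_R\ket{\xi_{k-1}}}\le\sqrt{d\,\tilde P_{\textnormal{find}}}$, which is the missing piece.

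I do not expect a serious obstacle: the argument is that of \cite{ambainis2018quantum}, with $D$ in place of the query register, and none of the steps after the initial decomposition use the internal form of the unitaries between the $\V_{R,j}$'s. The one point needing care is purely formal: one must make sure that coherently evaluating $R$ on $D$ to realise $\V_{R,j}$ (and, in the way the enclosing theorem invokes this lemma, the analogous coherent evaluation of $R_1$ that is absorbed into the query unitary) genuinely is a unitary on $D$ together with a fresh register --- which is precisely the unitary $\V_R$ built in Algorithm~\ref{alg:measureR}.
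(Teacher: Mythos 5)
Your proof is correct, and its overall skeleton is the same as the paper's (which, like yours, is a transcription of Lemma~5 of AHU with the puncturing measurement moved to the database register): split $\ket{\Psi_{\advB}}$ orthogonally into its $L=0^d$ component $\ket{\Phi}\otimes\ket{0^d}$ and the rest, identify $\tilde{P}_{\textnormal{find}}=1-\norm{\ket\Phi}^2$, and then show $\norm{\ket{\Psi_\advA}-\ket\Phi}^2\le d\,\tilde P_{\textnormal{find}}$. The one place you diverge is in how that last inequality is obtained. The paper introduces the auxiliary algorithm $\advB_{\textnormal{count}}$ and writes $\ket{\Psi_\advA}-\ket\Phi=\sum_{i=1}^d\ketun{\Psi^i_{\textnormal{count}}}$, grouping the $2^d-1$ branches of the expansion $\id=(\id-\Juni_R)+\Juni_R$ by the \emph{number} of queries at which $R$ fired; you instead telescope through hybrids $\ket{H_k}$, which groups the same branches by the \emph{first} query at which $R$ fired. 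Both groupings produce $d$ sub-normalized vectors whose squared norms sum to exactly $\tilde P_{\textnormal{find}}$ (by unitarity of the run in the paper's case, by your telescoping identity in yours), after which the triangle inequality and Cauchy--Schwarz give the identical factor $d$. So the two arguments are interchangeable; yours is perhaps marginally more self-contained since it avoids defining a second algorithm, while the paper's counting formulation is the one that matches AHU verbatim. Your closing remark about realizing $\Juni_R$ via the coherent unitary $\V_R$ of Algorithm~\ref{alg:measureR} (and absorbing the $R_1$-evaluation into the query unitary) is exactly the point the paper also flags before stating the lemma.
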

		\begin{proof}
			This lemma can be proved in the same way as Lemma~5 of \cite{ambainis2018quantum}.
			Here we omit some details and highlight the most important observation of the proof.
			
			First define $\advB_{\textnormal{count}}$ that works in the same way as $\advB$ but instead of storing $L$, the log of queries with $D$ in relation,  it keeps \emph{count}---in register $C$---of how many times a query resulted in $R(\ket{D}_D)=1$. The state that results from running $\advB_{\textnormal{count}}$ is $\ket{\Psi_{\advB_{\textnormal{count}}}}=\sum_{i=0}^d \ketun{\Psi_{\advB_{\textnormal{count}}}^{i}}\ket{i}_C$ and similarly $\ket{\Psi_{\advB}}=\sum_{l\in\bits{d}} \ketun{\Psi_{\advB}^{l}}\ket{l}_L$, where $\ketun{\Psi}$ denotes a sub-normalized state.
			We can observe that $\ket{\Psi_{\advA}}=\sum_{i=0}^d\ketun{\Psi_{\advB_{\textnormal{count}}}^{i}}$. As $\tilde{P}_{\textnormal{find}}$ is the probability of measuring at least one bit in the register $L$ of $\advB$, or counting at least one fulfilling of $R$ in $C$, we have that $\ketun{\Psi_{\advB}^{0^d}}=\ketun{\Psi_{\advB_{\textnormal{count}}}^{0}}$. From the definition we also have $\tilde{P}_{\textnormal{find}}=1-\norm{\ketun{\Psi_{\advB_{\textnormal{count}}}^{0}}}^2$. Using the above identities we can calculate the bound
			\begin{align}
				& \norm{\ket{\Psi_{\advB}}-\ket{\Psi_{\advA}}\otimes\ket{0^d}_L}^2=\norm{\sum_{i=1}^d \ketun{\Psi_{\advB_{\textnormal{count}}}^{i}} }^2 + \tilde{P}_{\textnormal{find}}  \overset{\triangle}{\leq} \left( \sum_{i=1}^d\norm{\ketun{\Psi_{\advB_{\textnormal{count}}}^{i}} }\right)^2 + \tilde{P}_{\textnormal{find}}\nonumber\\
				& \overset{\textnormal{J-I}}{\leq} d \underset{=\tilde{P}_{\textnormal{find}}}{\underbrace{\sum_{i=1}^d\norm{\ketun{\Psi_{\advB_{\textnormal{count}}}^{i}} }^2}} + \tilde{P}_{\textnormal{find}}=  (d+1)\tilde{P}_{\textnormal{find}},
			\end{align}
			where $\triangle$ denotes the triangle inequality and J-I denotes the Jensen's inequality. It is apparent that introducing $\advB_{\textnormal{count}}$ gave us a more coarse-grained look at the initial algorithm $\advB$, resulting in a tighter bound.
		\end{proof}
		
		The rest of the proof of the theorem follows the same reasoning as the proof of Lemma~6 in \cite{ambainis2018quantum} with the modifications shown in the above lemma. Using bounds on fidelity (Lemma~3 and Lemma~4 of \cite{ambainis2018quantum}) and monotonicity and joint concavity of fidelity (from Thm.~9.6 and Eq.~9.95 of \cite{nielsen2002quantum}) one can generalize the results to the case of arbitrary mixed states.
	\end{proof}

	\section{Second Proof of Lemma~\ref{lem:prfind-preim-coll}}\label{sec:full-proof}

	\begin{proof}[Proof of Lemma~\ref{lem:prfind-preim-coll}]
				In Lemma~\ref{lem:prfind-preim-coll} we prove a bound on the probability of finding a database fulfilling the relation of collision or a preimage of 0. This event of finding is denoted by $ \Find $. This relation is crucial in the proof of quantum indifferentiability of the sponge construction. 
				
				The first observation of the proof is that the probability of $\Find$ is the sum of probabilities that after the $ i $’th query we find a database that fulfills the relation given that we did not find such database in any previous query. Hence, the proof focuses on calculating this probability for any $ i $ and then performing the sum. 
				
				It is in general challenging to calculate such probability, and especially challenging to write out the joint state of the adversary and the oracle after $ i $ queries to the punctured oracle. Our solution to this challenge is to define an auxiliary state, called the \emph{good} state $ \ket{\Psi^{\Good}_i} $. This is an auxiliary state of the adversary and the oracle register that is easier to handle from the true state $ \ket{\Phi_i} $ resulting from the interaction of $ \advA $ with the punctured oracle.
				
				In a hybrid argument we introduce a sum over differences between the actual state and the good state. This is the focal point of our proof, if we find this difference, then we can work with the good state and calculate the bound on $ \Find $ much easier. Technically the most difficult part of our proof is bounding the norm of the difference of the actual and the good states, it is the topic of section~\ref{sec:bound-step} and Lemma~\ref{lem:lazy-step}.
				
				The second important technical part is calculating the norm of finding a database that fulfills the relation in the good state after a query. Thankfully, after the analysis of the first problem we mentioned it is a relatively easy task.

		Punctured oracles are defined in Definition~\ref{def:punctured}. We start the proof by specifying some operations involved in that definition. 
		
		\parag{Introduction}
		We define a ``lazy'' approach to calculating the number of non-empty entries in $ D $. In this unitary we focus on using the ordered structure of $ D^X $. 
		We use the phase oracle instead of the standard oracle; in detailed calculations that we do later on in the proof, $ \CPhO $ is easier to deal with than $ \CStO $.
		
		Let us define $ \Queries $, a unitary that outputs the size of a database. It acts on an auxiliary register $ S $ and is controlled on $ D $. This unitary acts exactly like Alg.~\ref{alg:generalCFO} in lines ~\ref{line:queries1} and \ref{line:queries2}: it counts the number of non-padding ($ x\neq \perp $) entries.
		
		The full description of the measurement involves using an auxiliary register $ J $---note Def.~\ref{alg:measureR} measuring a relation---with a bit stating whether the database fulfills the relation. Then the actual measurement is a computational basis measurement of register $ J $. 
		The measurement that we apply after $ \CPhO_{\mathcal{Y}} $, in line~\ref{line:measure-relation} of Alg.~\ref{alg:measureR} is 
		\begin{align}
			&\Juni_{R}:=\id\otimes \ket{1}_J\bra{1},\\
			&\overline{\Juni}_{R}:=\id\otimes \ket{0}_J\bra{0}.
		\end{align}
		
		In the following we focus on the punctured oracle just prior to measurement $ \Juni_R $. A unitary that omits the last step of Alg.~\ref{alg:measureR} in $ \CPhO_{\mathcal{Y}}\setminus R_{\preim}\cup R_{\coll} $ acts on registers $ ADJ $, we define it as
		\begin{align}
			\CPhO_{\mathcal{Y}}\setminus \V_{R} := \Queries^{\dagger} \circ\V_R	\circ\Queries\circ	\CPhO_{\mathcal{Y}},
		\end{align}
		where the unitary $ \V_R $ checks whether the queried values in registers $ D $ fulfill the relation $ R $---in our case it is the collision and preimage relations from Eqs.~\eqref{eq:def-rel-coll}, \eqref{eq:def-rel-preim}---and saves the single bit answer to register $ J $.

		We proceed by rephrasing the definition of $ \PR[\Find:\advA[\CPhO_{\mathcal{Y}}\setminus R_{\preim}\cup R_{\coll}]] $, after that we treat the part specific to our relation. We follow Eq.~\eqref{eq:def-find} to analyze the probability of $ \Find $:
		\begin{align}
			&\PR[\Find:\advA[\CPhO_{\mathcal{Y}}\setminus R_{\preim}\cup R_{\coll}]] = 1- \norm{\left(\prod_{i=q}^{1} \overline{\Juni}_{R}\Uni_i\CPhO_{\mathcal{Y}}\setminus \V_{R} \right) \ket{\Psi_0}\ket{0}_{J} }^2  \\
			&= 1- \norm{\left(\prod_{i=q-1}^{1} \overline{\Juni}_{R}\Uni_j\CPhO_{\mathcal{Y}}\setminus \V_{R}\right)\ket{\Psi_0}\ket{0}_{J} }^2\nonumber\\
			& + \norm{\Juni_{R}\Uni_q\CPhO_{\mathcal{Y}}\setminus \V_{R} \left(\prod_{i=q-1}^{1}\overline{\Juni}_{R}\Uni_j\CPhO_{\mathcal{Y}}\setminus \V_{R}\right)\ket{\Psi_0}\ket{0}_{J} }^2 =\dots =\\
			&= \sum_{i=1}^q \norm{\Juni_{R}\Uni_i\CPhO_{\mathcal{Y}}\setminus \V_{R} \underset{:=\Uni_{i-1}\ket{\Phi_{i-1}}}{\underbrace{ \left( \prod_{j=i-1}^{1}\overline{\Juni}_{R}\Uni_j\CPhO_{\mathcal{Y}}\setminus \V_{R}\right) \ket{\Psi_0}\ket{0}_{J}}}}^2  \\
			&= \sum_{i=1}^q \norm{\Juni_{R}\Uni_i\CPhO_{\mathcal{Y}}\setminus \V_{R} \Uni_{i-1}\ket{\Phi_{i-1}}}^2, \label{eq:prob-find-CPhOminR}
		\end{align}
		where $ \ket{\Psi_0} $ is the initial state of the adversary. Note that in the definition
		\begin{align}
			\ket{\Phi_{i-1}}:= \Uni_{i-1}^{\dagger}\left( \prod_{j=i-1}^{1}\overline{\Juni}_{R}\Uni_j\CPhO_{\mathcal{Y}}\setminus \V_{R}\right) \ket{\Psi_0}\ket{0}_{J}
		\end{align}
		we use $ [\Uni_{i-1},\overline{\Juni}_{R}]=0 $\footnote{The commutator of two operators (matrices) is defined as $ [\mathsf{A},\mathsf{B}]:=\mathsf{A}\mathsf{B}-\mathsf{B}\mathsf{A} $.}.
		Here, the second and third equations follow from the fact that $\|\ket v\|^2=\|\mathsf{P}\ket v\|^2+\|(\mathbbm{1}-\mathsf{P})\ket v\|^2$ for all $\ket v$ and projectors $\mathsf{P}$.
		
		In what follows we analyze $ \norm{\Juni_{R}\Uni_i\CPhO_{\mathcal{Y}}\setminus \V_{R} \Uni_{i-1}\ket{\Phi_{i-1}}}^2 $. Our approach is to propose a state $ \ket{\Psi_{i-1}^{\Good}} $, close to the original $ \ket{\Phi_{i-1}} $, for which bounding $ \norm{\Juni_{R}\Uni_i\CPhO_{\mathcal{Y}}\setminus \V_{R}\Uni_{i-1} \ket{\Psi_{i-1}^{\Good}}\ket{0}_{J}}^2 $ is easy. The intuition behind $ \ket{\Psi_{i-1}^{\Good}} $ is to have a superposition over databases that do not contain $ y=0 $ and are collision free for the queried values.
		
		\parag{The good state}
		To define the good state we specify the set of bad databases $ D\in R $.  For the relation $ R_{\preim}\cup R_{\coll} $ we have
		\begin{align}\label{eq:def-bad-sett}
			&\setb(s)   := [N]^s\setminus \left\{(y_1,\dots,y_s)\in[N]^s: \textnormal{all $ y_i $ are distinct and }\neq 0\right\},\\
			&\setb(1\mid D):=\{y\}_{y\in D^Y}\cup\{0\}.
		\end{align}
		The second set defined above is the subset of the codomain of the sampled function corresponding to the new value creating a collision or being a preimage of $ 0 $.
		To better understand $ \setb(1\mid D) $ let us assume $ D\not \in R $ and $ x $ is some input $ \not\in D^X $. Then $  \setb(1\mid D) $ is the set of $ y $ such that $ D\cup\{(x,y)\}\in R  $. 
		We also define a coefficient $ b(s) $ defined as
		\begin{align}\label{eq:b-setb}
			b(s):=\abs{\setb(1\mid D) }, \textnormal{ where } D\not \in \setb(s-1),
		\end{align} 
		where we use the fact that $ \abs{\setb(1\mid D) } $ depends only on the size of $ D $ and not the actual contents of it.
		We define $ \setb(1\mid D)  $ in a way specific to $ R_{\coll}\cup R_{\preim} $ but the definition can be easily extended to other relations. As examples consider $ R_{\preim} $, then $ b(s)=1 $, there is just one value $ y=0 $ that causes a fresh query to be in relation; For $ R_{\coll} $ we have $ b(s)=s-1 $, the new $ y $ can be any of the previously queried values to make $ D $ fulfill the relation. Finally for our relation $ R_{\preim}\cup R_{\coll} $ we have $ b(s)=s $, database $ D $ consists of $ s-1 $ distinct values that are distinct from $ 0 $, matching any of them or $ 0 $ causes $ D^Y\cup\{y\} $ to be in $ \setb(s) $.
		Throughout the rest of this proof we do not evaluate $ b(s) $, which makes it is easier to reuse the proof for other relations.

		In what follows we write $ \vec{x} $ to denote all the previous inputs asked by the adversary and $ (x,\eta) $ is the last query. The state $ \ket{\Psi_{i,R}^{\Good}}_{AD} $ corresponds to the adversary's state just after the $ i $-th query and before the application of $ \Uni_i $. The size of the database $ s $ depends on whether the new query $ x $ was added to, updated, or removed from the database, it equals $ \abs{\vec{x}\cup\{x\}} $, $ \abs{\vec{x}} $, or $ \abs{\vec{x}\setminus\{x\}} $ respectively. After $ i $ queries $ s $ can range from $ 0 $ to $ i $ and the joint state of $ \advA $ and the oracle can be a superposition over different database sizes. We denote the outputs given to $ \advA $ by $ \vec{y}:= (y_1,\dots,y_s)$.
		When we use set operations on vectors we mean a set consisting of entries of $ \vec{x} $, there are no repetitions in the vector as this is an invariant of the oracle. By $ D(\perp) $ we denote the part of the database containing empty entries. Adversary's work register is denoted by $ A^W $ and its contents by $ \psi(x,\eta,\vec{x},\vec{\eta},w) $, where $ w $ can be any value of finite size.
		We define the good state as:
		\begin{align}\label{eq:def-good-state}
			&\ket{\Psi_{i,R}^{\Good}}_{AD}:=  \sum_{x,\eta,\vec{x},\vec{\eta},w}\alpha_{x,\eta,\vec{x},\vec{\eta},w}\ket{x,\eta}_{A^{XY}}\ket{\psi(x,\eta,\vec{x},\vec{\eta},w)}_{A^W} \nonumber \\
			&\sum_{\vec{y}\not\in\setb(s) } \frac{1}{\sqrt{(N-b(1))(N-b(2))\cdots(N-b(s))}} \omega_N^{\vec{\eta}\cdot \vec{y}} \ket{(x_1,y_1),\dots,(x_s,y_s)}_{D(\vec{x})} \nonumber \\
			& \sum_{y_{s+1},\dots,y_q\in[N]}\frac{1}{\sqrt{N^{q-s}}}\ket{(\perp,y_{s+1}),\dots,(\perp,y_q)}_{D(\perp)}.
		\end{align} 
		In case we have added $ x $ to $ D $, the database above contains $ (x,y_j) $. In the rest of the proof we omit the subscript $ R $, however note that $ \ket{\Psi_{i}^{\Good}} $ does indeed depend on $ R $.
	
		Another way to define the good state is to consider the joint state of the adversary and the non-punctured oracle  just after the $ i $-th query. The good state is then this state after a projection of register $ D $ with $ \overline{\Juni}_{R} $. Normalization of the projected state comes from multiplying each branch corresponding to a given size of the database by an appropriate $\sqrt{ \frac{N^{s}}{(N-b(1))\cdots(N-b(s))} }$ factor. The reason why the good state is normalized is that for a fixed set of queries we can think of definining it as $ \advA $ interacting with the normalized database register using $ \PhO $ instead of $ \CPhO $. This intuition works for every branch of the superposition separately. Now combining all branches together also gives a normalized state, because they origin from a valid interaction of a unitary adversary with $ \CPhO $ (as mentioned in the beginning of this section).

	\parag{Final Bound}
To calculate the probability of measuring $ R $, Eq.~\eqref{eq:prob-find-CPhOminR} implies
\begin{align}
	\PR[\Find]  \leq 
	\sum_{i=1}^q \norm{\Juni_{R}\Uni_i\CPhO_{\mathcal{Y}}\setminus \V_{R} \Uni_{i-1}\ket{\Phi_{i-1}}}^2.
\end{align}
We use the good state to bound the elements of the sum in the following way:
\begin{align}\label{eq:lazy-Jnorm}
	&\norm{\Juni_{R}\Uni_i\CPhO_{\mathcal{Y}}\setminus \V_{R} \Uni_{i-1}\ket{\Phi_{i-1}}}\nonumber\\
	&
	\leq \norm{ \ket{\Phi_{i-1}}-\ket{\Psi^{\Good}_{i-1}}}+
	\norm{\Juni_{R}\Uni_i\CPhO_{\mathcal{Y}}\setminus \V_{R} \Uni_{i-1}\ket{\Psi^{\Good}_{i-1}}}
	.
\end{align}

Next we bound the two norms in Eq.~\eqref{eq:lazy-Jnorm}. First we bound the distance of the good state from the state resulting from the interaction with the non-punctured oracle $ \ket{\Phi_i}_{ADJ} $. We simplify this task with the following derivation:	
\begin{align}
	&\norm{\ket{\Psi_i^{\Good}}_{AD}\ket{0}_{J}- \ket{\Phi_i}_{ADJ}}\nonumber\\
	&=\norm{\ket{\Psi_i^{\Good}}_{AD}\ket{0}_{J} -\overline{\Juni}_{R}  \CPhO_{\mathcal{Y}}\setminus \V_{R} \Uni_{i-1}\ket{\Phi_{i-1}}_{ADJ}}\\
	&\leq \norm{\ket{\Psi_i^{\Good}}_{AD}\ket{0}_{J}- \overline{\Juni}_{R}  \CPhO_{\mathcal{Y}}\setminus \V_{R}\Uni_{i-1}\ket{\Psi_{i-1}^{\Good}}_{AD}\ket{0}_{J}} \nonumber \\
	&+ \norm{\overline{\Juni}_{R}  \CPhO_{\mathcal{Y}}\setminus \V_{R}\Uni_{i-1}\ket{\Psi_{i-1}^{\Good}}_{AD}\ket{0}_{J}- \overline{\Juni}_{R}  \CPhO_{\mathcal{Y}}\setminus \V_{R}\Uni_{i-1}\ket{\Phi_{i-1}}_{ADJ}} \\
	&\leq \eps_{\textnormal{step}}(i) + \norm{\ket{\Psi_{i-1}^{\Good}}_{AD}\ket{0}_{J}- \ket{\Phi_{i-1}}_{ADJ}}\leq \sum_{j=1}^i\eps_{\textnormal{step}}(j), \label{eq:lazy-step-derivation}
\end{align}
where we use the triangle inequality and recursively get rid of all queries made by $ \advA $.
The definition of a single step is
\begin{align}\label{eq:lazy-step}
	\eps_{\textnormal{step}}(j):= \norm{\ket{\Psi_j^{\Good}}_{AD}\ket{0}_{J}-  \overline{\Juni}_{R} \CPhO_{\mathcal{Y}} \setminus \V_{R} \Uni_{j-1} \ket{\Psi_{j-1}^{\Good}}_{AD} \ket{0}_{J} }_2.
\end{align}

To calculate the bound on $ \eps_{\textnormal{step}}(j) $ we first calculate how a query affects the good state. The full calculations are presented in section~\ref{sec:lazy-good-query}. Using these findings we prove Lemma~\ref{lem:lazy-step} in section~\ref{sec:bound-step} that states a bound on the norm of the difference of the good and original states.

We define the second part in Eq.~\eqref{eq:lazy-Jnorm} as
\begin{align}
	\eps_{\Find}(i):= \norm{\Juni_{R}\Uni_i\CPhO_{\mathcal{Y}}\setminus \V_{R}\Uni_{i-1}\ket{\Psi^{\Good}_{i-1}}}.
\end{align}
Using the techniques developed to bound $ \eps_{\textnormal{step}}(j) $, we bound $ \eps_{\Find}(i) $ in section~\ref{sec:lazy-bound-find} and state the bounds in Lemma~\ref{lem:lazy-eps-find-bound}.

The final bound is 
\begin{align}
	&\pr{\Find:\advA[\CPhO_{\mathcal{Y}}\setminus R ]}  \leq \sum_{i=1}^q  	\left(\sum_{j=1}^{i-1} \eps_{\textnormal{step}}(j)  + \eps_{\Find}(i)\right)^2,
\end{align}
with Lemma~\ref{lem:lazy-step} and Lemma~\ref{lem:lazy-eps-find-bound} we get the final bound:
\begin{align}
	&\pr{\Find:\advA[\CPhO_{\mathcal{Y}}\setminus R ]} \nonumber\\
	& \leq \sum_{i=1}^q \Bigg(  
	\sum_{j=1}^{i-1} \max_{s\leq j-1}
	\left(
	3\frac{b(s)}{N} +\frac{b(s)}{\sqrt{N(N-b(s))}} + \frac{b(s+1)}{N}  
	\right)\nonumber\\
	&+\max_{s\leq i-1} \left(\sqrt{\frac{b(s+1)}{N}}+\frac{b(s)^{3/2}}{N\sqrt{N-b(s)}}+\frac{\sqrt{b(s)(N-b(s))}}{N}  \right)
	\Bigg)^2 \\
	& \leq \sum_{i=1}^q \Bigg(  
	\sum_{j=1}^{i-1} \max_{s\leq j-1}
	\left(
	5\frac{b(s+1)}{\sqrt{N(N-b(q))}} 
	\right)
	+\max_{s\leq i-1} \left(
	2\sqrt{\frac{b(s+1)}{N}}+\frac{b(s)^{3/2}}{N\sqrt{N-b(q)}}
	\right)
	\Bigg)^2 \\
	& \leq \sum_{i=1}^q \Bigg(  
	\sum_{j=1}^{i-1} 
	5\frac{b(j)}{\sqrt{N(N-b(q))}} 
	+ 	2\sqrt{\frac{b(i)}{N}}+\frac{b(i)^{3/2}}{N\sqrt{N-b(q)}}
	\Bigg)^2 \label{eq:lazy-bound-prfind-1D}
\end{align}
In the above bound we use the facts that $ b(s) $ is a monotonously growing function of $ s $. For our relation $ R_{\coll}\cup R_{\preim} $ we know that $ b(s)=s $. To get a simple bound we note that for real-valued functions $ \sum_{j=1}^{i} f(j) \leq \int_{1}^{i} \mathrm{d}j f(j) $. 
Moreover $ b(s)\leq b(q) $, which we use in the denominator.

Simplifying the above bound and performing the sums we get the claimed result.
	\end{proof}

\subsection{The Good State After a Query}\label{sec:lazy-good-query}
To prove the main technical lemmas of this section we need to analyze how a single query to the oracle affects the good state.

To prove Lemma~\ref{lem:lazy-step} we analyze how far apart the state $ \ket{\Psi_{i-1}^{\Good}} $ is after a query from $ \ket{\Psi_{i}^{\Good}} $. To achieve this goal we inspect in detail the state $ \CPhO_{\mathcal{Y}}\setminus \V_{R}\Uni_{i-1}\ket{\Psi_{i-1}^{\Good}}_{AD}\ket{0}_{J} $. We distinguish different modes of operation: $ \ADD $ when the queried $ x $ is added to $ D $, $ \UPD $ when $ x $ was already in $ D $ and is not removed from the database, $ \REM $ when we remove $ x $ from $ D $, and $\NOT$ where register $ A^Y $ is in state $\ket 0$ . These modes correspond to different branches of superposition in $ \CPhO_{\mathcal{Y}}\setminus \V_{R}\Uni_{i-1}\ket{\Psi_{i-1}^{\Good}}_{AD}\ket{0}_{J} $. We write
\begin{align}
	\Uni_{i-1}\ket{\Psi_{i-1}^{\Good}}_{AD}\ket{0}_{J} =\ket{\xi_{i-1}(\ADD)}+\ket{\xi_{i-1}(\UPD)}+\ket{\xi_{i-1}(\REM)}+\ket{\xi_{i-1}(\NOT)}
\end{align}
and analyze the action of  $\CPhO_{\mathcal{Y}}\setminus \V_{R}$ on the above states separately.

For $\ket{\xi_{i-1}(\NOT)}$ there is no change to the state.
Adding a new entry to a database results in setting the register corresponding to $ x $ to  $ \sum_{y_{s+1}\in[N]}\frac{1}{\sqrt{N}}\omega_{N}^{\eta y_{s+1}} \ket{x,y_{s+1}}$, just like expected from a phase oracle for the uniform distribution. After applying $\Queries^{\dagger}\circ \V_R\circ\Queries $ the state is:
\begin{align}
	& \ADD: \CPhO_{\mathcal{Y}}\setminus \V_R\ket{\xi_{i-1}(\ADD)}\ket{0}_J =
	\sum_{x,\eta,\vec{x},\vec{\eta},w}\alpha_{x,\eta,\vec{x},\vec{\eta},w}\ket{x,\eta}_{A^{XY}}\ket{\psi(x,\eta,\vec{x},\vec{\eta},w)}_{A^W} \nonumber\\
	&\sum_{\vec{y}\not\in \setb(s)}\frac{1}{\sqrt{(N-b(1))\cdots(N-b(s))}}\omega_N^{\vec{\eta}\cdot \vec{y}}\ket{(x_1,y_1),\dots,(x_s,y_{s})}_{D(\vec{x})}\nonumber \\
	&\left( 
	\sqrt{\frac{N-b(s+1)}{N}}
	\underset{\ket{\Psi^{\Good}_{i}(\ADD,s)}}{\underbrace{\sum_{y_{s+1}\not\in\setb(1\mid D(\vec{x})) }\frac{1}{\sqrt{N-b(s+1)}}\omega_N^{\eta y_{s+1}} \ket{x,y_{s+1}}}} \ket{0}_{J} \right.\nonumber\\
	&\left.+ \sqrt{\frac{b(s+1)}{N}}\sum_{y_{s+1}\in\setb(1\mid D(\vec{x})) }\frac{1}{\sqrt{b(s+1)}}\omega_N^{\eta y_{s+1}} \ket{x,y_{s+1}} \ket{1}_{J}	\right)	\nonumber\\
	&\sum_{y_{s+2},\dots,y_q\in[N]}\frac{1}{\sqrt{N^{q-s-1}}}\ket{(\perp,y_{s+2}),\dots,(\perp,y_q)}_{D(\perp)},\label{eq:add-x-to-ss}
\end{align}
where the appropriate position of register $ J $ is after $ D $. By $ \ket{\Psi_{i}^{\Good}(\ADD;s)} $ we mean a state equal to the above state but with just the underlined part in the parentheses. We add $ s $ as the argument to specify the size of the database.

For  $\ket{\xi_{i-1}(\UPD)}$ and $\ket{\xi_{i-1}(\REM)}$, we treat the updated $ x $ as the last one in $ D $, this does not have to be true but it simplifies notation. Note that we want the corresponding $ y_s $ to depend on previous queries but not the other way around, this assumption is without loss of generality as there is no fixed order for $ \sum_{\vec{y}} $. The empty register is moved to the back of $ D $, we do not write it out for simplicity but still consider it done.
\begin{align}
	& \UPD/\REM:  \CPhO_{\mathcal{Y}}\left(\ket{\xi_{i-1}(\UPD)}+\ket{\xi_{i-1}(\REM)}\right)\nonumber\\
	&=\sum_{x,\eta,\vec{x},\vec{\eta},w}\alpha_{x,\eta,\vec{x},\vec{\eta},w}\ket{x,\eta}_{A^{XY}}\ket{\psi(x,\eta,\vec{x},\vec{\eta},w)}_{A^W}\nonumber \\
	&\sum_{\vec{y}\not\in \setb(s-1)}\frac{1}{\sqrt{(N-b(1))(N-b(2))\cdots(N-b(s-1))}}\omega_N^{\vec{\eta}\cdot \vec{y}}\ket{(x_1,y_1),\dots,(x_{s-1},y_{s-1})}_{D(\vec{x}\setminus\{x\})}  \nonumber \\
	&\left( \sum_{y_s\not\in\setb(1\mid D(\vec{x}\setminus\{x\}))} \frac{1}{\sqrt{N-b(s)}} \omega_N^{(\eta_s+\eta)y_s} \ket{x,y_s}_{D(x)} \right. \nonumber\\
	&-\frac{1}{\sqrt{N(N-b(s))}} \sum_{y_s\not\in\setb(1\mid D(\vec{x}\setminus\{x\}))} \omega_N^{(\eta_s+\eta)y_s}\sum_{y'_s\in[N]}\frac{1}{\sqrt{N}} \ket{x,y'_s}_{D(x)} \nonumber\\
	&+\left.\frac{1}{\sqrt{N(N-b(s))}} \sum_{y_s\not\in\setb(1\mid D(\vec{x}\setminus\{x\}))} \omega_N^{(\eta_s+\eta)y_s}\sum_{y'_s\in[N]}\frac{1}{\sqrt{N}} \ket{\perp,y'_s}_{D(x)}\right) \nonumber\\
	& \sum_{y_{s+1},\dots,y_q\in[N]}\frac{1}{\sqrt{N^{q-s}}}\ket{(\perp,y_{s+1}),\dots,(\perp,y_q)}_{D(\perp)}. \label{eq:upd/rem-x-to-ss}
\end{align}
Whether we are in the branch $ \UPD $ or $ \REM $ depends on whether $ \eta=-\eta_s $ or not. 

When the database is updated we have the following state after the query:
\begin{align}
	& \UPD:  \CPhO_{\mathcal{Y}}\setminus\V_R\ket{\xi_{i-1}(\UPD)}\ket{0}_J
	=\sum_{x,\eta,\vec{x},\vec{\eta},w}\alpha_{x,\eta,\vec{x},\vec{\eta},w}\ket{x,\eta}_{A^{XY}}\ket{\psi(x,\eta,\vec{x},\vec{\eta},w)}_{A^W}\nonumber \\
	&\sum_{\vec{y}\not\in \setb(s-1)}\frac{1}{\sqrt{(N-b(1))\cdots(N-b(s-1))}}\omega_N^{\vec{\eta}\cdot \vec{y}}\ket{(x_1,y_1),\dots,(x_{s-1},y_{s-1})}_{D(\vec{x}\setminus\{x\})}  \nonumber \\
	&\left(\underset{\ket{\Psi_j^{\Good}(\UPD;s)} }{\underbrace{ \sum_{y_s\not\in\setb(1\mid D(\vec{x}\setminus\{x\}))} \frac{1}{\sqrt{N-b(s)}} \omega_N^{(\eta_s+\eta)y_s} \ket{x,y_s}_{D(x)}}} \ket{0}_J \right. \nonumber\\
	&- \underset{{\color{red} \ket{\Psi_{i,1}^{\Bad}(\UPD;s)} }}{\underbrace{\frac{1}{\sqrt{N(N-b(s))}} \sum_{y_s\in\setb(1\mid D(\vec{x}\setminus\{x\}))} \omega_N^{(\eta_s+\eta)y_s}\sum_{y'_s\in[N]}\frac{1}{\sqrt{N}} \ket{\perp,y'_s}_{D(x)}}}\ket{0}_J
	\nonumber\\
	&+\underset{ {\color{red} \ket{\Psi_{i,2}^{\Bad}(\UPD;s)}} }{\underbrace{\frac{1}{N} \sum_{y_s\in\setb(1\mid D(\vec{x}\setminus\{x\}))} \omega_N^{(\eta_s+\eta)y_s}\sum_{y'_s\not\in\setb(1\mid D(\vec{x}\setminus\{x\}))}\frac{1}{\sqrt{N-b(s)}} \ket{x,y'_s}_{D(x)} }}\ket{0}_J
	\nonumber\\
	&\left.		
	+\sqrt{\frac{b(s)}{N^2(N-b(s))}} \sum_{y_s\in\setb(1\mid D(\vec{x}\setminus\{x\}))} \omega_N^{(\eta_s+\eta)y_s}\sum_{y'_s\in\setb(1\mid D(\vec{x}\setminus\{x\}))}\frac{1}{\sqrt{b(s)}} \ket{x,y'_s}_{D(x)} \ket{1}_J	\right) \nonumber\\
	& \sum_{y_{s+1},\dots,y_q\in[N]}\frac{1}{\sqrt{N^{q-s}}}\ket{(\perp,y_{s+1}),\dots,(\perp,y_q)}_{D(\perp)}. \label{eq:upd-x-to-ss}
\end{align}
In the above state we have simplified the sum $  \sum_{y_s\not\in\setb(1\mid D(\vec{x}\setminus\{x\}))} =- \sum_{y_s\in\setb(1\mid D(\vec{x}\setminus\{x\}))}$. Register $ J $ is supposed to be placed after $ D $, for the sake of presentation though, we put it in the middle.
By $ \ket{\Psi_{i}^{\Good}(\UPD;s)} $, $ \ket{\Psi_{i,1}^{\Bad}(\UPD;s)} $, and $ \ket{\Psi_{i,2}^{\Bad}(\UPD;s)} $ we mean the whole state with just the underlined states in the parentheses  equals the given state. We add $ s $ as the argument to specify the size of the database.

After removing an element from the database we have:
\begin{align}
	& \REM:  \CPhO_{\mathcal{Y}}\setminus\V_{R}\ket{\xi_{i-1}(\REM)}\ket{0}_J
	=\sum_{x,\eta,\vec{x},\vec{\eta},w}\alpha_{x,\eta,\vec{x},\vec{\eta},w}\ket{x,\eta}_{A^{XY}}\ket{\psi(x,\eta,\vec{x},\vec{\eta},w)}_{A^W}\nonumber \\
	&\sum_{\vec{y}\not\in \setb(s-1)}\frac{1}{\sqrt{(N-b(1))\cdots(N-b(s-1))}}\omega_N^{\vec{\eta}\cdot \vec{y}}\ket{(x_1,y_1),\dots,(x_{s-1},y_{s-1})}_{D(\vec{x}\setminus\{x\})}  \nonumber \\
	&\left(\sqrt{\frac{N-b(s)}{N}} \underset{\ket{\Psi^{\Good}_{i}(\REM,s)}}{\underbrace{\sum_{y_s\in[N]}\frac{1}{\sqrt{N}} \ket{\perp,y_s}_{D(x)} }}\ket{0}_J \right. \nonumber\\
	&+ \underset{{\color{red} \ket{\Psi^{\Bad}_{i}(\REM;s)} }} {\underbrace{\frac{b(s)}{N} \sum_{y_s\not\in\setb(1\mid D(\vec{x}\setminus\{x\}))} \frac{1}{\sqrt{N-b(s)}}  \ket{x,y_s}_{D(x)}}} \ket{0}_J  \nonumber\\
	&\left.-\frac{\sqrt{b(s)(N-b(s))}}{N} \sum_{y_s\in\setb(1\mid D(\vec{x}\setminus\{x\}))}\frac{1}{\sqrt{b(s)}} \ket{x,y_s}_{D(x)} \ket{1}_J	\right) \nonumber\\
	& \sum_{y_{s+1},\dots,y_q\in[N]}\frac{1}{\sqrt{N^{q-s}}}\ket{(\perp,y_{s+1}),\dots,(\perp,y_q)}_{D(\perp)}. \label{eq:rem-x-to-ss}
\end{align}

\subsection{Bound on $ \eps_{\textnormal{step}} $}\label{sec:bound-step}
We want to show that after any query, $ \ket{\Phi_i}_{ADJ} $ is close to $ \ket{\Psi_i^{\Good}}_{AD}\ket{0}_{J} $. 
One way of looking at the lemma below is from the perspective of an adversary searching for inputs that provide outputs of a random function that are in $ R $. Normally this task does not involve a punctured oracle but a regular one. We show here the error introduced by puncturing the oracle; The two states that we consider come from projecting with $ \overline{\Juni}_R $ either the state after interacting with a non-punctured oracle or the state after interacting with a punctured oracle (given $ \neg\Find $).
This intuition, however, is not crucial for our proof, as we focus solely on punctured oracles.
\begin{lemm} \label{lem:lazy-step}
	For states defined in the preceding sections we have
	\begin{align}
		&\norm{\ket{\Psi_i^{\Good}}_{AD}\ket{0}_{J}- \ket{\Phi_i}_{ADJ}}\leq \sum_{j=1}^i \eps_{\textnormal{step}}(j)\nonumber\\
		& \leq \sum_{j=1}^i  \max_{s\leq j-1}
		\left(
		3\frac{b(s)}{N}	+\frac{b(s)}{\sqrt{N(N-b(s))}}  +\frac{b(s+1)}{N} 
		\right). 
	\end{align}
\end{lemm}

\begin{proof}
		We are going to prove the statement by recursion over the number of queries made by the adversary. The exact derivation  is shown in Equation~\eqref{eq:lazy-step-derivation}.		
		We are going to prove the statement by recursion over the number of queries made by the adversary. 
		
	In the following we calculate $ \eps_{\textnormal{step}}(j) $ defined in Equation~\eqref{eq:lazy-step}. For $ i=0 $ the statement is true, as $ \ket{\Psi_0^{\Good}}\ket{0}_{J} = \ket{\Phi_0}=\ket{\Psi_0}\ket{0}_{J} $. 
	
	From Eqs.~\eqref{eq:add-x-to-ss}, \eqref{eq:upd-x-to-ss}, and \eqref{eq:rem-x-to-ss} we know how querying works for $ \ket{\Psi^{\Good}_{j-1}} $, now we distinguish two types of errors compared to $ \ket{\Psi^{\Good}_j}\ket{0}_J $: an additive error of adding a small-weight state to the original one and a multiplicative error where one branch of the superposition is multiplied by some factor.
	
	The additive error includes all states of small-weight states multiplied by $ \ket{0}_{J} $ with the superscript $ \Bad $.
	In the branches of the superposition where we add a new entry to the database we see that we recover $ \ket{\Psi^{\Good}_{j}} \ket{0}_{J}$  after multiplying a branch of $ \CPhO_{\mathcal{Y}}\setminus\V_{R}\Uni_{j-1}\ket{\Psi^{\Good}_{j-1}}\ket{0}_{J} $ by $ \sqrt{\frac{N-b(s+1)}{N}}  $ (Eq.~\eqref{eq:add-x-to-ss}) or by $ \sqrt{\frac{N-b(s)}{N}} $ (Eq.~\eqref{eq:rem-x-to-ss}).
	
	Our approach to the rest of the proof consists of first dealing with the additive and later with the multiplicative error. To this end let us define $ \ket{\psi^{\times}_{j}}_{ADJ} $ as the state $\overline{\Juni}_{R} \CPhO_{\mathcal{Y}}\setminus\V_{R} \Uni_{j-1}\ket{\Psi^{\Good}_{j-1}}\ket{0}_{J} $ with all branches classified as the additive error excluded. By ``classified as the additive error'' we mean states with superscript $ \Bad $ and highlighted in red in Equations~(\ref{eq:add-x-to-ss}, \ref{eq:upd-x-to-ss}, \ref{eq:rem-x-to-ss}).
	The new state is defined as
	\begin{align}
		&\ket{\psi^{\times}_{j}}_{ADJ} := \left(\sum_{s} \ket{\Psi^{\Good}_{j}(\NOT;s)} \right. 
		+ \sqrt{\frac{N-b(s+1)}{N}}   \ket{\Psi^{\Good}_{j}(\ADD;s)}\nonumber\\
		& +  \ket{\Psi^{\Good}_{j}(\UPD;s)}  \left.+\sqrt{\frac{N-b(s)}{N}}  \ket{\Psi^{\Good}_{j}(\REM;s)} \right)\ket{0}_J,
	\end{align}
	where the states above correspond to branches of superposition where we do nothing ($ \NOT $, for $ \eta=0 $), add an entry, update the database, and remove an entry from $ D $.
	Bounding the difference of the states is done as follows
	\begin{align}
		&\norm{\ket{\Psi^{\Good}_{j}} \ket{0}_{J}-\overline{\Juni}_{R}\CPhO_{\mathcal{Y}}\setminus\V_{R}\Uni_{j-1}\ket{\Psi^{\Good}_{j-1}}\ket{0}_{J} }\nonumber \\
		&\leq \norm{\ket{\Psi^{\Good}_{j}} \ket{0}_{J}-  \ket{\psi^{\times}_{j}}_{ADJ}}+\norm{ \ket{\psi^{\times}_{j}}_{ADJ}-\overline{\Juni}_{R}\CPhO_{\mathcal{Y}}\setminus\V_{R}\Uni_{j-1}\ket{\Psi^{\Good}_{j-1}}\ket{0}_{J} }. \label{eq:bound-add-or-mul}
	\end{align}
	The second term above is just the norm of all states amplifying the additive error---we call them the bad states.
	
	We bound the additive error $ \| \ket{\psi^{\times}_{j}}_{ADJ} -\overline{\Juni}_{R} \CPhO_{\mathcal{Y}} \setminus\V_{R} \Uni_{j-1}\ket{\Psi^{\Good}_{j-1}}\ket{0}_{J} \| $ by first splitting the three cases underlined above:		
	\begin{align}
		\norm{\ket{\Psi^{\Bad}_{j}}}  \leq \norm{\ket{\Psi^{\Bad}_{j,1}(\UPD)}} +\norm{\ket{\Psi^{\Bad}_{j,2}(\UPD)}} +\norm{\ket{\Psi^{\Bad}_{j}(\REM)}}, \label{eq:lazy-three-errors}
	\end{align}
	where $ \ket{\Psi^{\Bad}_{j}} $ is the sum of all three bad states, the bound follows from the triangle inequality.
	
	Calculating all of the three norms above is done by first focusing on a particular interface that is queried and by focusing on particular sizes of databases:
	\begin{align}\label{eq:lazy-sum-betas-bound}
		\norm{\ket{\Psi^{\Bad}_{j}}} = \sqrt{\sum_{s=0}^j \abs{\beta(s)}^2 \norm{\ket{\Psi^{\Bad}_{j}(s)}}^2 },
	\end{align}
	where $ \beta(s) $ is the amplitude of the good state projected to states with the specified parameter: For a projector $ \Puni_{s} $ to adversaries that query databases of size $ s $ we have $ \beta(s):=\Puni_{s} \ket{\Psi^{\Good}_j} $ and $ \ket{\Psi^{\Bad}_{j}(s)}:=\Puni_{s} \ket{\Psi^{\Bad}_{j}} $.

		\parag{Additive errors}
	Dealing with additive errors, we begin with the $ \UPD $ branch. In the bad states in the $ \UPD $ case, Eq.~\eqref{eq:upd-x-to-ss}, we need to take special care of $ \sum_{y_{s}\in\setb(1\mid D(\vec{x}\setminus\{x\}))}\omega_{N}^{(\eta_{s}+\eta)y_{s}} $; This is a a complex number that depends on $ \eta_{s} $, so it enters the norm in a non-trivial way. 
	The first step is a change of variables: Instead of summing over elements of of the bad state we sum over $ y_{s}\in[b(s)] $ and change $ y_s $ in the expression to $\setb(1\mid D(\vec{x}\setminus\{x\}))(y_{s})$, by which we denote the $ y_{s} $-th element of $ \setb(1\mid D(\vec{x}\setminus\{x\})) $. Note that there is a natural order in the bad set, as $ \mathcal{Y}=[N] $.
	
	Given the change of variables we can use the triangle inequality to focus on the norm of a state with a single phase factor $ \omega_{N}^{(\eta_{s}+\eta)\setb(1\mid D(\vec{x}\setminus\{x\}))(y_{s})} $, instead of the whole sum:
	\begin{align}\label{eq:lazy-y-bad-bound}
		&\norm{\ket{\Psi^{\Bad}_{j}(\UPD;s)}} \leq  \sum_{y_{s}\in[b(s)]} \norm{\ket{\Psi^{\Bad}_{j}(\UPD;s,\setb(1\mid D(\vec{x}\setminus\{x\}))(y_{s}))}},
	\end{align}
	where we omit the index of the $ \UPD $ errors because the techniques here work in almost the same way for both states. The input $ D(\vec{x}\setminus\{x\}) $ should not be treated as an actual argument of the state, we still consider the superposition over different inputs, we just mean that in the state $ \ket{\Psi^{\Bad}_{j}(\UPD;s)} $ we change the variable $ y_s $.	
	In what follows we denote the state on the right hand side of the above equation by  $ \ket{\Psi^{\Bad}_{j}(\UPD;s,\setb'(y_{s}))} $. 
	
	Now we focus on the state with a fixed $ \setb'(y_{s}) $, we bound the norm of this state.
	\begin{nrclaim}\label{claim:lazy-setb-y-norm}
		For all $ y_{s}\in[b(s)] $
		\begin{align}\label{eq:lazy-sety-norm}
			&\norm{\ket{\Psi^{\Bad}_{j,1}(\UPD;s,\setb(1\mid D(\vec{x}\setminus\{x\}))(y_{s}))}} \leq \frac{1}{\sqrt{N(N-b(s))}} \;\textnormal{ and}\\
			&\norm{\ket{\Psi^{\Bad}_{j,2}(\UPD;s,\setb(1\mid D(\vec{x}\setminus\{x\}))(y_{s}))}} \leq \frac{1}{N}.
		\end{align}
	\end{nrclaim}
	\begin{proof}
		Our idea for the proof is to first show that the norm of a good state in the $ \UPD $ branch with a modified sum over $ y_s $ is not greater than $ 1 $. Then to prove that the norm of $ \ket{\Psi^{\Bad}_{j}(\UPD;s,\setb(1\mid D(\vec{x}\setminus\{x\}))(y_{s}))} $ multiplied by the corresponding right hand side of Eq.~\eqref{eq:lazy-sety-norm} equals the norm of the good state we mentioned earlier.
		
		We start by defining two states:
		\begin{align}
			& \sum_{x,\eta,\vec{x},\vec{\eta},w} \alpha_{x,\eta,\vec{x},\vec{\eta},w}\ket{x,\eta}_{A^{XY}} \ket{\psi(x,\eta,\vec{x},\vec{\eta},w)}_{A^W} \nonumber \\
			&\sum_{\vec{y}\not\in \setb(s-1)}\frac{1}{\sqrt{(N-b(1))\cdots(N-b(s-1))}}\omega_N^{\vec{\eta}\cdot \vec{y}}\ket{(x_1,y_1),\dots,(x_{s-1},y_{s-1})}_{D(\vec{x}\setminus\{x\})}  \nonumber \\
			& \sum_{y_{s+1},\dots,y_q\in[N]}\frac{1}{\sqrt{N^{q-s}}}\ket{(\perp,y_{s+1}),\dots,(\perp,y_q)}_{D(\perp)}\nonumber\\
			&\otimes \begin{cases}
				\sum_{y_s\in\setb(1\mid D(\vec{x}\setminus\{x\}))} \frac{1}{\sqrt{b(s)}} \omega_N^{(\eta_s+\eta)y_s} \ket{x,y_s}_{D(x)} =: \ket{\overline{\Psi}^{\Good}_{j}(\UPD;s)} \\ 
				\sum_{y_s\in[N]} \frac{1}{\sqrt{N}} \omega_N^{(\eta_s+\eta)y_s} \ket{x,y_s}_{D(x)} =: \ket{\widetilde{\Psi}^{\Good}_{j}(\UPD;s)}
			\end{cases} .
			\label{eq:lazy-bar/tilde-good}
		\end{align}
		The first one, $ \ket{\overline{\Psi}^{\Good}_{j}(\UPD;s)}  $ is the one that we use in the last step of the proof, as described in the previous paragraph. The second one will be used to show that the norm of $ \ket{\overline{\Psi}^{\Good}_{j}(\UPD;s)}  $ is bounded by $ 1 $.
		
		One more introductory statement that we need to prove is that $ \norm{\ket{\widetilde{\Psi}^{\Good}_{j}(\UPD;s)}}\leq 1 $. To this end let us remind ourselves that the good state is a state interacting with the not-punctured oracle for $ j $ queries, projected to databases that are not in $ R $, and normalized. Let us consider a projection that just omits register $ D(x) $ when bringing $ D $ to be not in $ R $. Using this latter projection on a state interacting with the not-punctured oracle results in the state $\ket{\widetilde{\Psi}^{\Good}_{j}(\UPD;s)}$. Hence $ \norm{\ket{\widetilde{\Psi}^{\Good}_{j}(\UPD;s)}}\leq 1 $, just like $ \norm{\ket{\Psi^{\Good}_{j}(\UPD;s)}}\leq 1 $. The inequality comes from excluding a single branch of the superposition in $\ket{\widetilde{\Psi}^{\Good}_{j}(s)}$.

		The fact that the state with $ \sum_{y_{s}\in[N]} $  is sub-normalized is important because now we can bound the norm of $ \ket{\overline{\Psi}^{\Good}_{j}(\UPD;s)} $. Having in mind that $ \sum_{y_{s}\in\setb(1\mid D(\vec{x}\setminus\{x\})) }= \sum_{y_{s}\in[N]}-\sum_{y_{s}\not\in\setb(1\mid D(\vec{x}\setminus\{x\})) }$ we see that
		\begin{align}
			&b(s)\norm{\ket{\overline{\Psi}^{\Good}_{j}(\UPD;s)}}^2 \nonumber\\
			&=  N\norm{ \ket{\widetilde{\Psi}^{\Good}_{j}(\UPD;s)} }^2 -  (N-b(s)) \norm{  \ket{\Psi^{\Good}_{j}(\UPD;s)} }^2 \leq b(s),
		\end{align}
		hence $ \norm{\ket{\overline{\Psi}^{\Good}_{j}(\UPD;s)}}^2\leq 1 $. 

			Now that we know that $ \ket{\overline{\Psi}^{\Good}_{j}(\UPD;s)} $ is sub-normalized we show that
			\begin{align}
				\norm{\ket{\overline{\Psi}^{\Good}_{j}(\UPD;s,\setb'(y{s}))}}\leq \frac{1}{\sqrt{b(s)}} .
			\end{align} 
			To prove this bound, consider measuring register $ D_a(x) $ of $ \ket{\overline{\Psi}^{\Good}_{j}(\UPD;s)}  $ in the computational basis. The probability of getting any outcome $ y_{s} $ is necessarily $ \frac{1}{b(s)} $, as the outputs of the oracle are uniformly random. The post-measurement state, for an outcome $ y_{s} $, is $\sqrt{b(s)} \cdot\ket{\overline{\Psi}^{\Good}_{j}(\UPD;s,\setb'(y_{s}))} $. Naturally, norm of this post-measurement state is at most $ 1 $.
			
		Now we can use the state $ \ket{\overline{\Psi}^{\Good}_{j}(\UPD;s,\setb'(y_{s}))} $ to analyze the norm of $ \ket{\Psi^{\Bad}_{j}(\UPD;s,\setb'(y_{s}))}$. First let us inspect the norm squared of the bad state:		
		\begin{align}\label{eq:lazy-psi-bad-norm2}
			&\norm{\ket{\Psi^{\Bad}_{j}(\UPD;s,\setb'(y_{s}))}}^2 =  \sum_{x,\eta,\vec{x},\vec{\eta}',\vec{\eta},w',w} \sum_{\eta_{s}',\eta_{s}} \bar{\alpha}'_{x,\eta,\vec{x},\vec{\eta}',\eta_{s}',w'}  \alpha'_{x,\eta,\vec{x},\vec{\eta},\eta_{s},w}\nonumber\\
			& \braket{\psi(x,\eta,\vec{x},\vec{\eta}',\eta'_{s},w')}{\psi(x,\eta,\vec{x},\vec{\eta},\eta_{s},w)} \nonumber \\
			&\sum_{\vec{y}\not\in \setb(s-1) } \frac{1}{(N-b(1))\cdots(N-b(s))}\bar{\omega}_{N}^{\vec{\eta}'\cdot \vec{y}} \omega_{N}^{\vec{\eta}\cdot \vec{y}} \nonumber\\
			& \frac{1}{N^2(N-b(s))}\bar{\omega}_{N}^{(\eta'_{s}+\eta) \setb'(y_{s})} \omega_{N}^{(\eta_{s}+\eta) \setb'(y_{s})} \underset{=\nu}{\underbrace{\sum_{y'_{s}\in[\nu]}   }},
		\end{align}
		where $ \nu=N $ for $ \ket{\Psi^{\Bad}_{j,1}(\UPD;s,\setb'(y_{s}))} $ and $ \nu=N-b(s) $ for $ \ket{\Psi^{\Bad}_{j,2}(\UPD;s,\setb'(y_{s}))} $ (in the second case the sum goes over $ y'_s\not \in\setb(1\mid D(\vec{x}\setminus\{x\})) $ instead of $ [\nu] $).
		It is easy to notice, that the only difference between Eq.~\eqref{eq:lazy-psi-bad-norm2} and norm squared of $ \ket{\overline{\Psi}^{\Good}_{j}(\UPD;s,\setb'(y_{s}))} $ lies in the factor $ \frac{\nu}{N^2(N-b(s))} $. This factor in the modified good state equals $ \frac{1}{b(s)} $.
		This observation implies that 
		\begin{align}\label{eq:lazy-psi-is-psioverline}
			\norm{ \ket{\Psi^{\Bad}_{j}(\UPD;s,\setb'(y_{s}))}}= \sqrt{\frac{b(s)\cdot\nu}{N^2(N-b(s))} } \norm{\ket{\overline{\Psi}^{\Good}_{j}(\UPD;s, \setb'(y_{s}))} }.
		\end{align}
		Together with the bound on the norm in the left hand side this proves the claimed bounds.
	\end{proof}
	
	Claim~\ref{claim:lazy-setb-y-norm}, together with the bound from Eq.~\eqref{eq:lazy-y-bad-bound} gives us:
	\begin{align}\label{eq:lazy-upd-bounds-s}
		&\norm{\ket{\Psi^{\Bad}_{j,1}(\UPD;s)}} \leq   
		\frac{b(s)}{\sqrt{N(N-b(s))}}, \\
		&\norm{\ket{\Psi^{\Bad}_{j,2}(\UPD;s)}} \leq   \frac{b(s)}{N} .
	\end{align}
	
	The bounds from Eq.~\eqref{eq:lazy-upd-bounds-s} in Eq.~\eqref{eq:lazy-sum-betas-bound} give us the bound on the additive error in the $ \UPD $ branch. 
	The additive error for the $ \REM $ branch ( $ \ket{\Psi^{\Bad}_{j}(\REM)} $ in Eq.~\eqref{eq:rem-x-to-ss} ) is much easier to calculate: As register $ D(x) $ is normalized and all the rest of the state is the same as $ \ket{\Psi^{\Good}_{j}(\REM)} $, the only error comes from the factor $ \frac{b(s)}{N} $. To calculate the norm of the state we can follow the analysis of Eq.~\eqref{eq:lazy-psi-bad-norm2}. Finally we get:
	\begin{align}\label{eq:lazy-additive-errors}
		&\norm{\ket{\Psi^{\Bad}_{j,1}(\UPD)}}  \leq \max_{s}\left(
		\frac{b(s)}{\sqrt{N(N-b(s))}} \right) , \\
		&\norm{\ket{\Psi^{\Bad}_{j,2}(\UPD)}}  \leq   \max_{s}\left(\frac{b(s)}{N} \right), \\
		& \norm{\ket{\Psi^{\Bad}_{j}(\REM)}} \leq \max_{s}\left( \frac{b(s)}{N} \right),
	\end{align}
	where $ s\leq j-1 $.
	
	\parag{Multiplicative errors}
	The multiplicative error is a factor that multiplies a part of the state $ \ket{\psi^{\times}_{j}}_{ADJ} $. Similarly as before we need to take care of the fact that the joint state of the adversary and the oracle is a sum over databases of different sizes and queries to different interfaces:
	\begin{align}\label{eq:lazy-state-sum-s}
		&\ket{\psi^{\times}_{j}}=\sum_{s}\ket{\psi^{\times}_{j}(s)},
	\end{align}
	where the states $ \ket{\psi^{\times}_{j}(s)} $ are orthogonal. The above is also true for $ \ket{\Psi^{\Good}_j}=\sum_{s}\ket{\Psi^{\Good}_j(s)} $.
	
	There are two sources of multiplicative errors, $ \ADD $ from Eq.~\eqref{eq:add-x-to-ss} and $ \REM $ from Eq.~\eqref{eq:rem-x-to-ss}, we split the two sources with the triangle inequality. We deal with both in the same way, just the final bound is different. 
	
	Let us write down the two parts, one affected by the error and the second not:
	\begin{align}
		&\ket{\Psi_{j}^{\Good}}_{AD}\ket{0}_{J}=\sum_{s}\alpha(s)\ket{\varphi_1(s)}+ \beta(s) \ket{\varphi_2(s)},\\
		&\ket{\psi^{\times}_{j}}_{ADJ}=\sum_{s}\alpha(s)\ket{\varphi_1(s)}+\sqrt{1-e} \beta(s)\ket{\varphi_2(s)}, \label{eq:lazy-varphi12}
	\end{align}
	where $ \sqrt{1-e} $ is the multiplicative error, in the case $ \ADD $ the error is $ e=\frac{b(s+1)}{N} $ and $ e=\frac{b(s)}{N} $ in the case $ \REM $.
	We know that $ \sum_{s}\abs{\alpha(s)}^2+\abs{\beta(s)}^2\leq1 $, because we excluded a single branch of the superposition, for $ \ADD $ and $ \REM $. This inequality implies  $ \sum_{s}\abs{\beta(s)}^2\leq 1 $. 
	We continue with the bound
	\begin{align}
		&\norm{	\ket{\psi^{\times}_{j}}_{ADJ}-\ket{\Psi_{j}^{\Good}}_{AD}\ket{0}_{J}}
		=\norm{\sum_{s} (1-\sqrt{1-e})\beta(s)\ket{\varphi_2(s)}}\\
		&= \sqrt{\sum_{s}  (1-\sqrt{1-e})^2\abs{\beta(s)}^2}\leq \max_{s}\{  1- \sqrt{1-e} \}  \leq \max_{s}\{   e \},
		\label{eq:lazy-bound-multip}
	\end{align}	
	Maximization is done over $ s\leq j-1 $.
	
	\parag{Bound on one step}
	From Eqs.~\eqref{eq:bound-add-or-mul}, \eqref{eq:lazy-additive-errors}, and \eqref{eq:lazy-bound-multip} (for the two sources of error) the bound on the single step is
	\begin{align}
		&\eps_{\textnormal{step}}(j) \leq  \max_{s\leq j-1}\left(
		\frac{b(s)}{\sqrt{N(N-b(s))}} + \frac{b(s)}{N}
		+2 \frac{b(s)}{N}
		+ \frac{b(s+1)}{N}
		\right)
		\label{eq:lazy-eps-bound}
	\end{align}	
	and the final bound is
	\begin{align}\label{eq:lazy-good-vs-phi-bound}
		&\norm{\ket{\Psi_i^{\Good}}_{AD}\ket{0}_{J}- \ket{\Phi_i}_{ADJ}} \leq \sum_{j=1}^i \max_{s\leq j-1}\left(
		\frac{b(s)}{\sqrt{N(N-b(s))}} + \frac{b(s)}{N}
		+2 \frac{b(s)}{N}
		+ \frac{b(s+1)}{N}
		\right)
	\end{align}	
\end{proof}

The bound from Lemma~\ref{lem:lazy-step} can be further simplified to 
	\begin{align}\label{eq:lazy-good-vs-phi-bound-simple}
	&\norm{\ket{\Psi_i^{\Good}}_{AD}\ket{0}_{J}- \ket{\Phi_i}_{ADJ}} \leq 5\sum_{j=1}^i \max_{s\leq j-1}\left(
	\frac{b(s+1)}{\sqrt{N(N-b(q))}}
	\right)	,
\end{align}
Where in the denominator we use $ b(s)\leq b(q) $, which is true for the relations considered in this paper.

\subsection{Bound on $ \eps_{\Find} $}\label{sec:lazy-bound-find}
Our task here is bounding the norm of $ \norm{\Juni_{R}\Uni_i\CPhO_{\mathcal{Y}}\setminus \V_{R}\Uni_{i-1}\ket{\Psi^{\Good}_{i-1}}} $. All states (among the states defined in section~\ref{sec:lazy-good-query}) that give non-zero contributions to this norm are the ones that we give the superscript $ \Find $, they contain $ \ket{1}_J $.

\begin{lemm}\label{lem:lazy-eps-find-bound}
	For states defined in preceding sections we have
	\begin{align}
		&\norm{\Juni_{R}\Uni_i\CPhO_{\mathcal{Y}}\setminus \V_{R}\Uni_{i-1}\ket{\Psi^{\Good}_{i-1}}}=\eps_{\Find}(i) \nonumber\\
		& \leq \max_{s\leq i-1} \left(
		\sqrt{\frac{b(s+1)}{N}}
		+\frac{b(s)^{3/2}}{N_a\sqrt{N-b(s)}}
		+\frac{\sqrt{b(s)(N-b(s))}}{N}  \right).
	\end{align}
\end{lemm}
\begin{proof}
	For all states multiplied by $ \ket{1}_J $ we start bounding the norm by splitting the norm by the size of the database, like in Eq.~\eqref{eq:lazy-sum-betas-bound}. Let us now go through the three important modes of operation, i.e.\ adding, updating, or removing from the database.
	
	\parag{The $ \ADD $ case}
	The bound on the norm of the state in $ R $ in this case is:
	\begin{align}
		 \norm{\Juni_{R}\Uni_i\CPhO_{\mathcal{Y}}\setminus \V_{R}\Uni_{i-1}\ket{\Psi^{\Good}_{i-1}(\ADD;s)}}\leq \max_{s} \sqrt{\frac{b(s+1)}{N}}.
	\end{align}
	This bound holds , because except for the factor in front of the state and register $ D(x) $ the state is just a good state (one from just before the query we analyze in Eq.~\eqref{eq:add-x-to-ss}). Moreover register $ D(x) $ is normalized (given the fact that $ \eta $ is explicit in the adversary's register).
	
	\parag{The $ \UPD $ case}
	In this case we have a bound of
	\begin{align}
		\norm{\Juni_{R}\Uni_i\CPhO_{\mathcal{Y}}\setminus \V_{R}\Uni_{i-1}\ket{\Psi^{\Good}_{i-1}(\UPD;s)}}
		\leq  \max_{s} \frac{b(s)^{3/2}}{N\sqrt{N-b(s)}},
	\end{align}
	where we follow the same reasoning as in the proof of Lemma~\ref{lem:lazy-step} and Claim~\ref{claim:lazy-setb-y-norm}. 
	
	\parag{The $ \REM $ case}
	Finally we have a bound of
	\begin{align}
		 \norm{\Juni_{R}\Uni_i\CPhO_{\mathcal{Y}}\setminus \V_{R}\Uni_{i-1}\ket{\Psi^{\Good}_{i-1}(\REM;s)}}\leq \max_{s} \frac{\sqrt{b(s)(N-b(s))}}{N} 
	\end{align}
	and to get it we follow the same reasoning as for the $ \ADD $ case.
	
We use these bounds and the triangle inequality to bound the second term in Eq.~\eqref{eq:lazy-Jnorm}:
\begin{align}\label{eq:lazy-J-good-bound}
	&\norm{\Juni_{R}\Uni_i\CPhO_{\mathcal{Y}}\setminus \V_{R}\Uni_{i-1}\ket{\Psi^{\Good}_{i-1}}} \nonumber\\
	&\leq \max_{s\leq i-1} \left(
	\sqrt{\frac{b^3(s)}{N^2(N-b(s))}}  + \frac{\sqrt{b(s)(N-b(s))}}{N} 	+\sqrt{\frac{b(s+1)}{N}} \right)\\
	&\leq \max_{s\leq i-1} \left(
	\sqrt{\frac{b^3(s)}{N^2(N-b(s))}}+2\sqrt{\frac{b(s+1)}{N}} \right).
\end{align}

\end{proof}

\subsection{Other Relations}
For $ R_{\coll} $ we use eq.~\eqref{eq:lazy-eps-bound} with $ b(i)=i-1 $ instead of $ b(i)=i $. The bound on the probability of the event $ \Find $ is
\begin{align}\label{eq:prfind-coll}
	\PR[\Find:\advA[\CStO_{\mathcal{Y}}\setminus R_{\coll}]]\leq 
	\frac{2 q^2}{N}+\frac{4 q^{7/2}}{N\sqrt{N-q}}+ 
	\frac{5 q^5}{N (N-q)}.
\end{align}

For $ R_{\preim} $ in eq.~\eqref{eq:lazy-eps-bound} we set a constant $ b(j)=1 $. The bound on the probability of $ \Find $ is then
\begin{align}\label{eq:prfind-preim}
	\PR[\Find:\advA[\CStO_{\mathcal{Y}}\setminus R_{\preim}]]\leq 
	\frac{9 q}{N}+	\frac{30 q^2 }{N\sqrt{N-1}}+\frac{25 q^3 }{ N(N-1) }.
\end{align}

\section{Additional Details on Quantum-Accessible Oracles}\label{sec:cfo-details}

\subsection{Example Non-Uniform Distributions}
The most important distribution that can be quantumly lazy sampled is the uniform distribution. It was first shown in \cite{zhandry2018record} how to do that. We present a lot of details and intuitions on this matter in the rest of this section. 

Let us say we want to efficiently simulate a quantum oracle for a random function $h:\bits{m}\to\bits{}$, such that $h(x)=1$ with probability $\lambda$. Then the adding function of the corresponding compressed oracle is $\forall x\in \bits{m}$:
\begin{align}
	\Samp_{\lambda}(x):= \left( \begin{array}{cc}
		\sqrt{1-\lambda} & \sqrt{\lambda} \\
		\sqrt{\lambda} & -\sqrt{1-\lambda}
	\end{array} \right),
\end{align}
independent from any previous queries.
This observation comes in useful in tasks like search in a sparse database.

\subsection{Uniform Oracles}\label{sec:uniform-oracles}
For ease of exposition, and to highlight the connection to the formalism in \cite{zhandry2018record}, we present a discussion of compressed oracles with \emph{uniform oracles} that model functions sampled uniformly at random from $\mathcal{F}:=\left\{f:\bits{m}\to\bits{n}\right\}\symbolindexmark{\bits}$. A complete formal treatment of the uniform case, including applications, can be found in \cite{Unruh-forthcoming}. 

We denote the uniform distribution over $\mathcal{F}$ by $\distrU\symbolindexmark{\distrU}$. The cardinality of the set of functions is $|\mathcal{F}|=2^{n2^m}$ and the truth table of any $f\in\mathcal{F}$ can be represented by $2^m$ rows of $n$ bits each. Uniform oracles are the most studied in the random-oracle model and are also analyzed in \cite{zhandry2018record}.

The transformation we use in the case of uniformly sampled functions is the Hadamard transform. The unitary operation to change between types of oracles is defined as $ \symbolindexmark{\HT} $
\begin{equation}
\HT_{n} \ket{x} := \frac{1}{\sqrt{2^n}}\sum_{\xi\in\bits{n}} (-1)^{\xi\cdot x}\ket{\xi}, \label{eq:ht-def} 
\end{equation}
where $\xi\cdot x$ is the inner product modulo two between the $n$-bit strings $\xi$ and $x$ viewed as vectors. In this section the registers $X,Y$ are vectors in the $n$-qubit Hilbert space $ (\mathbb{C}^{2})^{\otimes n} $.

In what follows we first focus on \emph{full} oracles, i.e.\ not compressed ones. We analyze in detail the relations between different pictures of the oracles: the Standard Oracle, the Fourier Oracle, and the intermediate Phase Oracle. Next we provide an explicit algorithmic description of the compressed oracle and discuss the behavior of the compressed oracle in different pictures.

For the QROM, usually the Standard Oracle is the oracle used. The initial state of the oracle is the uniform superposition of truth tables $f$ representing functions $f:\bits{m}\to\bits{n}$. The Standard Oracle acts as follows
\begin{equation}
\StO_{\distrU} \ket{x,y}_{XY} \frac{1}{\sqrt{|\mathcal{F}|}} \sum_{f\in\mathcal{F}}\ket{f}_F =  \frac{1}{\sqrt{|\mathcal{F}|}} \sum_{f\in\mathcal{F}}\ket{x,y \xor f(x)}_{XY}\otimes\ket{f}_F, \label{sto-bits-def}
\end{equation}
where instead of modular addition we use bitwise XOR  denoted by $\xor\symbolindexmark{\xor}$. Note that in the above formulation $\StO_{\distrU}$ is just a controlled XOR operation from the $x$-th row of the truth table to the output register $Y$. We add the subscript $\distrU$ to denote that in the case of uniform distribution we also fix the input and output sets to bit-strings and the operation the oracle performs is not addition modulo $N$ like we introduced it in the main body. The register $F$ contains vectors in $(\mathbb{C}^{2})^{\otimes n 2^m}$.

The Fourier Oracle that stores the queries of the adversary is defined as
\begin{equation}
\FO_{\distrU} \ket{x,\eta}_{XY} \ket{\phi}_F := \ket{x,\eta}_{XY} \ket{\phi \xor \chi_{x,\eta}}_F,\label{eq:fo-bits-def}
\end{equation}
where $\chi_{x,\eta}:=(0^n,\dots,0^n,\eta,0^n,\dots,0^n)$ is a table with $2^m$ rows, among which only the $x$-th row equals $\eta$ and the rest are filled with zeros. Note that initially the $Y$ register is in the Hadamard basis, for that reason we use Greek letters to denote its value.	

To model the random oracle we initialize the oracle register $F$ in the Hadamard basis in the all $0$ state $\ket{\phi}=\ket{0^{n2^m}}$. 

If we take the Standard Oracle again and transform the adversary's $Y$ register instead, again using $\HT$, we recover the commonly used Phase Oracle.
More formally, the phase oracle is defined as 
\begin{equation}
\PhO_{\distrU}:=(\id_m^X\otimes\HT^Y_n)\otimes\id^F_{n2^m} \,\circ\, \StO_{\distrU} \,\circ\, (\mathbbm{1}^X_m\otimes\HT^Y_n)\otimes\id^F_{n2^m},\label{eq:pho-def}
\end{equation}
where $ \id_n $ is the identity operator acting on $ n $ qubits.

Applying the Hadamard transform also to register $F$ will give us the Fourier Oracle 
\begin{align}
\FO_{\distrU} = (\mathbbm{1}^{XY})\otimes\HT_{n 2^m}^F \,\circ\, \PhO_{\distrU} \,\circ\, (\mathbbm{1}^{XY})\otimes\HT_{n2^m}^F \, .
\end{align}

The above relations show that we have a chain of oracles, similar to Eq.~\eqref{eq:chainQFT}:
\begin{align}
\StO_{\distrU} \xleftrightarrow{\HT_{n}^Y} \PhO_{\distrU} \xleftrightarrow{\HT_{n2^m}^F} \FO_{\distrU}.\label{eq:chainHT}
\end{align}

In the following paragraphs we present some calculations explicitly showing how to use the technique and helping understanding why it is correct.

\subsubsection{Full Oracles, Additional Details}
In this section we show detailed calculations of identities claimed in Section~\ref{sec:uniform-oracles}.
First we analyze the Phase Oracle, introduced in Eq.~\eqref{eq:pho-def}.
We can check by direct calculation that this yields the standard Phase Oracle, $ \symbolindexmark{\PhO}  $
\begin{equation}
\PhO_{\distrU} \ket{x,\eta}_{XY} \ket{f}_F = (-1)^{\eta\cdot f(x)}\ket{x,\eta}_{XY} \ket{f}_F.
\end{equation}
Including the full initial state of the oracle register, we calculate
\begin{align}
&\PhO_{\distrU} \ket{x,\eta}_{XY} \frac{1}{\sqrt{|\mathcal{F}|}} \sum_{f\in\mathcal{F}}\ket{f}_F \nonumber\\
&= (\mathbbm{1}^X_m\otimes\HT_{n}^Y)\otimes\id_{n2^m}^F \StO_{\distrU} \ket{x}_X\frac{1}{\sqrt{2^n}}\sum_{y}(-1)^{\eta\cdot y}\ket{y}_Y  \frac{1}{\sqrt{|\mathcal{F}|}} \sum_{f\in\mathcal{F}}\ket{f}_F  \\
&=  (\mathbbm{1}_m^X\otimes\HT_{n}^Y)\otimes\id_{n2^m}^F \ket{x}_X\frac{1}{\sqrt{2^n}}\sum_{y} \sum_{f\in\mathcal{F}}(-1)^{\eta\cdot y}\ket{y\xor f(x)}_Y  \frac{1}{\sqrt{|\mathcal{F}|}}\ket{f}_F \\
&=  \frac{1}{\sqrt{|\mathcal{F}|}} \sum_{f\in\mathcal{F}} \ket{x}_X \sum_{\zeta}  \underset{=\delta(\eta,\zeta)(-1)^{\zeta\cdot f(x)}}{\underbrace{\frac{1}{2^n}\sum_{y}(-1)^{\eta\cdot y}(-1)^{(y \xor f(x))\cdot \zeta}}} \ket{\zeta}_Y\ket{f}_F \\
&=  \frac{1}{\sqrt{|\mathcal{F}|}} \sum_{f\in\mathcal{F}} (-1)^{\eta\cdot f(x)} \ket{x}_{X}\ket{\eta}_Y \ket{f}_F.  
\end{align}

Applying the Hadamard transform also to register $F$ will give us the Fourier Oracle. In the following calculation we denote acting on register $F$ with $\HT_{n2^m}^{\otimes 2^m}$ by $\HT_{n2^m}^{ F}$.
\begin{align}
& \HT_{n2^m}^{F}\circ \PhO_{\distrU}\circ \HT_{n2^m}^{F} \ket{x,\eta}_{XY} \ket{0^{2^m n}}_F = \HT_{n2^m}^{F} \frac{1}{\sqrt{|\mathcal{F}|}} \sum_{f\in\mathcal{F}} (-1)^{\eta\cdot f(x)} \ket{x,\eta}\ket{f}_F \nonumber \\
&=  \frac{1}{|\mathcal{F}|} \sum_{\phi, f} (-1)^{\phi\cdot f} (-1)^{\eta\cdot f(x)} \ket{x,\eta}\ket{\phi}_F \nonumber \\
&= \sum_{\phi} \underset{=\delta(\phi_{x'},0^n)}{\underbrace{\frac{1}{2^{n(2^m-1)}}\sum_{f(x'\neq x)} (-1)^{\phi_{x'}\cdot f(x')}}} \underset{=\delta(\phi_{x},\eta)}{\underbrace{\frac{1}{2^n} \sum_{f(x)} (-1)^{\phi_{x}\cdot f(x)} (-1)^{\eta\cdot f(x)}}} \ket{x,\eta}\ket{\phi}_F \nonumber \\
&= \ket{x,\eta}\ket{0^{2^m n}\xor \chi_{x,\eta}}
\end{align}
where we write $f(x)$ and $\phi_{x}$ to denote the $x$-th row of the truth table $f$ and $\phi$ respectively.

\subsubsection{Compressed Oracles, Additional Details}
Let us state the input-output behavior of the compressed oracle $ \CFO_{\distrU} $ for uniform distributions.
The input-output behavior of $\CFO_{\distrU}$ is given by the following equation, $x_r$ is the smallest $x_i\in D^X$ such that $x_r \geq x$:
\begin{align}\label{eq:cfo-in-out}
& \CFO_{\distrU} \ket{x,\eta}_{XY} \ket{x_1,\eta_1}_{D_1}\cdots \ket{x_{q-1},\eta_{q-1}}_{D_{q-1}}\ket{\perp,0^n}_{D_q}=\ket{x,\eta}_{XY} \ket{\psi_{r-1}} \nonumber\\
&\otimes \begin{cases}
\ket{x_{r},\eta_{r}}_{D_{r}}   \cdots  \ket{x_{q-1},\eta_{q-1}}_{D_{q-1}} \ket{\perp,0^n}_{D_q}  & \text{if } \eta=0^n, \\
\ket{x,\eta}_{D_{r}}\ket{x_{r},\eta_{r}}_{D_{r+1}} \cdots  \ket{x_{q-1},\eta_{q-1}}_{D_{q}} & \text{if } \eta\neq0^n, x\neq x_r,\\
\ket{x_{r},\eta_{r}\xor\eta}_{D_{r}}   \cdots  \ket{x_{q-1},\eta_{q-1}}_{D_{q-1}} \ket{\perp,0^n}_{D_q} & \text{if } \eta\neq0^n, x= x_r,\\
& \text{ } \eta\neq\eta_r,\\
\ket{x_{r+1},\eta_{r+1}}_{D_{r}}\cdots  \ket{x_{q-1},\eta_{q-1}}_{D_{q-2}} \ket{\perp,0^n}_{D_{q-1}} \ket{\perp,0^n}_{D_q}  & \text{if } \eta\neq0^n, x= x_r,\\
& \text{ } \eta=\eta_r,
\end{cases} 
\end{align}
where $\ket{\psi_{r-1}}:=\ket{x_1,\eta_1}_{D_1}\cdots\ket{x_{r-1},\eta_{r-1}}_{D_{r-1}}$.

In the following let us change the picture of the compressed oracle to see how the Compressed Standard Oracle and Compressed Phase Oracle act on basis states.
Let us begin with the Phase Oracle, given by the Hadamard transform of the oracle database 
\begin{equation}
\CPhO_{\distrU}:=\id_{n+m}\otimes\HT_{n}^{D^Y}\circ  \CFO_{\distrU}\circ \id_{n+m}\otimes\HT_{n}^{D^Y},
\end{equation}\symbolindexmark{\CPhO}
where by $\HT_{n}^{D^Y}$ we denote transforming just the $Y$ registers of the database: $\HT_{n}^{D^Y}:= (\id_m\otimes \HT_{n})^{\otimes q}$.
Let us calculate the outcome of applying $\CPhO$ to a state for the first time, for simplicity we omit all but the first register of $D$
\begin{align}
&\CPhO_{\distrU} \ket{x,\eta}_{XY} \frac{1}{\sqrt{2^n}}\sum_{z\in\bits{n}}\ket{\perp,z}_D  = \id_{n+m}\circ\HT_{n}^{D^Y}\circ \CFO_{\distrU}\ket{x,\eta}_{XY} \ket{\perp,0^n}_D \\
&= \id_{n+m}\circ\HT_{n}^{D^Y} \left( (1-\delta(\eta,0^n)) \ket{x,\eta}_{XY} \ket{x,\eta}_D +\delta(\eta,0^n) \ket{x,\eta}_{XY} \ket{\perp,0^n}_D \right) \\
&= \frac{1}{\sqrt{2^n}}\sum_{z\in\bits{n}} \left( (1-\delta(\eta,0^n)) (-1)^{\eta\cdot z} \ket{x,\eta}_{XY}\ket{x,z}_D + \delta(\eta,0^n) \ket{x,0^n}_{XY} \ket{\perp,z}_{D} \right). \label{eq:cpho-calc}
\end{align}
If we defined the Compressed Phase Oracle from scratch we might be tempted to omit the coherent deletion of $\eta=0^n$. The following attack shows that this would brake the correctness of the compressed oracles:
The adversary inputs the equal superposition in the $X$ register $\frac{1}{\sqrt{2^m}}\sum_x\ket{x,0^n}_{XY}$, after interacting with the regular $\CPhO_{\distrU} $ the state after a single query is
\begin{equation}
\frac{1}{\sqrt{2^m}}\sum_x\ket{x,0^n}_{XY} \overset{\CPhO_{\distrU} }{\mapsto}  \frac{1}{\sqrt{2^m}}\sum_x\ket{x,0^n}_{XY}\frac{1}{\sqrt{2^n}}\sum_{z}\ket{\perp,z}_{D},
\end{equation}
but with a modified oracle that does not take care of this deleting, simply omits the term with $\delta(\eta,0^n)$, let us call it $\CPhO'_{\distrU} $, the resulting state is
\begin{equation}
\frac{1}{\sqrt{2^m}}\sum_x\ket{x,0^n}_{XY} \overset{\CPhO'_{\distrU} }{\mapsto} \frac{1}{\sqrt{2^m}}\sum_x\ket{x,0^n}_{XY} \frac{1}{\sqrt{2^n}}\sum_{z}\ket{x,z}_{D}.
\end{equation}
Performing a measurement of the $X$ register in the Hadamard basis distinguishes the two states with probability $1-\frac{1}{2^m}$.

Let us inspect the state after making two queries to the Compressed Phase Oracle
\begin{align}
&\CPhO_{\distrU}\ket{x_2,\eta_2}_{X_2Y_2}\CPhO_{\distrU}\ket{x_1,\eta_1}_{X_1Y_1} \frac{1}{2^n}\sum_{z_1,z_2\in\bits{n}}\ket{\perp,z_1}_{D_1}\ket{\perp,z_2}_{D_2} \nonumber \\
&= \ket{x_2,\eta_2}\ket{x_1,\eta_1} \frac{1}{2^n}\sum_{z_1,z_2}\left(  (-1)^{\eta_1\cdot z_1}\delta(\eta_2,0^n)(1-\delta(\eta_1,0^n))  \underset{=|\psi^{\NOT})}{\underbrace{ \ket{x_1,z_1}_{F_1}\ket{\perp,z_2}_{F_2}}} \right. \nonumber\\
& + \delta(\eta_2,0^n)\delta(\eta_1,0^n)\underset{=|\psi^{\NOT})}{\underbrace{\ket{\perp,z_1}_{F_1}\ket{\perp,z_2}_{F_2}}} \nonumber \\
&+ (-1)^{\eta_2\cdot z_1}(1-\delta(\eta_2,0^n))\delta(\eta_1,0^n)\underset{=|\psi^{\ADD})}{\underbrace{\ket{x_2,z_1}_{F_1}\ket{\perp,z_2}_{F_2}}}   \nonumber\\
&\left.+ (-1)^{\eta_1\cdot z_1}(-1)^{\eta_2\cdot z_2}(1-\delta(\eta_2,0^n))(1-\delta(x_1,x_2))(1-\delta(\eta_1,0^n))\underset{=|\psi^{\ADD})}{\underbrace{\ket{x_1,z_1}_{F_1}\ket{x_2,z_2}_{F_2}}}  \right. \nonumber\\
&\left. +(1-\delta(\eta_2,0^n))\delta(x_1,x_2)\delta(\eta_1,\eta_2)(1-\delta(\eta_1,0^n))\underset{=|\psi^{\REM})}{\underbrace{\ket{\perp,z_1}_{F_1}\ket{\perp,z_2}_{F_2}}}  \right. \nonumber\\
& +(1-\delta(\eta_2,0^n))\delta(x_1,x_2)(1-\delta(\eta_1,\eta_2))(1-\delta(\eta_1,0^n))\nonumber\\
&\left.\cdot (-1)^{(\eta_1\xor\eta_2)\cdot z_1} \underset{=|\psi^{\UPD})}{\underbrace{\ket{x_1,z_1}_{F_1}\ket{\perp,z_2}_{F_2}}}  \right),
\end{align}
where by the superscripts we denote the operation performed by $ \CPhO_{\distrU} $ on the compressed database. By $ \ADD $ we denote adding a new pair $ (x,\eta) $, by $ \UPD $ changing the $ Y $ register of an already stored database entry, $ \REM $ signifies removal of a database entry, and $ \NOT $ stands for doing nothing, that happens if the queried $ \eta=0^n $. 

\medskip
Let us discuss the Compressed Standard Oracle. We know that it is the Hadamard transform of the adversary's register followed by $\CPhO_{\distrU} $
\begin{equation}
\CStO_{\distrU} =  \id_m\otimes\HT_{n}^Y \circ\CPhO_{\distrU} \circ \id_m\otimes\HT_{n}^Y. \label{eq:csto-cpho}
\end{equation}
Let us present the action of $\CStO$ in the first query of the adversary
\begin{align}\label{eq:csto-calc}
& \CStO_{\distrU} \ket{x,y}_{XY} \frac{1}{\sqrt{2^n}}\sum_{z\in\bits{n}}\ket{\perp,z}_D \nonumber\\
& = \id_m\otimes\HT_{n}^Y \circ\CPhO_{\distrU}  \frac{1}{\sqrt{2^n}}\sum_{\eta\in\bits{n}} (-1)^{\eta\cdot y} \ket{x,\eta}_{XY} \frac{1}{\sqrt{2^n}}\sum_{z\in\bits{n}}\ket{\perp,z}_D \\
& = \id_m\otimes\HT_{n}^Y \frac{1}{\sqrt{2^n}}\sum_{\eta\in\bits{n}} \frac{1}{\sqrt{2^n}}\sum_{z\in\bits{n}}(-1)^{\eta\cdot y} \Bigg( (1-\delta(\eta,0^n))  (-1)^{\eta \cdot z} \ket{x,\eta}_{XY}\ket{x,z}_D \nonumber\\
&+ \delta(\eta,0^n) \ket{x,0^n}_{XY} \ket{\perp,z}_{D} \Bigg)\\
& = \frac{1}{2^n} \sum_{y',\eta} \frac{1}{\sqrt{2^n}}\sum_{z}(-1)^{\eta\cdot y}(-1)^{y'\cdot\eta} \Bigg( (1-\delta(\eta,0^n)) (-1)^{\eta \cdot z} \ket{x,y'}_{XY}\ket{x,z}_D \nonumber\\
&+ \delta(\eta,0^n)  \ket{x,y'}_{XY} \ket{\perp,z}_{D} \Bigg) \\
&= \sum_{y'} \frac{1}{\sqrt{2^n}}\sum_{z} \underset{= \delta(y',y\xor z)- \frac{1}{2^n}}{\underbrace{\frac{1}{2^n} \sum_{\eta\neq 0} (-1)^{\eta\cdot y}(-1)^{y'\cdot\eta} (-1)^{\eta \cdot z}}} \ket{x,y'}_{XY}\ket{x,z}_D \nonumber\\
&+ \sum_{y'} \frac{1}{\sqrt{2^n}}\sum_{z}\frac{1}{2^n} \ket{x,y'}_{XY}\ket{\perp,z}_D\\
&= \frac{1}{\sqrt{2^n}}\sum_{z} \left(\ket{x,y\xor z}_{XY}\ket{x,z}_D -\frac{1}{2^n}\sum_{y'} \ket{x,y'}_{XY}\ket{x,z}_D +\frac{1}{2^n}\sum_{y'}\ket{x,y'}_{XY}\ket{\perp,z}_D \right).
\end{align}
We would like to note that a similar calculation and resulting state is presented in \cite{hosoyamada2019tight}.

\subsection{Detailed Algorithm for Alg.~\ref{alg:generalCFO}: \texorpdfstring{$\CFO_{\distrD}$}{CFOD}}\label{sec:algorithms}
In Algorithm~\ref{alg:detail} we present the fully-detailed version of Algorithm~\ref{alg:generalCFO}. This algorithm runs the following subroutines:
\begin{itemize}
	\item $\funLocate$, Function~\ref{fun:Locate}: This subroutine locates the positions in $\De$ where the $x-$entry coincides with the $x-$entry of the query. The result is represented as $q$ bits, where $q_i = 1 \iff \De_i^X = x$. This result is then bitwise XOR'ed into an auxiliary register $L$.
	\item $\funAdd$, Function~\ref{fun:Add}: This subroutine adds queried $ x $ to the database and take care of appropriate padding. Here our padding is simply $ (0^m, 0^n) $.
	\item $\funUpd$, Function~\ref{fun:Upd}: This subroutine updates the database by subtracting $ \eta $ after a suitable basis transformation.
	\item $\funRem$, Function~\ref{fun:Rem}: This subroutine removes $ (x ,0) $ entries from the database and puts them to the back in the form of padding.
	\item $\funClean$, Function~\ref{fun:Clean}: This subroutine cleans the auxiliary registers setting them back to initial values.
	\item $\funLarger\symbolindexmark{\funLarger}$: This subroutine  determines whether one value is larger than a second value, it works on three registers, say $ D^XXA $ and flips the bit in $ A $ if the value of $ D^X $ is larger than the value in $ X $, so 
	\begin{equation}
		\funLarger^{D^XXA}\ket{u}_{D^X}\ket{v}_X\ket{a}_A=\ket{u}_{D^X}\ket{v}_X \begin{cases}
		\ket{a\xor 1}_A \textnormal{ if } u>v \\
		\ket{a}_A \textnormal{ otherwise}
		\end{cases}.
	\end{equation} 
	In \cite{sena2007quantum} an efficient implementation of $ \funLarger $ for $ u,v $ being bitstrings can be found.
\end{itemize}
In the $\funAdd$ and $\funRem$ subroutine the unitary $\Puni$ can be found. $\Puni$ permutes the database  such that a recently removed entry in the database is moved to the end of the database. Conversely $\Puni^{-1}$ permutes the database such that an empty entry is created in the database as to ensure the correct ordering of the $x-$entries after adding the query into this newly created empty entry:
\begin{align} \label{eq:puni}
\Puni\ket{x_1,...,x_q} \otimes \ket{y_1,...,y_n} := \ket{\sigma_n \circ ... \circ \sigma_1(x_1,...,x_q)} \otimes \ket{y_1,...,y_n} \, ,
\end{align}\symbolindexmark{\Puni}
where $\sigma_i$ is applied conditioned on $y_i = 1$ and $\sigma_i(x_1,...,x_n) := (x_1,...,x_{i-2},x_{i-1},x_{i+1},x_{i+2},...,x_q,x_i)$.

\begin{algorithm} \label{alg:detail}
\DontPrintSemicolon
\SetKwInOut{Input}{Input}\SetKwInOut{Output}{Output}
\Input{Unprepared database and adversary query: $\ket{x,\eta}_{XY}\ket{\De}_D$}
\Output{$\ket{x,\eta}_{XY}\ket{\De'}_{D}$}
\BlankLine
\caption{Detailed $\CFO_{\distrD}$}\label{alg:generalCFO-detailed}
\BlankLine
$\ket{a}_A = \ket{0\in\bits{}}_A$ \tcp*[r]{initialize auxiliary register $A$}
$\ket{l}_L = \ket{0^q\in\bits{q}}_L$ \tcp*[r]{initialize auxiliary register $L$}
$\ket{l}_L \mapsto \funLocate(\ket{x}_{X}\ket{\De}_D\ket{l}_L)$ \tcp*[r]{locate $x$ in the database}
\If(\tcp*[f]{if not located}){$l = 0^q$}{
	$\ket{a}_A \mapsto \ket{a \oplus 1}_A$ \tcp*[r]{save result to register A}
}
\If(\tcp*[f]{if not located}){$a = 1$}{
    $\ket{\De}_D\ket{l}_L \mapsto \funAdd(\ket{x}_{X}\ket{\De}_D$) \tcp*[r]{add $x-$entry to the database}
}
$\ket{\De^Y}_{D^Y} \mapsto \funUpd (\ket{\eta}_{Y}\ket{\De^Y}_{D^Y}\ket{l}_L)$ \tcp*[r]{update register $D^Y$} 
$\ket{\De}_D\ket{l}_L \mapsto \funRem (\ket{x}_X\ket{\De}_D\ket{l}_L)$ \tcp*[r]{remove a database entry if $\ize=0$}
$\ket{a}_A \mapsto \funClean(\ket{y}_Y\ket{\De^Y}_{D^Y}\ket{l}_L)$ \tcp*[r]{uncompute register $A$}
$\ket{l}_L \mapsto \funLocate(\ket{x}_{X}\ket{\De}_D \ket{l}_L)$ \tcp*[r]{uncompute register $L$}
\KwRet $\ket{x,\eta}_{XY}\ket{\De'}_{D}$\tcp*[r]{$\De'$ is the modified database}
\end{algorithm}

\begin{subalgorithm}
	\caption{$\funLocate$\symbolindexmark{\funLocate}}\label{fun:Locate}
	\DontPrintSemicolon
	\SetKwInOut{Input}{Input}\SetKwInOut{Output}{Output} 
	\Input{$\ket{x}_{X}\ket{\De}_D\ket{l}_L$}
	\Output{$\ket{x}_{X}\ket{\De}_D\ket{l'}_L$}
	Set $\ket{a}_A = \ket{0\in\mathcal{X}}_A$ \tcp*[r]{initialize auxiliary register $ A $}
	\For{$i = 1,...,q$}{
		\If(\tcp*[f]{locate entries in the database}){$\ize_i \neq 0$}{ 
			$\ket{a}_{A} \mapsto \ket{a + (\De_i^X-x)}_{A}$ \tcp*[r]{database entry $ - $ query}
			\If(\tcp*[f]{locate matches in the database}){$a_i \neq 0$}{ 
				$\ket{l_i}_{L_i} \mapsto \ket{l_i \oplus 1}_{L_i}$ \tcp*[r]{save the corresponding positions} 
			}
			$\ket{a}_{A} \mapsto \ket{a - (\De_i^X-x)}_{A}$\tcp*[r]{uncompute register $ A $}
		}
	}
	\KwRet $\ket{x}_{X}\ket{\De}_D\ket{l'}_R$ \tcp*[r]{$l'$ contains the position of $x$ in $\De$}
\end{subalgorithm}

\begin{subalgorithm}
	\caption{$ \funAdd $\symbolindexmark{\funAdd}}\label{fun:Add}
	\DontPrintSemicolon
	\SetKwInOut{Input}{Input}\SetKwInOut{Output}{Output}
		\Input{$\ket{x}_{X}\ket{\De}_D\ket{l}_L$}
		\Output{$\ket{x}_{X}\ket{\De'}_D\ket{l'}_L$}
		Set $\ket{a}_A = \ket{0^q\in\bits{q}}_A$\tcp*{initialize auxiliary register A}
		\For{$i = 1,...,q$}{
			$\ket{a_i}_{A_i} \mapsto \funLarger(\ket{\De^X_i}_{D_i^X}\ket{x}_{X}\ket{a_i}_{A_i})$ \tcp*{check if database entry $ > $ query}
			\If(\tcp*[f]{correct for empty entries}){$\De_i^{X} \neq \perp$}{ 
				$\ket{a_i}_{A_i} \mapsto \ket{a_i \oplus 1}_{A_i}$\;
			}
			\For(\tcp*[f]{flip all higher entries}){$j=i+1,...,q$}{
				$\ket{a_j}_{A_j} \mapsto \ket{a_j \oplus a_i}_{A_j}$ \tcp*[f]{so we're left with one position}
			}
		}
		$\ket{\De}_D \mapsto \Puni^{-1}(\ket{\De}_D \otimes \ket{a}_A)$\tcp*{permute D to create empty entry}\tcp*{$\Puni$ is defined in~\eqref{eq:puni}}
		\For{$i = 1,...,q$}{
			\If(\tcp*[f]{look for this empty entry}){$a_i = 1$}{ 
				$\ket{\De_i^X}_{D_i^X} \mapsto \ket{\De_i^X - x}_{D_i^X}$\tcp*[r]{add $x-$entry to the database}
				$\ket{l_i}_{L_i} \mapsto \ket{l_i \oplus 1}_{L_i}$\tcp*[r]{update location register}
			}
		}
		\If(\tcp*[f]{Non zero $x$ implies non zero $a$}){$x \neq 0$}{
			\For{$i = 1,...,q$}{
				\If(\tcp*[f]{if located}){$l_i = 1$}{
					$\ket{a_i}_{A_i} \mapsto \ket{a_i \oplus 1}_{A_i}$ \tcp*[r]{uncompute register $ A $}
				}
			}
		}
		\KwRet $\ket{x}_{X}\ket{\De'}_D\ket{l'}_L$ \tcp*[r]{$\De'$ is the modified database}\tcp*[r]{$l'$ is modified $l$}
\end{subalgorithm}

\begin{subalgorithm}
	\caption{$ \funUpd $\symbolindexmark{\funUpd}}\label{fun:Upd}
	\DontPrintSemicolon
	\SetKwInOut{Input}{Input}\SetKwInOut{Output}{Output}
	\Input{$\ket{\eta}_{Y}\ket{\De^Y}_{D^Y}\ket{l}_L$}
	\Output{$\ket{\eta}_{Y}\ket{\De'^Y}_{D^Y}\ket{l}_L$}
		Apply $\QFT_{N}^{D^Y}\Samp_{\distrD}$ \tcp*{transform to the Fourier basis}
		\For{$i = 1,...,q$}{
	    	\If(\tcp*[f]{if located}){$l_i = 1$}{
				$\ket{\Delta_i^Y}_{D_i^Y} \mapsto \ket{\Delta_i^Y - \eta}_{D_i^Y}$\tcp*{Update the Y register of entry}
	    	}
	    }
	    Apply $\Samp^{\dagger}_{\distrD}\QFT^{\dagger D^Y}_{N}$ \tcp*{transform back to the unprepared database}
	    \KwRet $\ket{\eta}_{Y}\ket{\De'^Y}_{D^Y}\ket{l}_L$ \tcp*{$\De'^Y$ is modified $Y$ register of the database}
\end{subalgorithm}

\begin{subalgorithm}
	\caption{$ \funRem $\symbolindexmark{\funRem}}\label{fun:Rem}
	\DontPrintSemicolon
	\SetKwInOut{Input}{Input}\SetKwInOut{Output}{Output}
	\Input{$\ket{x}_{X}\ket{\De}_D\ket{l}_L$}
	\Output{$\ket{x}_{X}\ket{\De'}_D\ket{l'}_L$}   
		Set $\ket{a}_A = \ket{0^q}_A$\tcp*{initialize auxiliary register $A$}
		Set $\ket{b}_B = \ket{0}_B$\tcp*{initialize auxiliary register $B$}
		\For{$i = 1,...,q$}{
			\If{$l_i = 1$}{
				\If(\tcp*[f]{if entry is incorrect}){$\ize_i = 0$}{
					$\ket{\De_i^X}_{D_i^X} \mapsto \ket{\De_i^X - x}_{D_i^X}$\tcp*{remove the entry}
					$\ket{b}_B \mapsto \ket{b \oplus 1}_B$ \tcp*{save that we have removed an entry}
				}
			}
		}
		\If(\tcp*[f]{if we removed an entry}){$b = 1$}{
			\For{$i = 1,...,q$}{
				$\ket{a_i}_{A_i} \mapsto \funLarger(\ket{x}_{X},\ket{\De_i^X}_{D_i^X},\ket{a_i}_{A_i})$ \tcp*{check if query $>$ database entry}
				\If(\tcp*[f]{Correct for x = 0}){$x = 0$}{
					\If(\tcp*[f]{correct for empty entries}){$\De_i^{Y} \neq 0$}{ 
						$\ket{a_i}_{A_i} \mapsto \ket{a_i \oplus 1}_{A_i}$\;
					}
				}
				\For(\tcp*[f]{flip all lower entries}){$j=i-1,...,1$}{
					$\ket{a_j}_{A_j} \mapsto \ket{a_j \oplus a_i}_{A_j}$ \tcp*{so we're left with only the removed position}
				}
				$\ket{l_i}_{L_i} \mapsto \ket{l_i \oplus a_i}_{L_i}$ \tcp*{correct for the removed entry}
			}
			$\ket{\De}_D \mapsto P(\ket{\De}_D \otimes \ket{a}_A)$\tcp*{permute D to move the empty entry}
			\For(\tcp*[f]{uncompute register A}){$i = q,...,1$}{
				\For(\tcp*[f]{by calculating the first position}){$j=q,...,i+1$}{
					$\ket{a_j}_{A_j} \mapsto \ket{a_j \oplus a_i}_{A_j}$\tcp*[f]{such that database entry > query}
				}
				\If(\tcp*[f]{as in the $\funAdd$ subroutine}){$\De_i^{Y} \neq 0$}{ 
					$\ket{a_i}_{A_i} \mapsto \ket{a_i \oplus 1}_{A_i}$\;
				}
				$\ket{a_i}_{A_i} \mapsto \funLarger(\ket{\De_i^X}_{D_i^X},\ket{x}_{X},\ket{a_i}_{A_i})$\; 
			}
		}
		$\ket{a}_A \mapsto \funLocate(\ket{x}_{X}\ket{\De}_D \ket{l}_A)$\;
		\If(\tcp*[f]{check if we have removed}){$A = 0^q$}{
			$\ket{b}_B \mapsto \ket{b \oplus 1}_B$ \tcp*{Uncompute register $B$}	
		}
		$\ket{a}_A \mapsto \funLocate(\ket{x}_{X}\ket{\De}_D \ket{l}_A)$ \tcp*[r]{uncompute register $A$}
		\KwRet $\ket{x}_{X}\ket{\De'}_D\ket{l'}_L$ \tcp*{$\De'$ is modified database}\tcp*{$l'$ is modified $l$}
\end{subalgorithm}

\begin{subalgorithm}
	\caption{$ \funClean $\symbolindexmark{\funClean}}\label{fun:Clean}
	\DontPrintSemicolon
	\SetKwInOut{Input}{Input}\SetKwInOut{Output}{Output}
	\Input{$\ket{\eta}_{Y}\ket{\De^Y}_D\ket{l}_L\ket{a}_A$}
	\Output{$\ket{\eta}_{Y}\ket{\De^Y}_D\ket{l}_L\ket{a'}_A$}   
		Set $\ket{b}_B = \ket{0\in\mathcal{Y}}_B$\tcp*{initialize auxiliary register $B$}
		Apply $\QFT_{N}^{D^Y}\Samp_{\distrD}$ \tcp*{transform to the Fourier basis}
		\For{$i = 1,...,q$}{
			\If{$l_i = 1$}{
				$\ket{b}_{B} \mapsto \ket{b + (\Delta_i^Y-\eta)}_{B}$ \tcp*[r]{database entry $ - $ query}
				\If(\tcp*[f]{locate matches in the database}){$b = 0$}{ 
					\If(\tcp*[f]{if we added}){$\eta \neq 0$}{ 
						$\ket{a}_A \rightarrow \ket{a \oplus 1}_A$
					}
				}
				$\ket{b}_{B} \mapsto \ket{b - (\Delta_i^Y-\eta)}_{B}$ \tcp*[r]{uncompute register $B$}
			}
		}
		Apply $\Samp^{\dagger}_{\distrD}\QFT^{\dagger D^Y}_{N}$\tcp*{transform back to the unprepared database}
		\KwRet $\ket{\eta}_{Y}\ket{\De^Y}_D\ket{l}_L\ket{a'}_A$ \tcp*{$a'$ is modified register $A$}
\end{subalgorithm}

\newpage

\section{Collapsingness of Sponges}\label{sec:collapsingness}
Collapsingness is a security notion defined in \cite{Unruh2016a}; It is a purely quantum notion strengthening collision resistance. It was developed to capture the required feature of hash functions used in cryptographic commitment protocols.

In this section we prove that quantum indifferentiability implies collapsingness.
We begin by introducing the notion of \emph{collapsing} functions.

For quantum algorithms $\advA$, $\advB$ with quantum access to $ \Ho $, consider the following games:
\begin{align}
	\Collapse{1}: {}\quad  & (S,M,h) \gets \advA^{\Ho}(), \,         m\gets \meas(M),  \, b\gets \advB^{\Ho}(S,M),\label{eq:collapse-game1} \\
	\Collapse{2}: {}\quad  & (S,M,h) \gets \advA^{\Ho}(),\,  \phantom{m\gets\meas(M),{}}\, b\gets \advB^{\Ho}(S,M).\symbolindexmark{\Collapse}
\end{align}
Here $S,M$ are quantum registers. $\meas(M)$ is a measurement of $M$ in the computational basis.
The intuitive meaning of the above games is that part $ \advA $ of the adversary prepares a quantum register $ M $ that holds a superposition of inputs to $ \Ho $ that all map to $ h $. Then she sends $ M $ along with the side information $ S $ to $ \advB $. The task of the second part of the adversary is to decide whether measurement $ \meas $ of the register $ M $ occurred or not.

We call an adversary $(\advA,\advB)$ \emph{valid} if and only if $\PR[\Ho(m)=h]=1$ when we run $(S,M,h)\gets \advA^{\Ho}()$ in $ \Collapse{1} $ from Eq.\eqref{eq:collapse-game1} and measure $M$ in the computational basis as $m$.

\begin{defi}[Collapsing \cite{Unruh2016a}]\label{def:collapse}
	A function $\Ho$ is \emph{collapsing} if for any valid quantum-polynomial-time adversary $(\advA,\advB)$
	\begin{align}
		\abs{\PR[b=1:\Collapse{1}]-\PR[b=1:\Collapse{2}]} <\eps,
	\end{align}
	where the \textnormal{collapsing-advantage} $ \eps $ is negligible.
\end{defi}
A more in-depth analysis of this security notion can be found in \cite{Unruh2016a,Unruh2016,czaj18sponge,fehr2018}.

It was shown in \cite{Unruh2016a} that if $ \Ho $ is a random oracle then is it collapsing:
\begin{lemm}[Lemma 37 \cite{Unruh2016a}]\label{lem:collapsing-ro}
	Let $ \Ho :\mathcal{X}\to\mathcal{Y} $ be a random oracle, then any valid adversary $ (\advA^{\Ho},\advB^{\Ho}) $ making $ q $ quantum queries to $ \Ho $ has collapsing-advantage $ \eps\in \cO{\sqrt{\frac{q^3}{\abs{\mathcal{Y}}}}} $.
\end{lemm}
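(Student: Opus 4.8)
The statement is Lemma~37 of \cite{Unruh2016a}, so one may simply cite it; here I sketch how to re-derive it inside the framework of this paper. First I would replace the random oracle $\Ho$ in both games $\Collapse{1}$ and $\Collapse{2}$ by the compressed standard oracle $\CStO_{\mathcal Y}$ for the uniform distribution; by correctness of the compressed oracle, Thm.~\ref{thm:indist-cfofo-gen} (the uniform distribution is in particular a product distribution), this changes none of the relevant probabilities, and the interaction now carries an explicit compressed-database register $D$. After the $\advA$-phase the global state is a pure state $\ket\Psi$ on $\advA$'s workspace, the side-information register $S$, the message register $M$, and $D$. Validity of $(\advA,\advB)$ forces $\ket\Psi$ to be supported — up to an additive term that is negligible because $\Ho$ is a random oracle — on basis states in which the value $h$ output by $\advA$ already appears as a \emph{recorded} output of $D$, at an input stored in $M$, since an un-recorded input $m$ satisfies $\Ho(m)=h$ only with probability $1/\abs{\mathcal Y}$.

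The plan is then to show that the collapsing advantage is governed by the probability of a collision in $D$. Puncture the oracle on the collision relation $R_{\coll}$ during both phases, write $\Find$ for the event that a collision is ever detected, and keep $\meas$ for the computational-basis measurement of $M$ in $\Collapse{1}$. By the compressed-oracle O2H lemma, Thm.~\ref{thm:comp-o2h}, passing from the unpunctured to the punctured game changes $\PR[b=1]$ in either of $\Collapse{1},\Collapse{2}$ by at most $\sqrt{(q+1)\PR[\Find]}$, and by Lemma~\ref{lem:prfind-preim-coll} (using $R_{\coll}\subseteq R_{\preim}\cup R_{\coll}$) one has $\PR[\Find]\in\cO{q^2/\abs{\mathcal Y}}$ in the relevant parameter range. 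Conditioned on $\neg\Find$ the database never contains a collision, so by validity $h$ has a unique recorded preimage $m^{*}(D)$ and $M$ holds $m^{*}(D)$ in every branch; one then argues that, conditioned on $\neg\Find$, inserting $\meas$ between the two phases does not change the distribution of $b$, i.e.\ the punctured $\Collapse{1}$ and $\Collapse{2}$ games are almost identical in the sense of Def.~\ref{def:almost-id-or1}. Lemma~\ref{lem:almost-id-gms-dist1} then bounds their advantage by $\cO{\PR[\Find]}$, which is dominated by the O2H error, so altogether the collapsing advantage lies in $\cO{\sqrt{q^3/\abs{\mathcal Y}}}$.

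The step I expect to be the real obstacle is precisely the claim that $\meas$ is undetectable conditioned on $\neg\Find$: even with no collision, $M$ equals a \emph{function} $m^{*}(D)$ of the still-superposed database, so $M$ stays entangled with $D$ and $\meas$ genuinely disturbs the joint state — what has to be shown is that $\advB$, making only oracle queries, cannot turn this disturbance into a distinguishing bit without creating a collision. This is the quantitative heart of \cite[Lemma~37]{Unruh2016a}; in the present formalism the compressed-oracle picture serves mainly to make the collision bound of Lemma~\ref{lem:db-bounds} and the puncturing/O2H machinery available. A more hands-on alternative would reduce a successful distinguisher directly to a collision finder — interference between the $M=m$ and $M=m'$ branches with $m\neq m'$ can be exploited only when $D$ simultaneously records $(m,h)$ and $(m',h)$, a collision — and then invoke Lemma~\ref{lem:db-bounds} to bound the collision probability by $\cO{q^3/\abs{\mathcal Y}}$ and hence the collapsing advantage by $\cO{\sqrt{q^3/\abs{\mathcal Y}}}$.
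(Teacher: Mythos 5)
The paper does not actually prove this statement---it is imported verbatim as Lemma~37 of \cite{Unruh2016a}, so your opening remark that ``one may simply cite it'' is exactly what the paper does, and on that level your answer matches.

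Your supplementary re-derivation sketch, however, has a genuine gap, and it sits precisely where you flag it. The reduction to ``no collision in $D$'' via puncturing, the O2H step, and the $\PR[\Find]\in\cO{q^2/\abs{\mathcal Y}}$ bound are all fine and do combine to the claimed $\cO{\sqrt{q^3/\abs{\mathcal Y}}}$ \emph{if} one can show that, conditioned on $\neg\Find$, inserting $\meas(M)$ is undetectable. But that conditional indistinguishability is the entire content of the lemma, and your sketch only asserts it (``one then argues\ldots''). Moreover, the tool you propose for it does not apply as stated: Def.~\ref{def:almost-id-or1} and Lemma~\ref{lem:almost-id-gms-dist1} compare two \emph{punctured oracles} differing in their puncturing relations on the database register, whereas your two punctured games use the identical oracle and differ by an extra computational-basis measurement on the adversary's register $M$. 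Even granting that the joint state is supported on pairs $(m^{*}(D),D)$, the register $M$ remains entangled with the superposed database, so $\meas$ is a nontrivial disturbance of the joint state; one must still argue that $\advB$'s subsequent queries cannot re-interfere branches with $m^{*}(D)\neq m^{*}(D')$ without the database exhibiting a collision (or losing the recorded pair $(m,h)$), and quantify the error in the ``validity implies $(m,h)\in D$'' step for a superposed $M$. This is doable---it is essentially Zhandry's collapsingness argument in \cite{zhandry2018record}---but it is the quantitative heart of the proof and is missing here, so the sketch does not yet constitute a proof.
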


In the rest of this section we state and prove that any function that is indifferentiable from a collapsing function is itself collapsing. In the context of sponges, together with thm.~\ref{thm:quantumindiff-trans}, we reprove the result of \cite{czaj18sponge} in a modular way that might come useful when indifferentiability of sponges with permutations is established.
\begin{thm}[Quantum indifferentiability preserves collapsingness]\label{thm:indiff2collaps}
	Let $ \sysC $ be a construction based on an internal function $f$, and let $\sysC$ be $ (q,\eps_{I}(q)) $-indifferentiable from an ideal function $\sysC_{\mathrm{ideal}}$ with simulator $S$. Assume further that $\sysC_{\mathrm{ideal}}$ allows for a collapsingness advantage at most $\eps_{\mathrm{coll}}(q)$ for a $q$-query adversary. Then $ \sysC $ is collapsing with advantage $ \eps_{\mathrm{coll}}(q_{\sysC}, q_f)= 2\eps_{I}(q_{\sysC}+q_f)+\eps_{\mathrm{coll}}(q_{\sysC}+\alpha q_f)  $, where $q_{\sysC}$ and $q_f$ are the number of queries to $ \sysC$ and $f$, respectively, and $\alpha$ is the number of queries simulator $\Sim$ makes (at most) to $\sysC_{\mathrm{ideal}}$ for each time it is queried.
\end{thm}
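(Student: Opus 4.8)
The plan is to reduce the collapsingness of $\sysC$ to that of $\sysC_{\mathrm{ideal}}$ by a standard hybrid argument, using indifferentiability to switch between the real and ideal worlds. First I would set up the hybrid structure: given a valid collapsingness adversary $(\advA^{\sysC,f},\advB^{\sysC,f})$ against $\sysC$ making $q_{\sysC}$ queries to the construction interface and $q_f$ queries to the internal-function interface, I consider the distinguisher $\advD$ (in the sense of Definition~\ref{def:indifferentiability}) that internally runs the whole collapsingness experiment: it runs $\advA$ (relaying its queries to the private interface $\sysC^{\priv}$ and the public interface $\sysC^{\pub}=f$), then flips a fair coin to decide whether to apply $\meas(M)$ or not, then runs $\advB$, and finally outputs $1$ if and only if $\advB$'s output bit $b$ equals the coin. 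The key observation is that the advantage of this distinguisher in telling apart $[\sysC^{\priv}[\sysC^{\pub}],\sysC^{\pub}]$ from $[\sysC_{\mathrm{ideal}}^{\priv},\Sim[\sysC_{\mathrm{ideal}}^{\pub}]]$ is exactly $\tfrac12$ times the difference of the two collapsingness games (real world versus a world where all queries to $f$ are answered by the simulator and all queries to $\sysC$ are answered by $\sysC_{\mathrm{ideal}}$), up to the usual rewriting $\abs{\PR[b=1:\Collapse 1]-\PR[b=1:\Collapse 2]}=2\abs{\PR[\mathrm{guess}]-\tfrac12}$.

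The second step is to count queries carefully. The distinguisher $\advD$ makes $q_{\sysC}$ queries to the construction interface and $q_f$ queries to the public interface, so by $(q,\eps_I(q))$-indifferentiability with $q=q_{\sysC}+q_f$ we pay $\eps_I(q_{\sysC}+q_f)$ for this switch. This factor of $2$ from the coin-guessing reformulation accounts for the leading $2\eps_I(q_{\sysC}+q_f)$ term. In the ideal world, the composed object seen by $(\advA,\advB)$ is a genuine collapsingness experiment against $\sysC_{\mathrm{ideal}}$: the private interface is $\sysC_{\mathrm{ideal}}$ itself, and the public interface is $\Sim$ which, on each of the $q_f$ queries, makes at most $\alpha$ queries to $\sysC_{\mathrm{ideal}}^{\pub}=\sysC_{\mathrm{ideal}}^{\priv}$ (since for the random oracle the interfaces coincide). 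Hence the resulting adversary against $\sysC_{\mathrm{ideal}}$ makes at most $q_{\sysC}+\alpha q_f$ queries, so its collapsingness advantage is bounded by $\eps_{\mathrm{coll}}(q_{\sysC}+\alpha q_f)$. Collecting the two contributions gives the claimed bound $\eps_{\mathrm{coll}}(q_{\sysC},q_f)=2\eps_I(q_{\sysC}+q_f)+\eps_{\mathrm{coll}}(q_{\sysC}+\alpha q_f)$.

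The main obstacle I expect is the \emph{validity} bookkeeping: the reduction must check that the adversary obtained in the ideal world is still a \emph{valid} collapsingness adversary, i.e. that measuring $M$ yields $m$ with $\sysC_{\mathrm{ideal}}(m)=h$ with certainty. In the real world validity holds by assumption for $(\advA,\advB)$ relative to $\sysC$; but after the indifferentiability switch, $\advA$'s output $(S,M,h)$ need no longer satisfy $\sysC_{\mathrm{ideal}}(m)=h$ with probability exactly $1$. The clean way around this is to not demand perfect validity of the intermediate adversary but instead to absorb the failure probability into the indifferentiability error — i.e. to note that the event ``$\advA$ outputs a non-valid triple in the ideal world'' is itself detectable by a distinguisher (it queries $\sysC^{\priv}$ on $m$ once more and compares to $h$), so its probability is at most $\eps_I$ plus $0$ (its probability in the real world). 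One then applies collapsingness of $\sysC_{\mathrm{ideal}}$ only conditioned on validity, and the conditioning costs at most another additive $\eps_I$-type term, which can be folded into the constant; alternatively one states the theorem for statistically-valid adversaries. A second, more minor subtlety is making sure the distinguisher is efficient: it runs $\advA$, $\advB$, and one measurement, all polynomial-time, and $\Sim$ is efficient by the indifferentiability hypothesis, so this is immediate. I would write the proof by first stating the distinguisher explicitly, then doing the two-hybrid telescoping with the query counts, and finally handling validity as described.
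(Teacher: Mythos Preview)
Your proposal is correct and follows essentially the same approach as the paper: both build an indifferentiability distinguisher that internally runs the collapsingness experiment with a fair coin deciding whether to apply $\meas(M)$, then use the guessing-game rewriting to get the factor $2$ on $\eps_I$, and observe that in the ideal world the composed adversary makes at most $q_{\sysC}+\alpha q_f$ queries to $\sysC_{\mathrm{ideal}}$. The one difference is that you flag and address the validity issue (whether the ideal-world adversary still satisfies $\sysC_{\mathrm{ideal}}(m)=h$ with probability $1$), whereas the paper simply asserts that the ideal-world object ``can be seen as a collapsingness distinguisher for $\sysC_{\mathrm{ideal}}$'' without discussing this point.
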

\begin{proof}
	Given a collapsingness distinguisher $\tilde{\advD}$ against $\sysC$ with advantage $\eps\ge \eps_{\mathrm{coll}}(q_{\sysC}+\alpha q_f)$ that makes $q_{\sysC}$ queries to $\sysC$ and $q_f$ queries to $f$, we build an indifferentiability distinguisher $\advD$ as follows. Chose $b\in\{0,1\}$ at random. Running $\tilde{\advD}$, if $b=0$ simulate $\Collapse{1}$, if $b=1$ simulate $\Collapse{2}$. Output 1 if $\tilde{\advD}$ outputs $b$, and 0 else.
	
	In the real world, we have that 
	\begin{align*}
		\PR[1\leftarrow\advD:\mathbf{Real}]&=\frac{1}{2}\left(\PR[0\leftarrow \tilde{\advD}^{\sysC,f}:\Collapse{1}]+\PR[1\leftarrow \tilde{\advD}^{\sysC,f}:\Collapse{2}]\right)\\
		&=\frac 1 2 +\frac 1 2 \left(\PR[1\leftarrow \tilde{\advD}^{\sysC,f}:\Collapse{2}]-\PR[1\leftarrow \tilde{\advD}^{\sysC,f}:\Collapse{1}]\right).
	\end{align*}
	In the ideal world, the distinguisher together with the simulator $S$ can be seen as a collapsingness distinguisher for $\sysC_{\mathrm{ideal}}$. Therefore we get
	\begin{align*}
		\PR[1\leftarrow \advD:\mathbf{Ideal}]&=\frac 1 2 +\frac 1 2 \left(\PR[1\leftarrow \tilde{\advD}^{\sysC_{\mathrm{ideal}},\Sim}:\Collapse{2}]-\PR[1\leftarrow \tilde{\advD}^{\sysC_{\mathrm{ideal}},\Sim}:\Collapse{1}]\right)
	\end{align*}
	and hence
	\begin{align*}
		&\Big|\PR[1\leftarrow \advD:\mathbf{Real}]-\PR[1\leftarrow \advD:\mathbf{Ideal}]\Big|\nonumber\\
		&=\frac 1 2 \Big|\PR[1\leftarrow \tilde{\advD}^{\sysC,f}:\Collapse{2}]-\PR[1\leftarrow \tilde{\advD}^{\sysC,f}:\Collapse{1}]\\
		&-\PR[1\leftarrow \tilde{\advD}^{\sysC_{\mathrm{ideal}},\Sim}:\Collapse{2}]+\PR[1\leftarrow \tilde{\advD}^{\sysC_{\mathrm{ideal}},\Sim}:\Collapse{1}]\Big|\\
		&\ge \frac 1 2\left(\eps- \eps_{\mathrm{coll}}(q_{\sysC}+\alpha q_f)\right).
	\end{align*}
\end{proof}

\end{appendix}

\newpage
\tocsectionstar{Symbol Index}\label{sec:notation}
\begin{center}
	\begin{longtable}{lp{3.3in}l}
		\InputIfFileExists{\jobname.sdx}{}{}
	\end{longtable}
\end{center}

\end{document}